\newlength{\fighskip} \fighskip=2pt
\newlength{\figvskip} \figvskip=3pt
\newcommand*{\figbox}[2]{{\def\figscale{#1}\def\arraystretch{0.8}\arraycolsep=0pt\begin{array}{c}\vbox{\vskip\figscale\figvskip\hbox{\hskip\figscale\fighskip\includegraphics[scale=\figscale]{#2}}}\end{array}}}
\newtheorem{lemma}{Lemma}
\newtheorem{theorem}{Theorem}
\renewcommand{\phi}{\varphi}
\renewcommand{\hat}{\widehat}
\newcommand{\e}{\mathfrak{e}}
\newcommand{\cl}{\mathfrak{C}}
\newcommand{\cub}{\mathcal{L}_\mathrm{cub}^*}
\newcommand{\ket}[1]{|#1\rangle}
\newcommand{\fix}[1]{#1+\hat#1} 
\newcommand{\N}{\mathcal{N}}
\newcommand{\M}{\mathcal{M}}
\newcommand{\R}{\mathcal{R}}
\DeclareMathOperator*{\fail}{fail}
\DeclareMathOperator*{\supp}{supp}
\DeclareMathOperator*{\argmin}{arg\,min}
\DeclareMathOperator*{\im}{im}
\newcommand{\psus}{$p_{\mathrm{STC}} \approx 1.045\%$}
\begin{document}

\title{Single-shot quantum error correction \\with the three-dimensional subsystem toric code}
\author{Aleksander Kubica}
\affiliation{Perimeter Institute for Theoretical Physics, Waterloo, ON N2L 2Y5, Canada}
\affiliation{Institute for Quantum Computing, University of Waterloo, Waterloo, ON N2L 3G1, Canada}
\affiliation{AWS Center for Quantum Computing, Pasadena, CA 91125, USA}
\affiliation{California Institute of Technology, Pasadena, CA 91125, USA}
\author{Michael Vasmer}
\affiliation{Perimeter Institute for Theoretical Physics, Waterloo, ON N2L 2Y5, Canada}
\affiliation{Institute for Quantum Computing, University of Waterloo, Waterloo, ON N2L 3G1, Canada}

\begin{abstract}
We introduce a new topological quantum code, the three-dimensional subsystem toric code (3D~STC), which is a generalization of the stabilizer toric code.
The 3D~STC can be realized by measuring geometrically-local parity checks of weight at most three on the cubic lattice with open boundary conditions.
We prove that single-shot quantum error correction (QEC) is possible with the 3D~STC, i.e., one round of local parity-check measurements suffices to perform reliable QEC even in the presence of measurement errors.
We also propose an efficient single-shot QEC strategy for the 3D~STC and investigate its performance.
In particular, we numerically estimate the resulting storage threshold against independent bit-flip, phase-flip and measurement errors to be \psus.
Such a high threshold together with local parity-check measurements of small weight make the 3D~STC particularly appealing for realizing fault-tolerant quantum computing.
\end{abstract}

\maketitle

Building reliable and scalable universal quantum computers is a heroic endeavor~\cite{Nigg2014,Barends2014,Corcoles2015,Ofek2016,Arute2019}, which requires the implementation of fault-tolerant protocols~\cite{Shor1996,Steane1997,Knill2005,Knill2005a}.
Even the substantially simpler task of storing quantum information is very challenging and requires the usage of quantum error correction (QEC) techniques to detect and eliminate faults.
Due to unreliable physical components, QEC is itself a noisy process, which, if carried out haphazardly, can destroy encoded logical information.
Nevertheless, QEC together with fault-tolerant gadgets to implement logical gates on the encoded information allow one, in principle, to perform arbitrary long quantum computations provided the noise affecting the system is below some constant threshold value~\cite{Aharonov1997,Kitaev1997,Knill1998,Aliferis2005}.
However, questions about the practicality, noise tolerance and resource requirements for various realizations of universal quantum computation still remain topics of active research~\cite{Fowler2012,Karzig2017,Litinski2019,Chao2020,Chamberland2020arch,Guillaud2021,Beverland2021}.

Topological quantum error-correcting codes~\cite{Dennis2002,Bombin2013book} provide a realistic and resource-efficient approach to building scalable quantum computers.
Codes in this class have desirable features, such as efficient classical decoding algorithms with high storage thresholds and fault-tolerant logical gates with low overhead.
Importantly, topological quantum codes can be realized by placing qubits on geometrical lattices and measuring only geometrically-local parity checks.
We emphasize that locality is critical not only from the perspective of fault tolerance, but also from the fact that the physical interactions that we can engineer have a local nature.
To experimentally realize topological quantum codes, we are restricted to at most three spatial dimensions (unless we allow non-local connections between qubits, which can effectively boost the dimensionality of the system).

The archetypal topological quantum code, the toric code~\cite{Kitaev2003,Bravyi1998}, can be engineered in two spatial dimensions.
There has been a lot of effort devoted to realizing the toric code, both from the theory as well as the experimental side.
In the presence of measurement errors, one way to perform reliable QEC with the toric code is to use a simple fault-tolerant method to extract the syndrome---it suffices to repeat parity-check measurements to gain confidence in their outcomes~\cite{Dennis2002,Fowler2012,Fowler2012proof}
Unfortunately, the number of measurement rounds necessarily grows with the code size, and thus the penalty one pays is the increased time overhead and the need to store measurement outcomes.
Subsequently, QEC extends over time, the system effectively becomes (2+1)D, and the overall fault-tolerant protocols become more complicated.

\begin{figure*}[ht!]
\centering
\includegraphics[width=.31\textwidth]{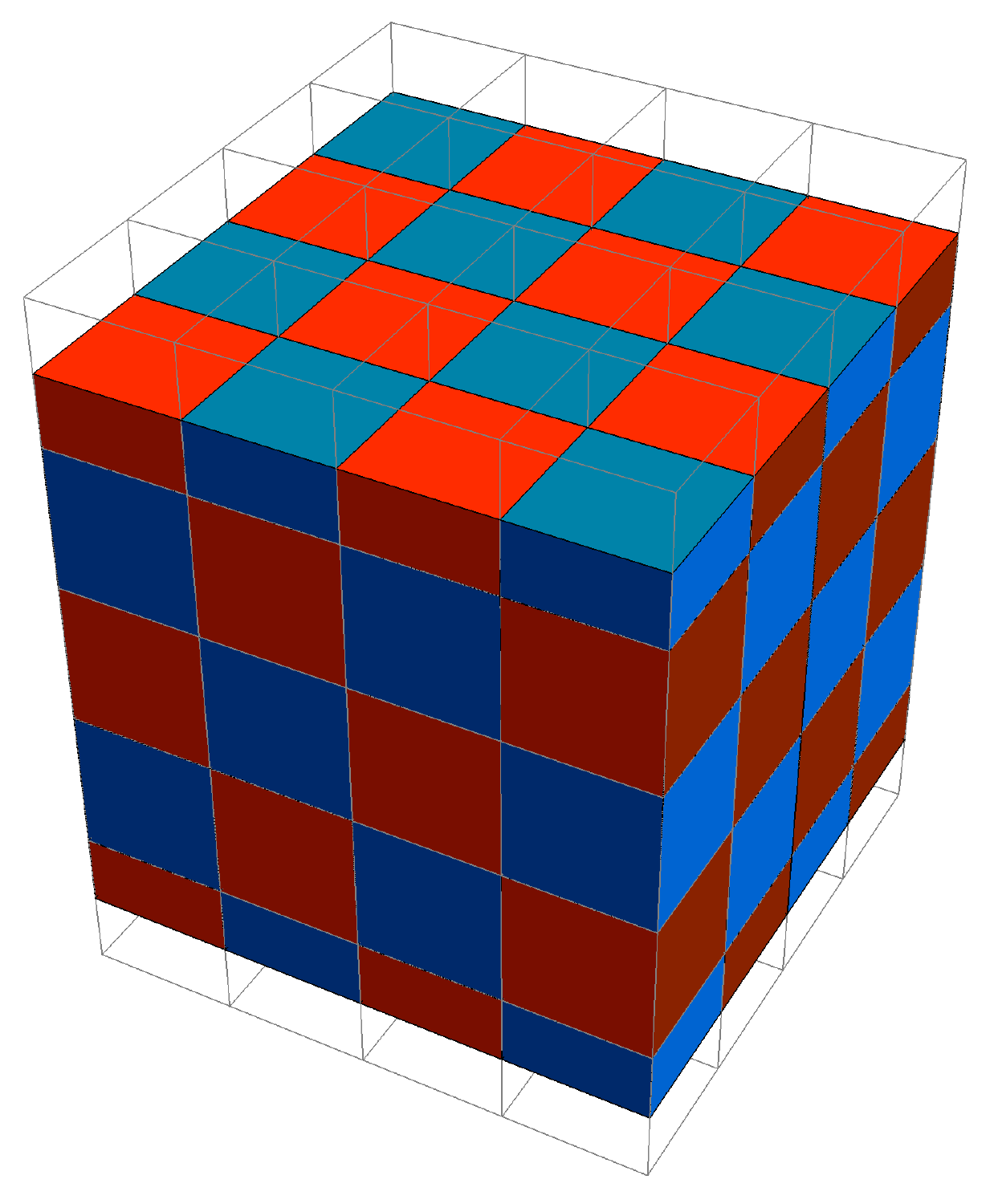}
\hspace*{-.28\textwidth}(a)\hspace*{.26\textwidth}
\includegraphics[width=.31\textwidth]{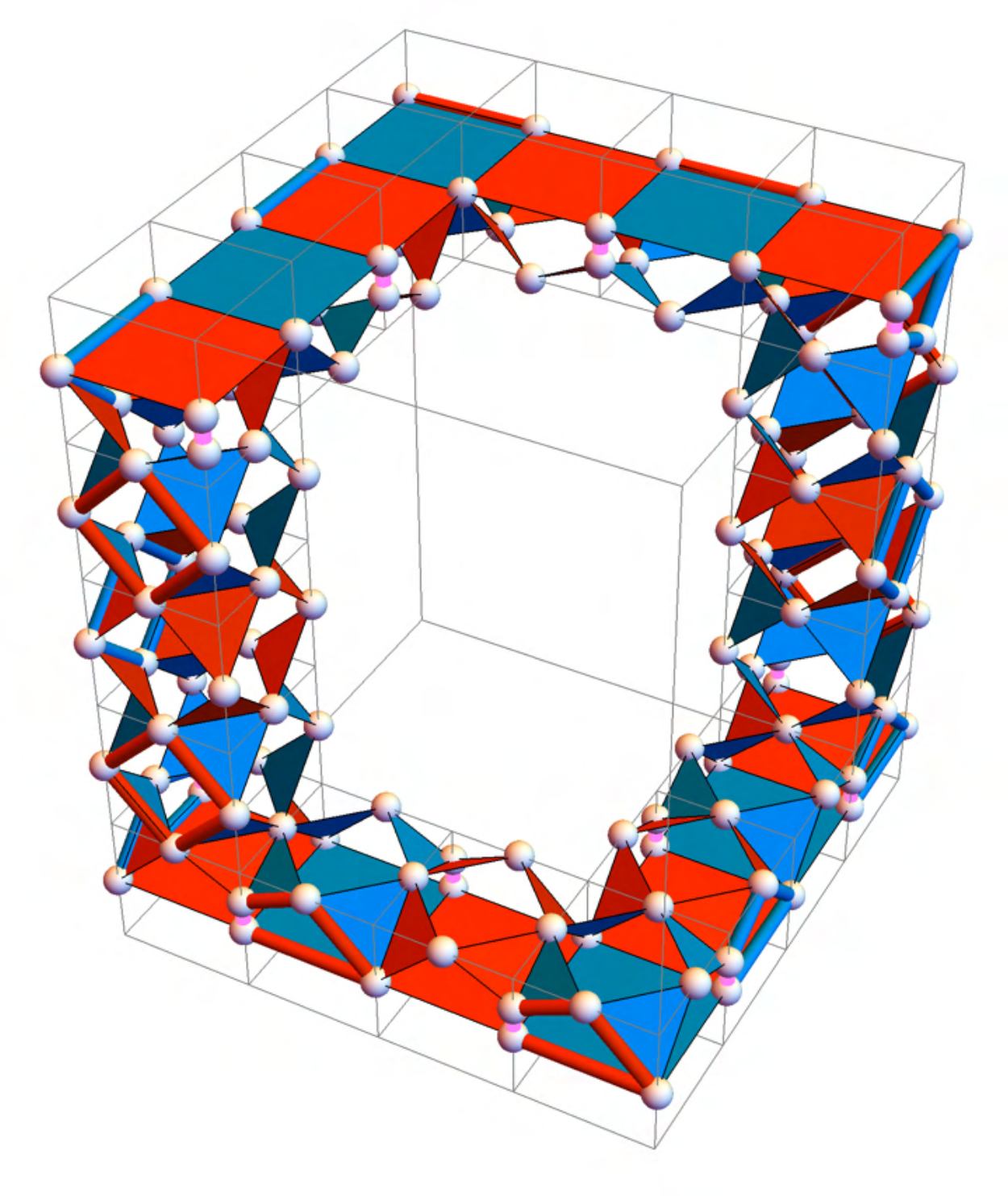}
\hspace*{-.28\textwidth}(b)\hspace*{.26\textwidth}
\includegraphics[width=.31\textwidth]{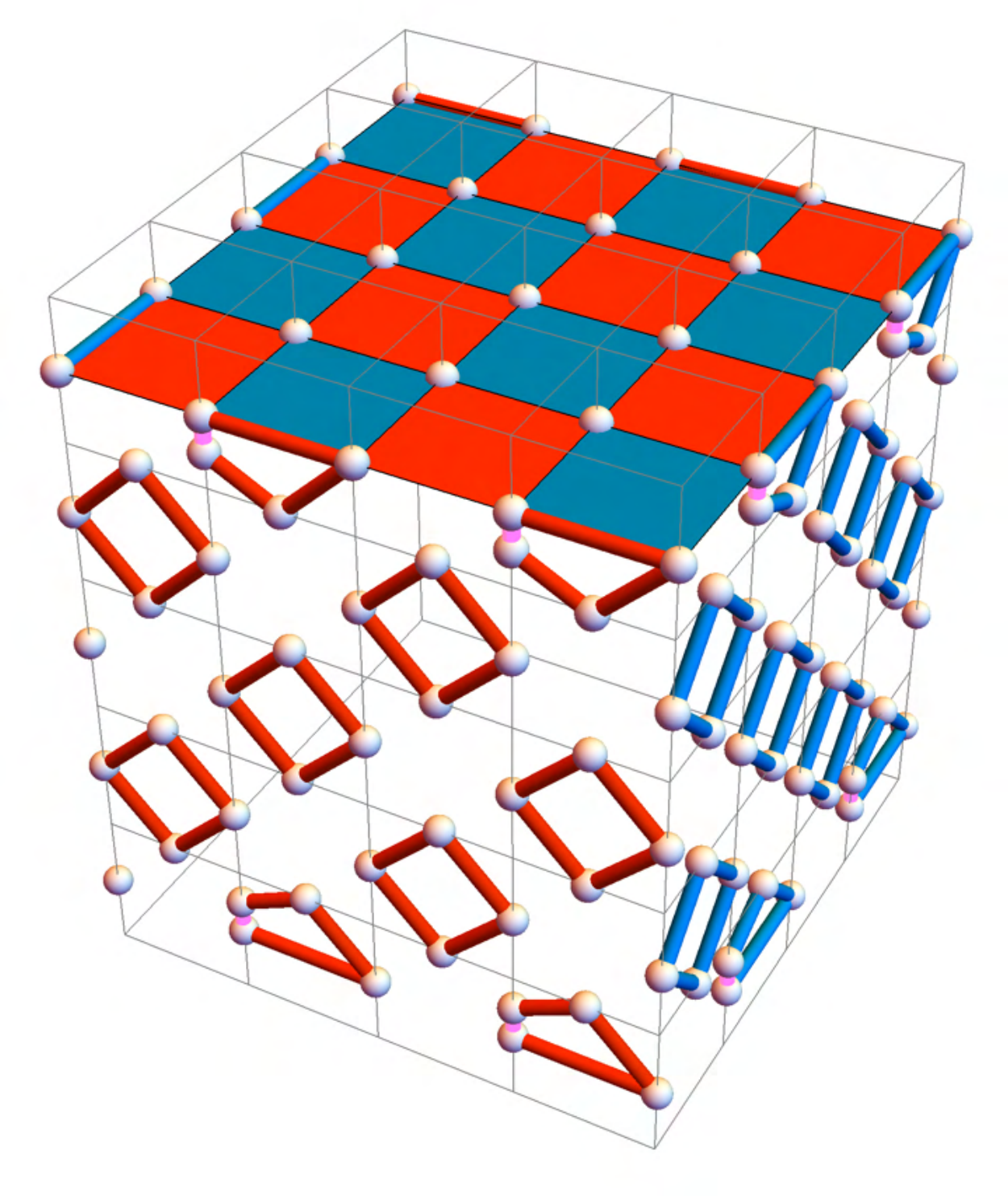}
\hspace*{-.285\textwidth}(c)\hspace*{.27\textwidth}
\caption{
(a) The cubic lattice with open boundary conditions and the linear size $L = 4$.
Its cubic volumes can be colored in red and blue in a checkerboard pattern.
(b)-(c) We define the 3D~STC on the lattice $\cub$ (described in Sec.~\ref{sec_glance}) according to Eqs.~\eqref{eq_3DSTC_gauge}~and~\eqref{eq_3DSTC_gauge_top}.
We illustrate the support of some of the $X$- and $Z$-type gauge operators (red and blue shapes).
White dots represent qubits, and pink edges near the top and bottom boundaries represent the weight-two $X$- and $Z$-type gauge operators.
}
\label{fig_3DSTC}
\end{figure*}

Recently, a lot of effort has been devoted into developing QEC techniques that do not require repeated rounds of parity-check measurements in the presence of measurement errors~\cite{Fujiwara2014,Campbell2019,Ashikhmin2020,Delfosse2020,Fawzi2021}.
Rather, these techniques rely on a careful choice which parity checks to measure, as well as on redundancies among the measurement outcomes due to the choice of an overcomplete set of parity checks.
Unfortunately, if these techniques are applied to the 2D toric code, then the geometric locality of the system is lost, as we would need to measure some high-weight non-local parity checks, which is a serious limitation.
Thus, an intriguing question is whether in the presence of measurement errors one can perform reliable QEC with the toric code with only local parity-check measurements, without repeated rounds of measurements.

In this article we propose a radically different realization of the toric code capable of handling measurement errors, which relies on single-shot QEC discovered by Bomb\'in~\cite{Bombin2015}.
Intuitively, single-shot QEC guarantees that one can perform reliable QEC without repeating (geometrically-local) parity-check measurements.
In order to achieve this, we introduce a subsystem version of the toric code, the three-dimensional subsystem toric code (3D~STC).
Remarkably, the 3D~STC is a topological quantum code that can be realized on the cubic lattice with open boundary conditions and low-weight parity-check measurements; see Fig.~\ref{fig_3DSTC}.
Due to its simplicity, the 3D~STC can be viewed as the quintessential topological code demonstrating single-shot QEC.

Our work comprises three parts.
First, in Sec.~\ref{sec_model} we introduce the 3D~STC.
Then, in Sec.~\ref{sec_ss_decoding} we develop a single-shot decoding algorithm for the 3D~STC and numerically estimate its performance.
Lastly, in Sec.~\ref{sec_proof} we prove that single-shot QEC is indeed possible with the 3D~STC. 
In what follows we briefly describe the context of each of our main contributions.

\vspace*{-2pt}
\subsection*{Model}
\vspace*{-2pt}

Topological quantum error-correcting codes are alluring not only from the perspective of QEC, but also from the perspective of quantum many-body physics.
Even the simplest topological quantum codes, which belong to the class of stabilizer codes~\cite{Gottesman1996} or their slight generalization, subsystem codes~\cite{Poulin2005}, illustrate a variety of physical concepts.
For instance, the the ground state of the 2D toric code is an epitome of a topologically-ordered state.
In three dimensions, Chamon's model~\cite{Chamon2005} and the cubic code~\cite{Haah2011} provide concrete realizations of exotic quantum phases of matter with fractal-like excitations.
In four dimensions, the 4D toric code gives rise to a local commuting Hamiltonian that exhibits the phenomenon of self-correction~\cite{Alicki2010,Brown2014}.
Analogously, the 3D~STC, which we introduce, demonstrates single-shot QEC.

The 3D~STC is similar to another topological quantum code, the 3D gauge color code~\cite{Bombin2013}.
The 3D~STC and 3D gauge color code constitute subsystem versions of the stabilizer toric code and the stabilizer color code~\cite{Bombin2006,Bombin2007,Kubicathesis}, respectively.
Both codes facilitate single-shot QEC, as well as a fault-tolerant universal gate set without magic state-distillation~\cite{Bombin2013,Kubica2015a,Iverson2021}.
However, the 3D~STC is more appealing due to its simplicity---it can be realized on the cubic lattice with open boundary conditions by measuring geometrically-local parity checks of weight at most three.
This should be contrasted with the known realizations of the 3D gauge color code, which require parity checks of weight at least six.
Subsequently, the 3D~STC provides significantly better protection from errors than the 3D gauge color code~\cite{Brown2015,Kubica2017}.

Despite the close connection between the toric code and the color code in $d\geq 2$ dimensions~\cite{Kubica2015}, a genuine subsystem generalization of the toric code was not known.
The 3D~STC provides such a generalization.
Although in the main text we focus on the three-dimensional case, our construction is more general and works in any dimension $d\geq 3$; see Appendix~\ref{app_ddim} for details.
Lastly, we remark that in the special case of two dimensions there exist realizations of the toric code as a subsystem code~\cite{Bravyi2013sub,Higgott2020}.
However, by applying constant-depth circuits composed of geometrically-local gates one can remove the gauge qubits and effectively map those models to the stabilizer toric code.
This, however, is not possible for the 3D~STC, and therefore the 3D~STC is the first bona fide subsystem version of the toric code.

\vspace*{-2pt}
\subsection*{Single-shot decoding}
\vspace*{-2pt}

Active QEC comprises the detection and correction of errors.
For stabilizer and subsystem codes, we first measure parity checks corresponding to some stabilizer and gauge operators, respectively.
Then, using the obtained classical information we infer the stabilizer syndrome, i.e., the set of all stabilizers returning $-1$ measurement outcome.
Note that in order to reliably infer the stabilizer syndrome in the presence of measurement errors one may have to repeat parity-check measurements, which in turn results in a significant time overhead, as exemplified by the 2D toric code.
Lastly, we use classical decoding algorithms and find an appropriate recovery for the given stabilizer syndrome.

Performing optimal QEC for generic stabilizer codes is a computationally hard task even in the absence of measurement errors~\cite{Iyer2015}.
However, for topological quantum codes there exist various decoding algorithms with good performance, many of which rely on solving the minimum-weight perfect-matching (MWPM) problem~\cite{Dennis2002,Delfosse2014,Nickerson2017,Kubica2019}.
The MWPM problem, roughly speaking, is the task of pairing some subset of the vertices of a given graph, which, importantly, can be solved efficiently~\cite{Edmonds1965}.

An efficient QEC strategy for the 3D~STC, which we propose, requires only one round of parity-check measurements and works reliably even in the presence of measurement errors.
Our QEC strategy, which we name the single-shot MWPM decoder, consists of two steps: (i) syndrome estimation and (ii) ideal MWPM decoding.
Remarkably, both steps can be reduced to the MWPM problem, which should be contrasted with alternative approaches to single-shot QEC~\cite{Brown2015,Duivenvoorden2017,Breuckmann2020,Quintavalle2020,Grospellier2021}.
We numerically benchmark the performance of the single-shot MWPM decoder against bit-flip and phase-flip noise in the presence of measurement errors and estimate the storage threshold to be \psus.
Since the parity checks are of weight at most three, we expect that for the circuit-level noise the storage threshold will not be substantially reduced and the overall performance of the 3D~STC will be on par with the toric code on the square lattice.

\vspace*{-2pt}
\subsection*{Proof of single-shot QEC}
\vspace*{-2pt}

Local operations play a central role in fault-tolerant quantum computation, as they preserve the local structure of noise, and thus are trivially fault-tolerant.
However, strictly local operations, such as transversal gates or cellular-automata decoders, are limited---the computational power of the former is restricted~\cite{Eastin2009,Zeng2011,Bravyi2013,Pastawski2014,Beverland2014,Jochym-OConnor2018,Webster2020,Faist2019,Woods2020,Kubica2020}, whereas the latter require that the syndrome has some underlying structure~\cite{Ahn2004,Breuckmann2017,Kubica2018toom,Vasmer2020}.

One way to avoid the aforementioned limitations is to consider quantum-local operations~\cite{Bombin2015}, i.e., local operations that depend on classical information stored only for a limited time.
Such processes are physically-motivated, as quantum operations are typically constrained by geometrically-local interactions, whereas classical information can be processed globally in a reliable way.
Examples of quantum-local operations include the procedure of gauge fixing~\cite{Paetznick2013,Anderson2014} and the single-shot MWPM decoder, which allow for, respectively, a fault-tolerant universal gate set without magic state distillation and single-shot QEC.
Unfortunately, quantum-local operations are not a priori fault-tolerant, as they are not guaranteed to preserve the local structure of noise.

In this work, we prove that the single-shot MWPM decoder for the 3D~STC is indeed fault-tolerant.
In other words, performing repeated rounds of error correction in the presence of measurement errors will not lead to the uncontrollable accumulation of errors and the logical information encoded in the 3D~STC will be protected for a long time.
This, in turn, rigorously establishes that the 3D~STC allows for QEC in a single-shot manner, which drastically reduces the time overhead associated with QEC.

\section{Model}
\label{sec_model}

We start this section by presenting a simple and concrete realization of the 3D~STC on the lattice $\cub$, which is based on the cubic lattice with open boundary conditions.
The resulting 3D~STC has one logical qubit and code distance proportional to the linear size of $\cub$.
Then, we discuss a systematic construction of the 3D~STC and show that that for any orientable closed 3-manifold it has zero logical qubits\footnote{Note that the 3D gauge color code behaves alike, i.e., for any orientable closed 3-manifold it has zero logical qubits.}.
Lastly, we explain how the 3D~STC can be viewed as a generalization of the 3D stabilizer toric code.

\subsection{First glance at the model}
\label{sec_glance}

A subsystem code is a generalization of a stabilizer code.
Intuitively, a subsystem code is like a stabilizer code, except we only encode quantum information into a subset of the qubits in the stabilizer subspace.
We refer to the encoded qubits in this subset as the logical qubits; the remaining qubits are called the gauge qubits.
A subsystem code is specified by its gauge group $\mathcal{G}$, which is a subgroup of the Pauli group $\mathcal{P}$ that may contain $-I$.
Let $\mathcal{Z}(\mathcal{G})$ denote the centralizer of $\mathcal{G}$ in the Pauli group $\mathcal{P}$, i.e., all the elements in $\mathcal{P}$ that commute with every element in $\mathcal{G}$.
Ignoring the phases, we refer to the center of the gauge group $\mathcal{G}$ as the stabilizer group $\mathcal{S}$, i.e.,
$\mathcal{S} = (\mathcal{Z}(\mathcal{G}) \cap \mathcal G )/ \langle i \rangle$.
Whenever the gauge group $\mathcal{G}$ does not contain $-I$, it can be viewed as the stabilizer group defining a stabilizer code.
Lastly, we say that a stabilizer or a subsystem code is a CSS code~\cite{Calderbank1996,Steane1996CSS}
if its generators can be chosen as either Pauli $X$ or $Z$ operators.

The 3D~STC is a topological quantum code, and is also a CSS subsystem code.
We start with a simple and concrete realization of the 3D~STC on the lattice $\cub$, which is based on the cubic lattice with open boundary conditions; see Fig.~\ref{fig_3DSTC}(a).
The cubic volumes of the cubic lattice can be colored in red and blue in a checkerboard pattern.
In the bulk of the lattice, we place one qubit on every edge.
For every red and blue volume we introduce eight weight-three $X$- and $Z$-type gauge operators, respectively, as follows
\begin{equation}
\label{eq_3DSTC_gauge}
\figbox{.3}{equations/eq1X}  \longmapsto \figbox{.3}{equations/eq1XX},\quad\quad\quad
\figbox{.3}{equations/eq1Z}  \longmapsto \figbox{.3}{equations/eq1ZZ},
\end{equation}
where we depict the support of operators as red and blue triangles, respectively.
Then, weight-twelve $X$- and $Z$-type stabilizer operators, which we associate with red and blue volumes, can be formed in two different ways by multiplying gauge operators from Eq.~\eqref{eq_3DSTC_gauge}, namely
\begin{equation}
\label{eq_3DSTC_stab}
\figbox{.3}{equations/eq2X}  = \figbox{.3}{equations/eq2XX},\quad\quad\quad
\figbox{.3}{equations/eq2Z}  = \figbox{.3}{equations/eq2ZZ}.
\end{equation}
Near the top boundary of the lattice, we place one additional qubit on every other vertical edge and introduce seven $X$- and $Z$-type gauge operators for every red and blue volume as follows
\begin{equation}
\label{eq_3DSTC_gauge_top}
\figbox{.3}{equations/eq3X}  \longmapsto \figbox{.3}{equations/eq3XX},\quad\quad\quad
\figbox{.3}{equations/eq3Z}  \longmapsto \figbox{.3}{equations/eq3ZZ}.
\end{equation}
Similarly, weight-ten $X$- and $Z$-type stabilizer operators can be formed in two different ways by multiplying gauge operators from Eq.~\eqref{eq_3DSTC_gauge_top}, namely
\begin{equation}
\label{eq_3DSTC_stab_top}
\figbox{.3}{equations/eq4X}  = \figbox{.3}{equations/eq4XX},\quad\quad\quad
\figbox{.3}{equations/eq4Z}  = \figbox{.3}{equations/eq4ZZ}.
\end{equation}
Lastly, on every outward-facing side of a blue (respectively, red) volume along the front (right) boundary of the lattice we introduce four weight-two $X$-type ($Z$-type) gauge operators, which we can further multiply in two different ways to form a weight-four $X$-type ($Z$-type) stabilizer operator.
We remark that the bottom, rear and left boundaries are the same as the top, front and right boundaries, respectively, as the lattice is invariant under point reflection through its center.
We illustrate some of the gauge generators of the resulting 3D~STC on the lattice $\cub$ in Fig.~\ref{fig_3DSTC}(b)(c).
Throughout the article we use red and blue to depict the support of Pauli $X$ and $Z$ operators, respectively.

\begin{figure*}[ht!]
\centering
(a)\includegraphics[height=.18\textheight]{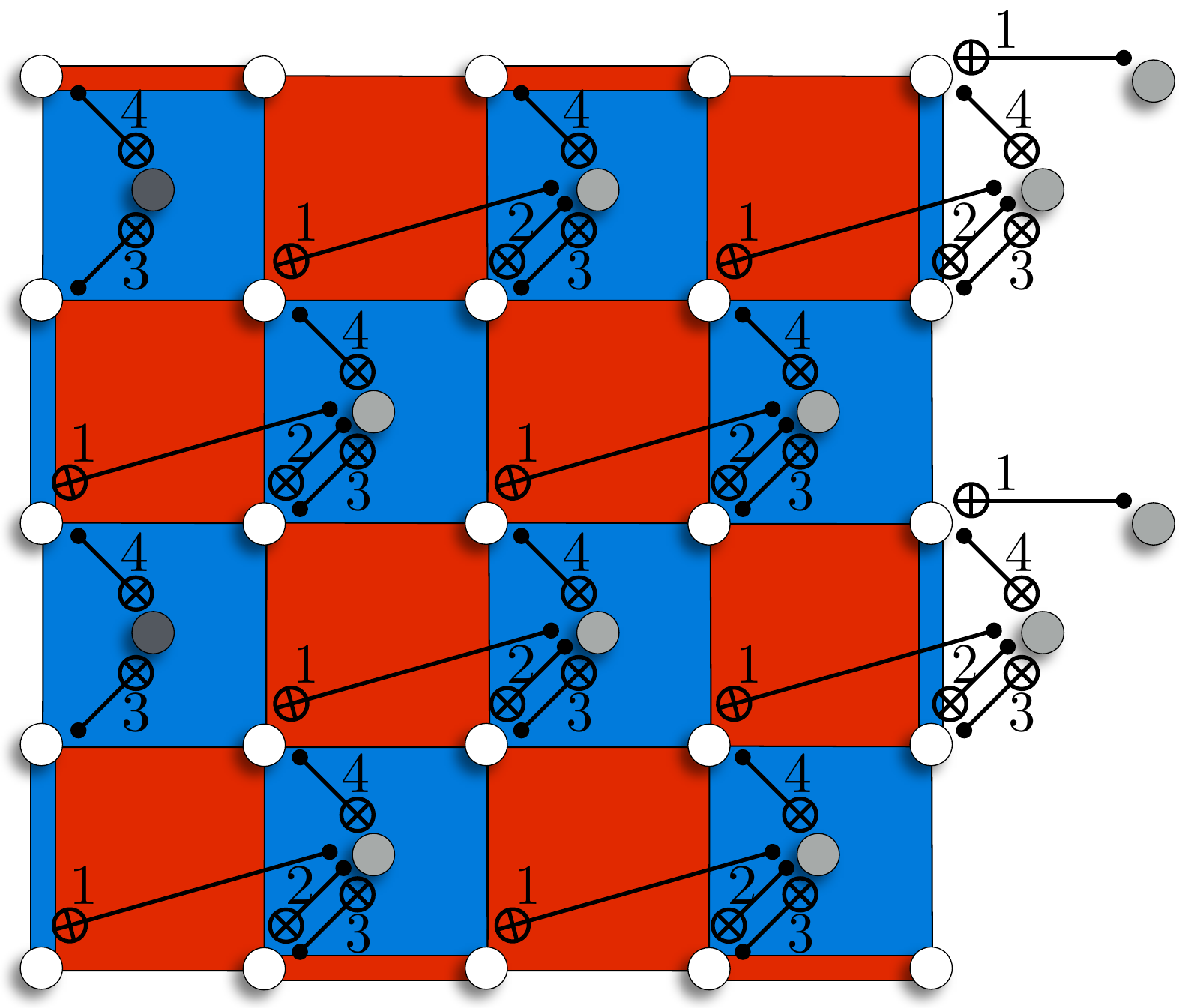}
\quad\quad
(b) \includegraphics[height=.18\textheight]{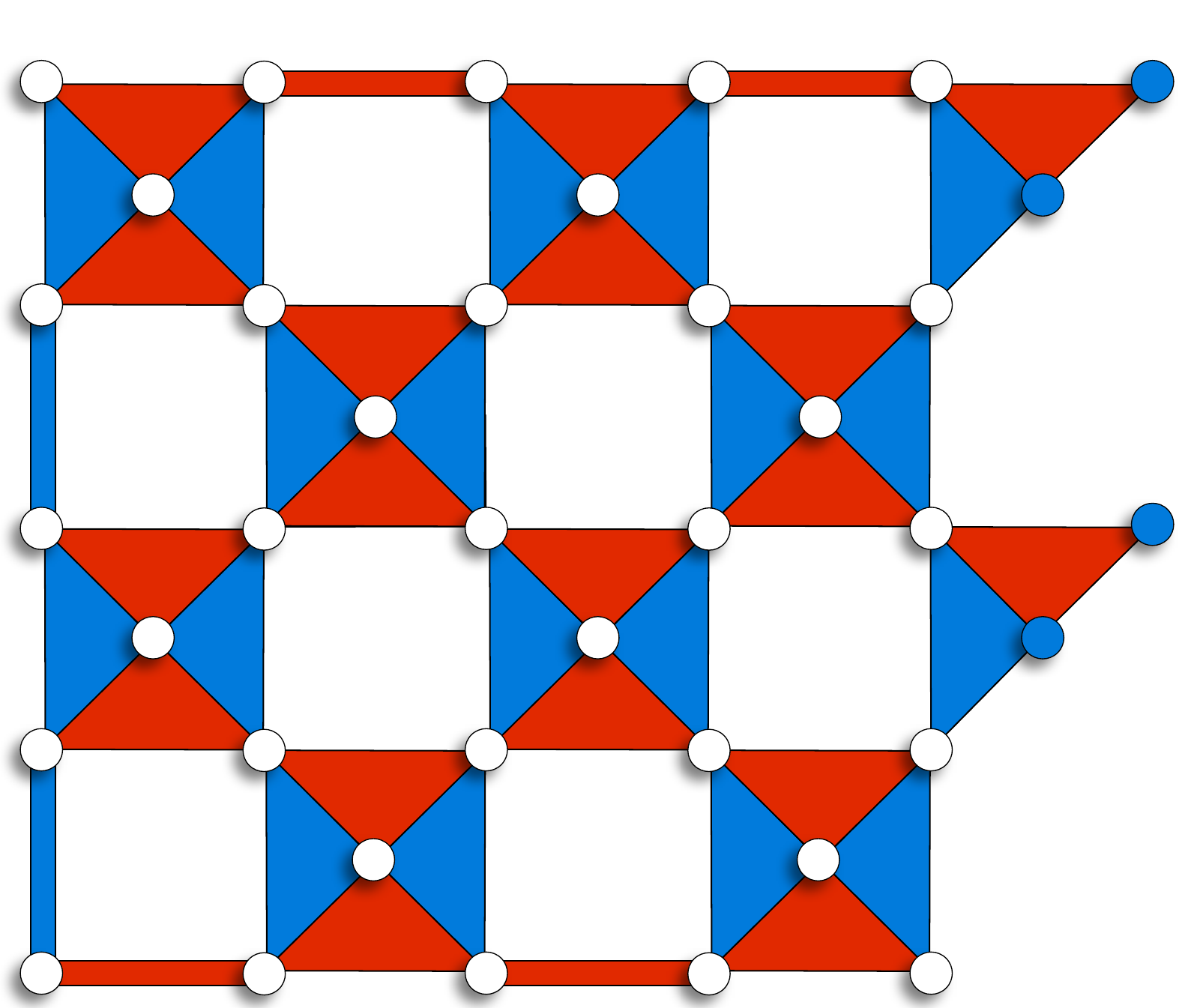}
\caption{
(a) To reduce the weight of weight-four gauge operators along the top and bottom boundaries of the lattice $\cub$ we introduce ancilla qubits $\mathcal A_S$, which are in the state $\ket{0}$ (dark gray dots), as well as gauge qubits $\mathcal A_G$ (light gray dots).
We then implement a unitary $U$ via a circuit comprising four rounds of CNOT gates.
The circuit has depth three as CNOT gates in rounds two and three can be combined.
(b) The choice of gauge generators of the subsystem code $U\mathcal G_\text{aug} U^\dag$ that are fully supported within the top and bottom boundaries.
Some qubits along the right side (blue dots) support single-qubit Pauli $Z$ operators.
}
\label{fig_2DSTC}
\end{figure*}

The bare and dressed logical Pauli operators of the subsystem code $\mathcal G$ are defined as the elements of $\mathcal{Z}(\mathcal{G})$ and $\mathcal{Z}(\mathcal{S})$, respectively.
Dressed logical operators, unlike bare logical operators, may change the state of gauge qubits, however no logical information is encoded into them.
Similarly to the logical operators of the 3D stabilizer toric code, bare and dressed logical Pauli operators of the 3D~STC can be expressed as 2D sheet-like and 1D string-like operators connecting opposite boundaries of the lattice $\cub$.
Moreover, bare logical Pauli $X$ and $Z$ operators can be fully supported within, respectively, the front or rear and left or right side boundaries.
If the linear size of the lattice $\cub$ is $L$, then the code parameters of the 3D~STC on $\cub$ are
\begin{equation}
\label{eq_code_params}
[\![ N = L^3+6L^2+5L+1, K = 1, D = L+1]\!],
\end{equation}
where $N$ and $K$ are the number of physical and logical qubits, and $D$ is the code distance (defined as the weight of the smallest dressed logical operator).
The number of logical qubits of the the 3D~STC defined on $\mathcal L^*_{\mathrm{cub}}$ can be derived from the general method for constructing the 3D~STC presented in Sec.~\ref{sec_octahedral}-\ref{sec_cubiclattice}.

\subsection{Reducing the weight of gauge operators}

The realization of the 3D~STC on the lattice $\cub$ is very simple, however the gauge group $\mathcal G$ of the 3D~STC cannot be generated by gauge operators of weight at most three.
Namely, along the top and bottom boundaries of $\cub$ there are gauge operators of weight four; see Fig.~\ref{fig_2DSTC}(c).
Ideally, we would like all gauge operators to be of weight at most three, as in such a case there would be no hook errors introduced in the noisy measurement circuits~\cite{Dennis2002}.
We reduce the weight of the gauge operators by using an idea from Ref.~\cite{Bravyi2013sub}, where a version of the 2D toric code with only weight-three operators was proposed.

\begin{figure*}[ht!]
\centering
\includegraphics[height=.18\textheight]{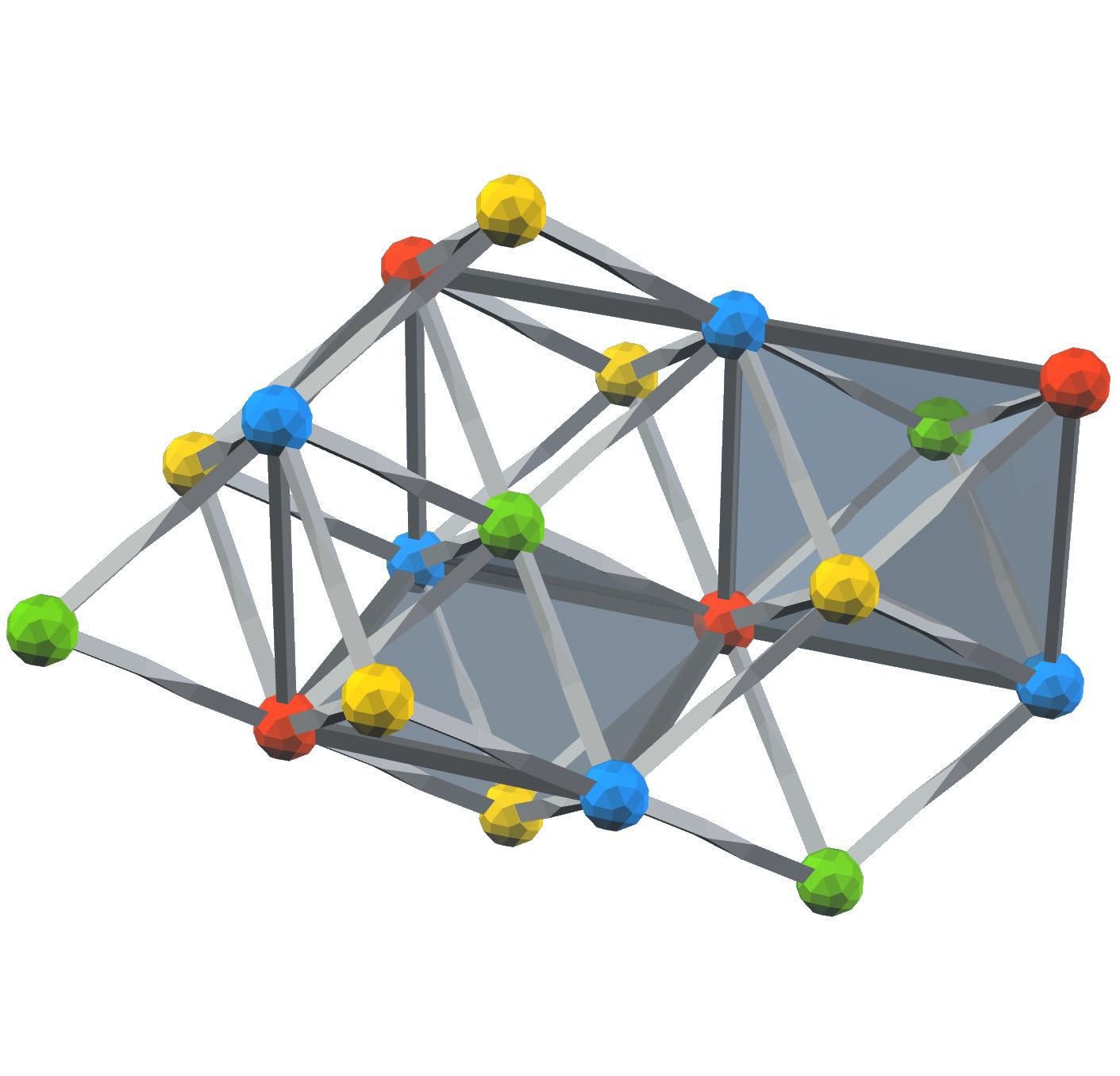}
\hspace*{-.27\textwidth}(a)\hspace*{.25\textwidth}
\quad\quad\quad
\includegraphics[height=.17\textheight]{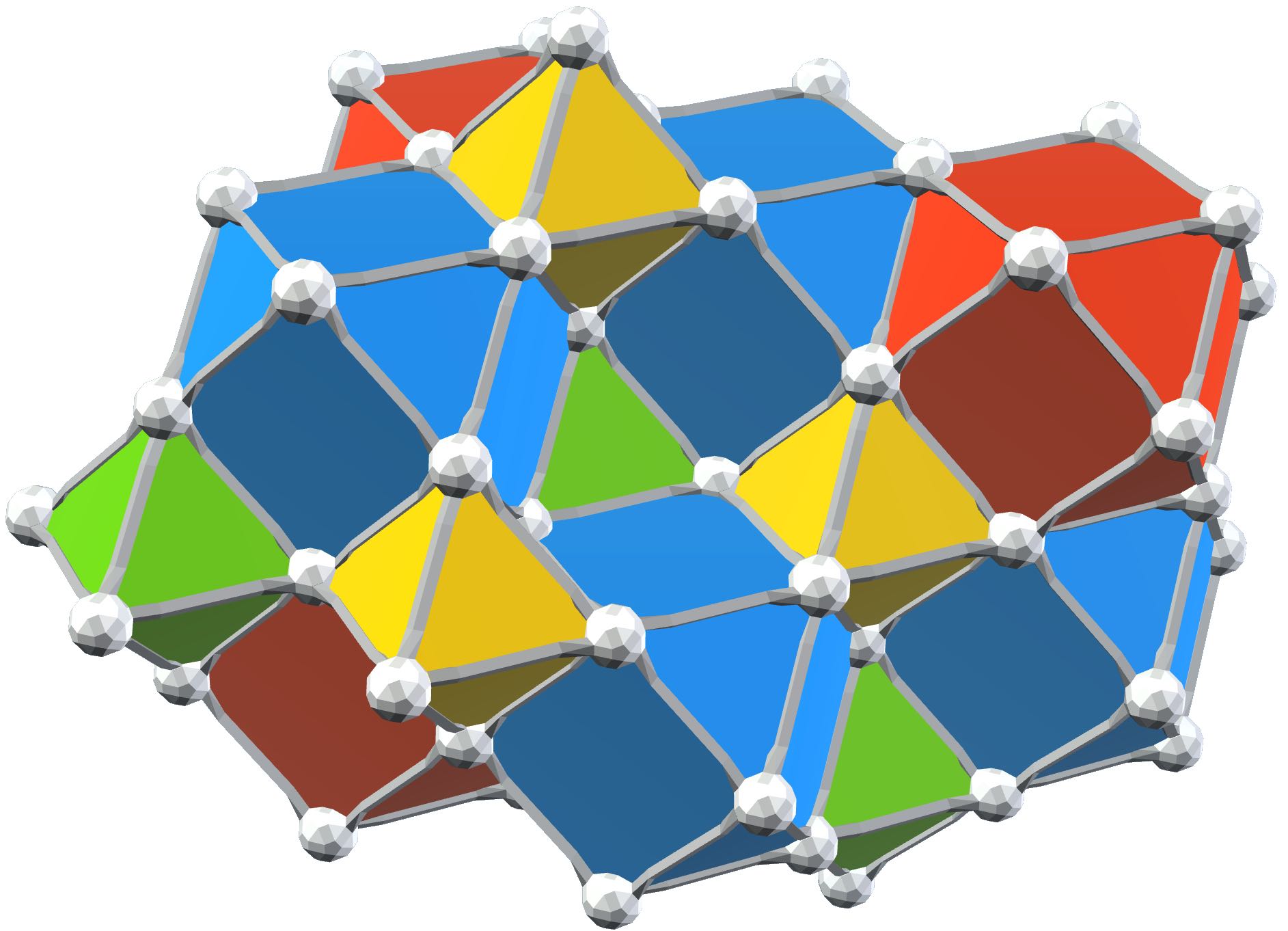}
\hspace*{-.33\textwidth}(b)\hspace*{.31\textwidth}
\caption{
(a) A colorable octahedral lattice $\mathcal{L}$.
We obtain $\mathcal L$ from the body-centered cubic lattice by first assigning colors $G$ or $Y$ to the vertices at the centers of the cubic volumes, and colors $R$ or $B$ to all other vertices, followed by filling in octahedra.
We only depict two octahedra (shaded in gray).
(b) The dual lattice $\mathcal{L}^*$ is obtained from $\mathcal L$ by replacing its volumes, faces, edges and vertices by vertices, edges, faces and volumes, respectively.
The lattice $\cub$ from Sec.~\ref{sec_glance} is a simplified version of $\mathcal{L}^*$, where the $G$ and $Y$ volumes are omitted.
Note that the 3D~STC can be defined on $\mathcal{L}^*$ by placing one qubit on every vertex, and introducing $X$- and $Z$-type gauge generators for every $RG$ or $RY$ face and $BG$ or $BY$ face, respectively.}
\label{fig_lattice}
\end{figure*}

To reduce weight of gauge operators at the top and bottom boundaries of the lattice $\cub$, we first place some unentangled ancilla qubits $\mathcal A_S \cup \mathcal A_G$ on those boundaries; see Fig.~\ref{fig_2DSTC}(a).
Every ancilla qubit $i\in\mathcal A_S$ is prepared in the state $\ket{0}$ and thus is stabilized by a single-qubit Pauli $Z$ operator; every other ancilla qubit $j\in\mathcal A_G$ is treated as a gauge qubit.
Hence, we arrive at an augmented subsystem code with the following gauge group
\begin{equation}
\mathcal G_\text{aug} = \langle G, Z_i, X_j, Z_j \mathrel{|} G\in \mathcal G, i \in \mathcal A_S, j \in \mathcal A_G \rangle.
\end{equation}
Then, we implement a unitary $U$ via a constant-depth circuit composed of four rounds of geometrically-local CNOT gates.
One can straightforwardly verify that the generators of the gauge group $U\mathcal G_\text{aug} U^\dag$ can be chosen in such a way that each of them has weight at most three.
In Fig.~\ref{fig_2DSTC}(b) we depict the choice of generators fully supported within the top and bottom boundaries.
We remark that the subsystem code $U\mathcal G_\text{aug} U^\dag$ and the 3D~STC on $\cub$ from the previous subsection are equivalent in the sense of a local unitary transformation and adding or removing ancilla qubits~\cite{Hastings2005,Bravyi2006,Chen2010,Bravyi2013sub}.

\subsection{Octahedral lattices and colorability}
\label{sec_octahedral}

Before we provide a general construction, which allows us to define the 3D~STC on lattices other than the cubic lattice, we need to define bipyramidal lattices and a notion of colorability.
A bipyramidal lattice $\mathcal{L}$ is obtained by gluing bipyramidal volumes\footnote{
An $n$-gonal bipyramid is a polyhedron formed by glueing together two $n$-gonal pyramids along their bases.}
along their proper faces of matching dimensions; this procedure is analogous to a construction of a homogeneous simplicial complex~\cite{Hatcher2002}.
If a bipyramidal lattice contains only octahedral volumes, we call it an octahedral lattice. 
We restrict our attention to bipyramidal lattices containing finitely many constituents.
Although the lattice $\mathcal{L}$ is a collection of vertices, edges, faces and volumes, it can be also viewed as a topological space.
In particular, $\mathcal{L}$ is a manifold, possibly with boundary.

We say that an octahedral volume is antipodally colored if the three pairs of its opposite vertices are two $R$ vertices, two $B$ vertices, and $G$ and $Y$ vertices. 
Then, we say that an octahedral lattice is colorable if we can assign four colors $R$, $B$, $G$ and $Y$ to its vertices in such a way that every octahedral volume is antipodally colored.
Note that other cells of the octahedral lattice inherit colors in a natural way.
For example, an edge between $R$ and $G$ vertices has color $RG$.
In Fig.~\ref{fig_lattice}, we illustrate an example of a colorable octahedral lattice $\mathcal{L}$, as well as its dual lattice $\mathcal L^*$.
We also remark that the lattice dual to the lattice $\cub$ from Sec.~\ref{sec_glance} forms a colorable octahedral lattice.

Let $\mathcal{L}$ be a three-dimensional lattice, which satisfies the following two conditions
\begin{itemize}
\item[(i)] $\mathcal{L}$ is an octahedral lattice,
\item[(ii)] $\mathcal{L}$ is colorable.
\end{itemize}
We use a notation $\mathcal{L}_i$ to denote the set of all $i$-dimensional cells of $\mathcal{L}$, and write $\mathcal{L}_i^C$ to further restrict our attention to cells of color $C$.
The following lemma proves useful in calculating the number of constituents of the lattice $\mathcal{L}$.
\begin{lemma}
\label{lemma_euler}
Let $\mathcal{L}$ be a colorable octahedral lattice without boundary.
Then, the following conditions on the number of constituents of $\mathcal{L}$ hold
\begin{enumerate}
\item[(i)] $|\mathcal{L}_2| = 4|\mathcal{L}_3|$,
\item[(ii)] $|\mathcal{L}_1| =
|\mathcal{L}^{RG}_1| +  |\mathcal{L}^{RY}_1| + |\mathcal{L}^{BG}_1| + |\mathcal{L}^{BY}_1| + |\mathcal{L}^{RB}_1|$,
\item[(iii)] $2|\mathcal{L}^{G}_0| + 2|\mathcal{L}^{Y}_0| - |\mathcal{L}_1| + |\mathcal{L}^{RB}_1| + 2|\mathcal{L}_3| = 0$.
\end{enumerate}
\end{lemma}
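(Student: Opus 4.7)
The plan is to establish each of the three identities by a self-contained counting argument on octahedra, relying only on the antipodal-coloring combinatorics and the no-boundary hypothesis (equivalently, every 2-face lies in exactly two octahedra and the link of every vertex is a 2-sphere).

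Part (i) will follow from a double count of face--octahedron incidences: each octahedron contributes exactly eight triangular faces, and since $\mathcal{L}$ has no boundary each 2-face is shared by exactly two octahedra, giving $8|\mathcal{L}_3|=2|\mathcal{L}_2|$. Part (ii) will follow from an entirely local observation about a single antipodally colored octahedron: the three pairs of opposite vertices are $RR$, $BB$, and $GY$, and opposite vertices in an octahedron are exactly the vertex pairs not joined by an edge. Consequently every edge of every octahedron, hence every edge of $\mathcal{L}$, joins one of the five non-opposite color pairs $RG$, $RY$, $BG$, $BY$, $RB$, and these five classes are manifestly disjoint.

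Part (iii) is where the substantive content lies, and I plan to obtain it by analyzing the link of each $G$- and $Y$-colored vertex. Because $\mathcal{L}$ is a closed 3-manifold, the link $\mathrm{lk}(v)$ of any vertex is a 2-sphere. For $v\in\mathcal{L}^G_0$ and any octahedron $o\ni v$, the coloring forces $v$ to be an apex of the distinguished $GY$-axis of $o$, so the link of $v$ inside $o$ is the equatorial square opposite $v$; these squares glue along shared edges (one per 2-face of $\mathcal{L}$ containing $v$) to form $\mathrm{lk}(v)$. Writing $n_v$ for the number of octahedra at $v$ and $d(v)$ for the number of edges of $\mathcal{L}$ incident to $v$, the square-complex $\mathrm{lk}(v)$ has $d(v)$ vertices, $2n_v$ edges (each 2-face at $v$ contributes one edge, shared by exactly two octahedra), and $n_v$ quadrilateral faces; the relation $\chi(\mathrm{lk}(v))=2$ then forces $d(v)=n_v+2$. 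Summing over $v\in\mathcal{L}^G_0$ and using that each octahedron has a unique $G$ vertex, so that $\sum_v n_v=|\mathcal{L}_3|$, yields $|\mathcal{L}^{RG}_1|+|\mathcal{L}^{BG}_1|=|\mathcal{L}_3|+2|\mathcal{L}^G_0|$. An identical argument with $Y$ in place of $G$ yields $|\mathcal{L}^{RY}_1|+|\mathcal{L}^{BY}_1|=|\mathcal{L}_3|+2|\mathcal{L}^Y_0|$. Adding these two identities and using (ii) to rewrite the left-hand side as $|\mathcal{L}_1|-|\mathcal{L}^{RB}_1|$ produces exactly (iii).

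The main obstacle I anticipate is not the algebra but the link identification in (iii): one must verify that the local square-pyramid picture at an apex glues, via the closed-manifold hypothesis, into a genuine 2-sphere tiled by $n_v$ squares, and that the Euler relation $V-E+F=2$ is the only piece of topology one needs. Once that geometric picture is accepted the remaining bookkeeping is elementary, and the only place where care is required is making sure the no-boundary assumption is used consistently in both of the steps where it enters.
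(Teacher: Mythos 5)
Your proofs of (i) and (ii) coincide with the paper's: double-counting face--octahedron incidences, and observing that antipodal coloring forbids $GY$ edges. For (iii) you take a genuinely different route. The paper builds a new global cellulation $\hat{\mathcal{L}}$ of the same closed 3-manifold --- splitting each octahedron into two pyramids along its $RB$ equatorial square and then merging, around each $G$ or $Y$ vertex $v$, all pyramids with apex $v$ into a single 3-cell --- and then equates the two Euler characteristics (both zero), which after substituting (i) rearranges to (iii). You instead argue locally: the link of each $G$ (or $Y$) vertex $v$ is a 2-sphere tiled by the $n_v$ equatorial squares of the octahedra at $v$, with $d(v)$ vertices and $2n_v$ edges, so $\chi = d(v) - n_v = 2$; summing over $\mathcal{L}^G_0$ and $\mathcal{L}^Y_0$ and invoking (ii) gives (iii). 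Your counts check out (each octahedron has exactly one $G$ and one $Y$ vertex, each $RG$ or $BG$ edge has exactly one $G$ endpoint, and each 2-face at $v$ is shared by two octahedra by the no-boundary hypothesis), and the topological input you need --- that vertex links in a closed 3-manifold cellulation are 2-spheres --- is exactly the hypothesis that makes the paper's merged cells $c(v)$ genuine 3-balls, so neither argument is assuming more than the other. Your version buys something extra: it yields the two refined identities $|\mathcal{L}^{RG}_1|+|\mathcal{L}^{BG}_1| = |\mathcal{L}_3| + 2|\mathcal{L}^{G}_0|$ and $|\mathcal{L}^{RY}_1|+|\mathcal{L}^{BY}_1| = |\mathcal{L}_3| + 2|\mathcal{L}^{Y}_0|$ separately, of which (iii) is only the sum; the paper's version avoids any link analysis at the cost of verifying that $\hat{\mathcal{L}}$ is a valid cellulation and using $\chi = 0$ for closed 3-manifolds. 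Both are correct; I find no gap in yours.
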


\begin{proof}
To show (i), we observe that each face of $\mathcal{L}$ is shared between two octahedral volumes and every octahedral volume has eight faces.
To show (ii), we observe that there are no $GY$ edges in $\mathcal{L}$, as every octahedral volume in $\mathcal{L}$ is antipodally colored.
To show (iii), we construct a new tessellation of the manifold corresponding to $\mathcal{L}$.
We first split every octahedral volume $\omega\in\mathcal{L}_3$ into two pyramids by inserting into $\omega$ a face $f(\omega)$ glued along the cycle consisting of four $RB$ edges.
Then, for every $G$ or $Y$ vertex $v\in\mathcal{L}^{G}_0\cup\mathcal{L}^{Y}_0$ we find all the pyramids containing $v$ and merge them into a single three-dimensional cell $c(v)$, whose boundary corresponds to the collection of the bases of these pyramids.
Note that this step removes all the edges and vertices of color different than $RB$ and $R$ or $B$, respectively.
The resulting tessellation $\hat{\mathcal{L}}$ consists of the vertices $\mathcal{L}^{R}_0\cup\mathcal{L}^{B}_0$, edges $\mathcal{L}^{RB}_1$, faces $\{ f(\omega) | \omega\in\mathcal{L}_3\}$, and volumes $\{ c(v) | v\in\mathcal{L}^{G}_0\cup\mathcal{L}^{Y}_0\}$.
Since for $\mathcal L$ and $\hat{\mathcal{L}}$ the Euler characteristic is the same (and equal to zero), we obtain
\begin{equation}
|\mathcal{L}_0| - |\mathcal{L}_1| + |\mathcal{L}_2| - |\mathcal{L}_3| =
|\mathcal{L}^{R}_0| + |\mathcal{L}^{B}_0| - |\mathcal{L}^{RB}_1| + |\mathcal{L}_3| - |\mathcal{L}^{G}_0| -| \mathcal{L}^{Y}_0|,
\end{equation}
and by using (i) and rearranging the terms we finally obtain (iii).
\end{proof}

Lastly, we explain how one can transform any three-dimensional lattice $\mathcal{L}$ without boundary into a colorable octahedral lattice $\mathcal{L}^{\mathrm{oct}}$.
In the first step, we convert $\mathcal{L}$ into a simplicial $d$-complex $\mathcal{L}^{\mathrm{sim}}$, where $d$ denotes the dimensionality of $\mathcal{L}$.
This step is general and does not require that $d=3$.
For each flag of $\mathcal L$, we include a corresponding $d$-simplex in $\mathcal L^{\mathrm{sim}}$.
We recall that a (geometrical) flag is a sequence of cells of $\mathcal L$, where each cell is contained in the next and there is exactly one $i$-cell of each dimension $i \in \{0,\ldots,d\}$.
We note that the faces of a given $d$-simplex correspond to the non-empty subsets of its corresponding flag.
In particular, the vertices of a given $d$-simplex correspond to the individual $i$-cells contained in its corresponding flag. 
In the second step, since the vertices of $\mathcal{L}^{\mathrm{sim}}$ correspond to the vertices, edges, faces and volumes of $\mathcal{L}$, we can color them in $R$, $G$, $Y$ and $B$, respectively\footnote{
We remark that for our transformation to work the vertices could also be colored in, respectively, 
 $G$, $R$, $Y$ and $B$, or $R$, $G$, $B$ and $Y$.}.
Then, for every $GY$ edge in $\mathcal{L}^{\mathrm{sim}}_1$ we find four tetrahedra in $\mathcal{L}^{\mathrm{sim}}_3$ containing it, and merge them into a single octahedral volume.
Note that this step removes all the edges and faces of color $GY$ and $RGY$ or $BGY$, respectively.
It is easy to see that the resulting lattice $\mathcal{L}^{\mathrm{oct}}$ forms a colorable octahedral lattice.
In Fig.~\ref{fig_octahedral} we illustrate the process of converting $\mathcal{L}$ into $\mathcal{L}^{\mathrm{oct}}$.
We remark that the first step is equivalent to the inflation procedure~\cite{Bombin2007} presented from the perspective of the dual lattice $\mathcal{L}^*$.

\subsection{3D subsystem toric code with no logical qubits}
\label{sec_dual_lattice}

\begin{figure*}[ht!]
\centering
\includegraphics[width=.25\textwidth]{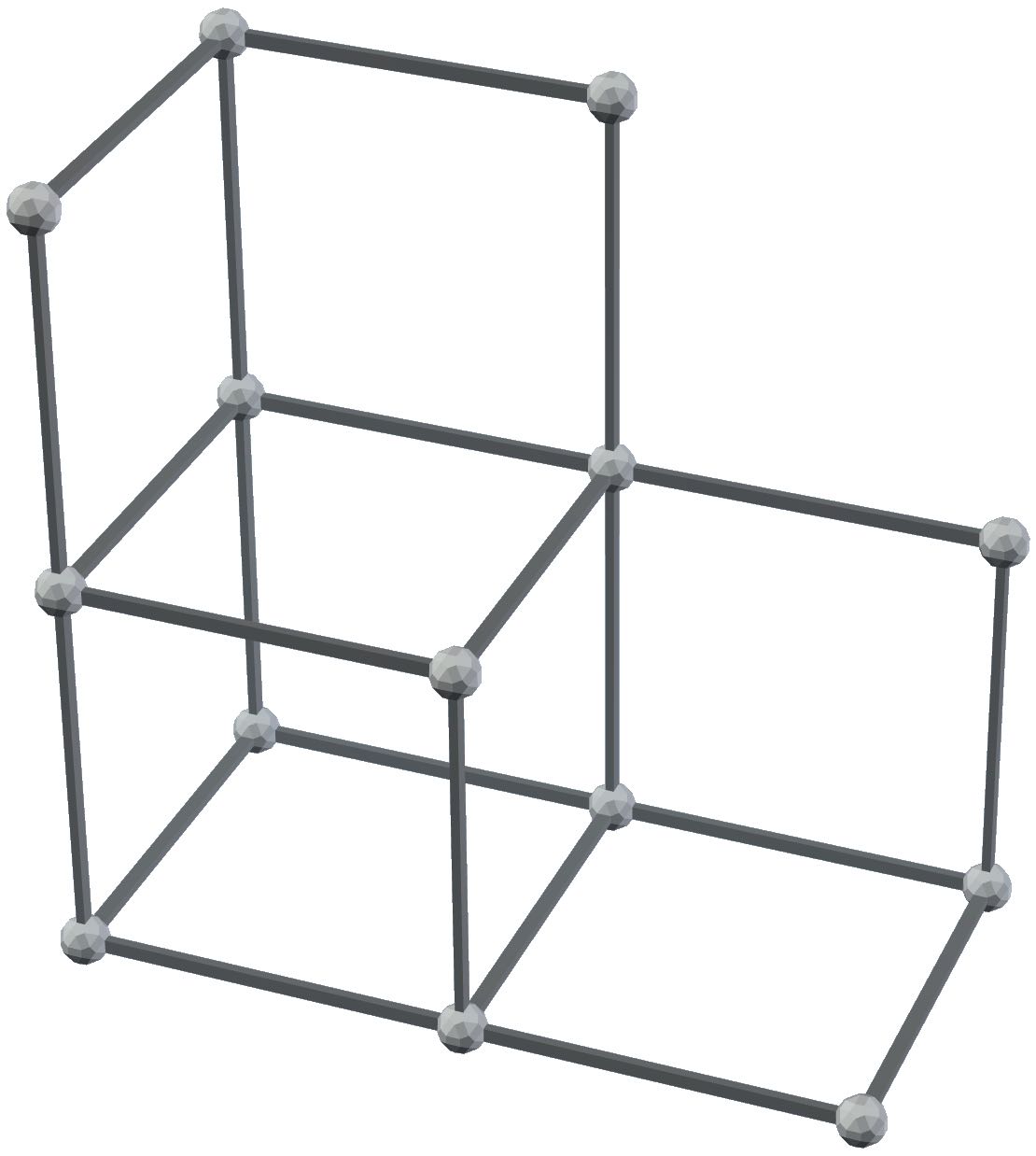}
\hspace*{-.25\textwidth}(a)\hspace*{.30\textwidth}
\includegraphics[width=.25\textwidth]{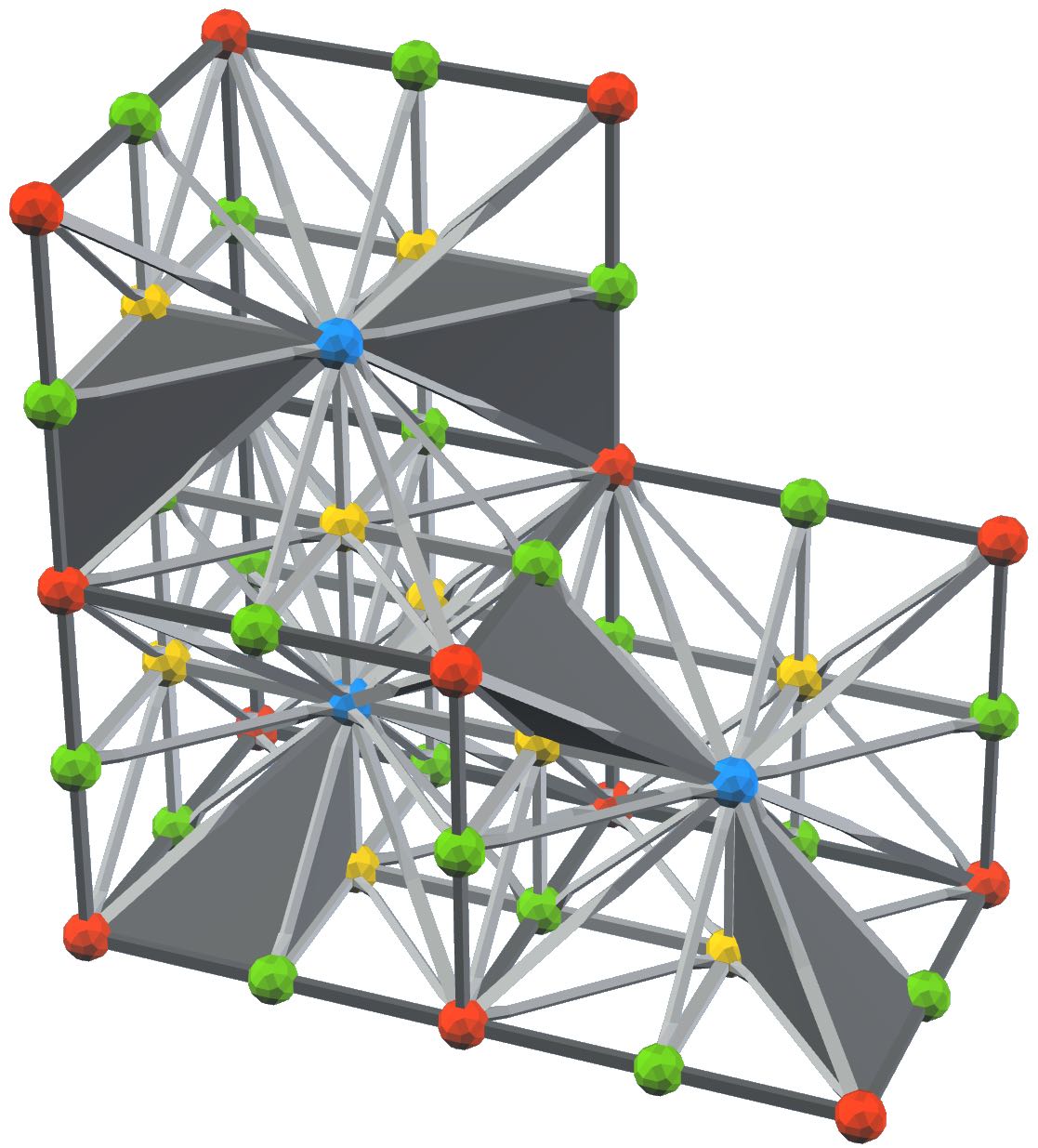}
\hspace*{-.25\textwidth}(b)\hspace*{.30\textwidth}
\includegraphics[width=.25\textwidth]{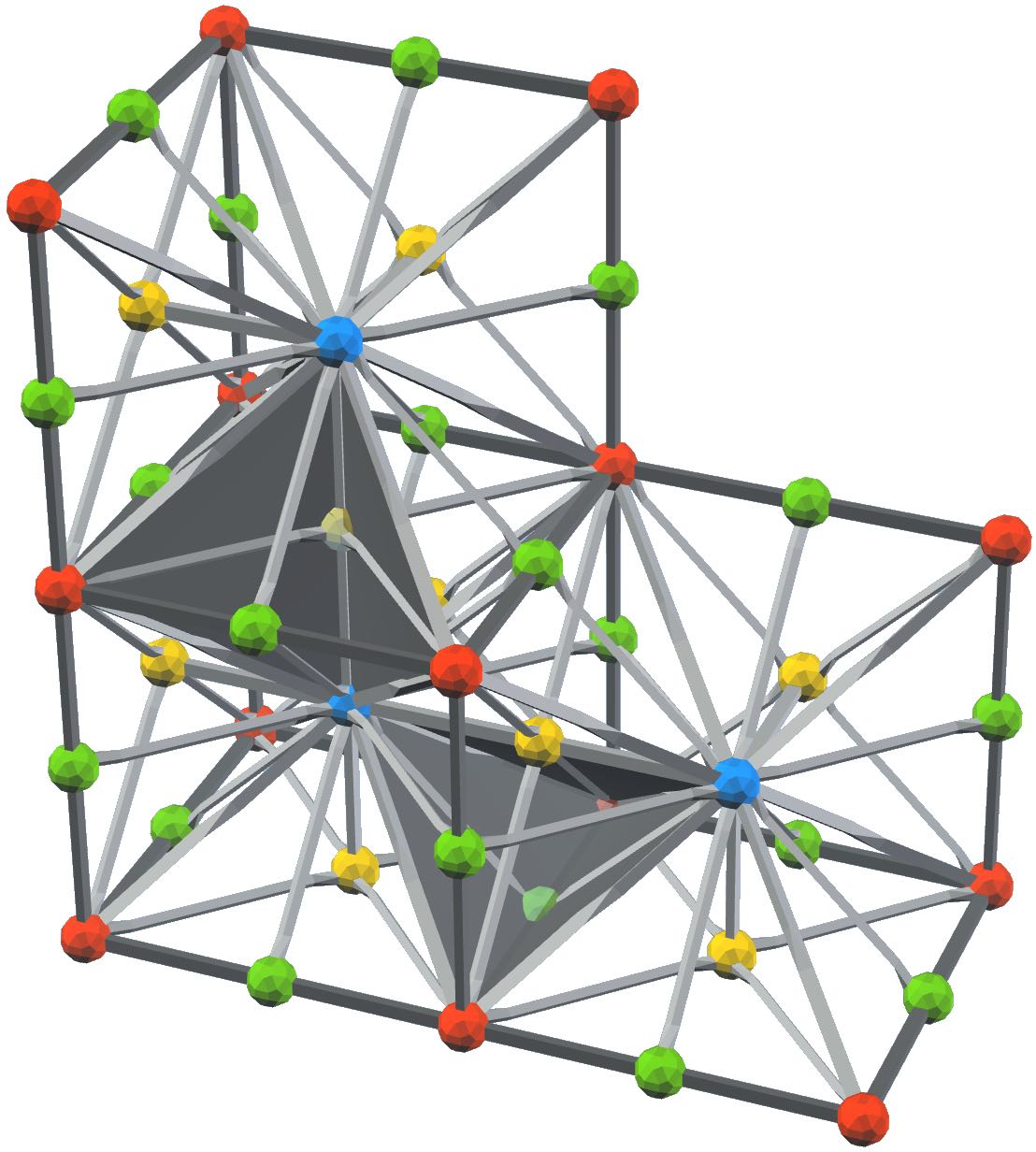}
\hspace*{-.25\textwidth}(c)\hspace*{.22\textwidth}
\caption{
A procedure of constructing a colorable octahedral lattice $\mathcal{L}^{\mathrm{oct}}$.
(a) We start with an arbitrary three-dimensional lattice $\mathcal{L}$ without boundary.
(b) We convert $\mathcal{L}$ into a simplicial complex $\mathcal{L}^{\mathrm{sim}}$.
The vertices of $\mathcal{L}^{\mathrm{sim}}$ can be colored in $R$, $G$, $Y$ and $B$, since they correspond to the vertices, edges, faces and volumes of $\mathcal{L}$.
We shade some tetrahedra in grey.
(c) We obtain $\mathcal{L}^{\mathrm{oct}}$ by finding for every $GY$ edge in $\mathcal{L}^{\mathrm{sim}}_3$ four tetrahedra containing it and merging them into a single cell.
We shade some octahedral volumes in grey.
}
\label{fig_octahedral}
\end{figure*}

Now we are ready to present a general construction of the 3D~STC, which, contrasted with the initial construction from Sec.~\ref{sec_glance}, allows us to define the 3D~STC on lattices other than the cubic lattice.
Moreover, this construction is not only useful in calculating the number of logical qubits of the 3D~STC, but also leads to a succinct description of the boundaries of the 3D~STC lattice.

Let $\mathcal{L}$ be a colorable octahedral lattice.
For simplicity, we assume that the lattice $\mathcal{L}$ is obtained by tessellating a 3-sphere.
We identify each octahedral volume of $\mathcal{L}$ with a qubit and thus the number of physical qubits is 
\begin{equation}
N = |\mathcal{L}_3|.
\end{equation}
For any $i$-dimensional cell $\delta\in\mathcal{L}_i$ we denote by $\mathcal{Q}(\delta)$ the set of all the qubits on octahedra containing $\delta$, i.e.,
\begin{equation}
\mathcal{Q}(\delta) = \{ \omega \in\mathcal{L}_3 \mathrel{|} \omega \supseteq \delta \}.
\end{equation}
By saying that an operator is supported on $\delta$ we mean that it is supported on the set of qubits $\mathcal{Q}(\delta)$ and, for instance, write $X(\delta) = \prod_{\omega\in\mathcal{Q}(\delta)} X_\omega$, where $X_\omega$ denotes Pauli $X$ operator acting on qubit $\omega$.

The gauge group $\mathcal{G}$ of the 3D~STC is generated by $X$- and $Z$-type gauge generators supported on, respectively, $RG$ or $RY$ edges, and $BG$ or $BY$ edges, namely
\begin{equation}
\mathcal{G} = \big\langle X(\mu), Z(\nu) \mathrel{\big |} \mu \in \mathcal{L}_1^{RG}\cup\mathcal{L}_1^{RY},
\nu \in \mathcal{L}_1^{BG}\cup\mathcal{L}_1^{BY} \big\rangle.
\end{equation}
The stabilizer group of the 3D~STC is generated by $X$- and $Z$-type stabilizer generators supported on $R$ and $B$ vertices, i.e.,
\begin{equation}
\label{eq_stabilizers}
\mathcal{S} = \big\langle X(u), Z(v) \mathrel{\big |}
u \in \mathcal{L}_0^{R},v \in \mathcal{L}_0^{B} \big\rangle.
\end{equation}
Indeed, for each vertex in $\mathcal{L}_0^{R}\cup\mathcal{L}_0^{B}$ a corresponding stabilizer generator can be formed in two ways by multiplying gauge generators supported on edges incident to that vertex, namely
\begin{eqnarray}
X(u) &=& \prod_{\substack{\mu\in\mathcal{L}^{RG}_1 : \mu\supset u}} X(\mu)
= \prod_{\substack{\mu\in\mathcal{L}^{RY}_1 : \mu\supset u}} X(\mu),\\
\label{eq_Zstab}
Z(v) &=& \prod_{\substack{\nu\in\mathcal{L}^{BG}_1 : \nu\supset v}} Z(\nu)
= \prod_{\substack{\nu\in\mathcal{L}^{BY}_1 : \nu\supset v}} Z(\nu).
\end{eqnarray}
To see that stabilizer operators commute with each other and with gauge operators, it suffices to show that a stabilizer generator $X(u)$ and a gauge generator $Z(\nu)$ commute; the argument for a stabilizer generator $Z(v)$ and a gauge generator $X(\mu)$ is the same.
If the intersection $\mathcal{Q}(u)\cap \mathcal{Q}(\nu)$ is non-empty, then there exists an octahedral volume $\omega\in\mathcal{L}_3$ containing both the vertex $u$ and the edge $\nu$, i.e., $\omega\supset u,\nu$.
Since the lattice $\mathcal{L}$ is colorable, the octahedral volume $\omega$ is antipodally colored, and 
by definition of the stabilizer and gauge groups, $u$ does not belong to $\nu$, i.e., $u \cap \nu = \emptyset$.
Thus, there is a triangular face $f$ of $\omega$ spanned by $u$ and $\nu$, i.e., $\omega\supset f\supset u,\nu$, and we have $\mathcal{Q}(u)\cap\mathcal{Q}(\nu) = \mathcal{Q}(f)$.
Since the set $\mathcal{Q}(f)$ contains two elements, we conclude that $X(u)$ and $Z(\nu)$ commute.

We can verify that the 3D~STC defined on the lattice $\mathcal{L}$ has no logical qubits.
Since stabilizer generators correspond to $R$ and $B$ vertices of $\mathcal{L}$ and there are two relations between them, namely
\begin{equation}
\label{eq_relations_stab}
\prod_{u\in\mathcal{L}_0^R} X(u) = \prod_{v\in\mathcal{L}_0^R} Z(u) = I,
\end{equation}
thus the number of independent generators of the stabilizer group $\mathcal{S}$ is
\begin{equation}
\log_2 |\mathcal{S}| = |\mathcal{L}^{R}_0|+|\mathcal{L}^{B}_0| - 2.
\end{equation}
We also find the following four types of relations for gauge generators
\begin{eqnarray}
\label{eq_relations_gauge1}
\forall u\in\mathcal{L}^R_0&:& 
\prod_{\substack{\mu\in\mathcal{L}^{RG}_1\cup \mathcal{L}^{RY}_1 : \mu\supset u}} X(\mu) = I,\\
\label{eq_relations_gauge2}
\forall v\in\mathcal{L}^B_0&:& 
\prod_{\substack{\mu\in\mathcal{L}^{BG}_1\cup \mathcal{L}^{BY}_1 : \mu\supset v}} Z(\mu) = I,\\
\label{eq_relations_gauge3}
\forall w\in\mathcal{L}^G_0&:&
\prod_{\substack{\mu\in\mathcal{L}^{RG}_1 : \mu\supset w}} X(\mu)
= \prod_{\substack{\mu\in\mathcal{L}^{BG}_1 : \mu\supset w}} Z(\mu) = I,\\
\label{eq_relations_gauge4}
\forall w\in\mathcal{L}^Y_0&:&
\prod_{\substack{\mu\in\mathcal{L}^{RY}_1 : \mu\supset w}} X(\mu)
= \prod_{\substack{\mu\in\mathcal{L}^{BY}_1 : \mu\supset w}} Z(\mu) = I.
\end{eqnarray}
However, the above relations are not all independent; rather, we overcount them by 2.
Since gauge generators correspond to $RG$, $RY$, $BG$ and $BY$ edges of $\mathcal{L}$,
thus the number of independent generators of the gauge group $\mathcal{G}$ is
\begin{eqnarray}
\log_2 |\mathcal{G}| 
&=& |\mathcal{L}^{RG}_1|+|\mathcal{L}^{RY}_1|+|\mathcal{L}^{BG}_1|+|\mathcal{L}^{BY}_1|
- |\mathcal{L}^{R}_0| - |\mathcal{L}^{B}_0| - 2 |\mathcal{L}^{G}_0| - 2 |\mathcal{L}^{Y}_0| + 2\\
&=& 2|\mathcal{L}_3| - |\mathcal{L}^{R}_0| - |\mathcal{L}^{B}_0| + 2,
\end{eqnarray}
where we use Lemma~\ref{lemma_euler}.
We finally obtain that the number of logical qubits encoded into the 3D~STC on the lattice $\mathcal{L}$ is 
\begin{eqnarray}
\label{eq_no_logical}
K &=& N - \frac{1}{2}(\log_2 {|\mathcal{G}|} + \log_2 {|\mathcal{S}|}) = 0.
\end{eqnarray}

We remark that we could also consider the 3D~STC on a colorable octahedral lattice, which is a tessellation of any orientable closed 3-manifold, not necessarily a 3-sphere.
In such a case, however, the stabilizer group would not only be generated by geometrically-local generators in Eq.~\eqref{eq_stabilizers}, but also by non-local ones corresponding to non-trivial elements of the second homology group of the manifold.
At the same time, for each non-trivial element of the second homology group we would find an independent relation for gauge generators.
Thus, we would conclude that the 3D~STC has zero logical qubits.
This is why in Sec.~\ref{sec_glance} we consider the 3D~STC defined on the lattice $\cub$ with open boundary conditions instead of a simpler translationally-invariant lattice, such as the cubic lattice with periodic boundary conditions.

\subsection{3D~STC with one logical qubit}
\label{sec_cubiclattice}

We have just seen that the 3D~STC defined on a colorable octahedral lattice $\mathcal{L}$, which is a tessellation of a 3-sphere, has no logical qubits.
To obtain the 3D~STC with one logical qubit, we first construct a new lattice $\mathcal{L}'$ by removing one octahedral volume from $\mathcal{L}$.
This procedure is analogous to the construction of the color code with one logical qubit~\cite{Kubica2015a}.
Note that the lattice $\mathcal{L}'$ is a tessellation of the three-dimensional ball, and its boundary $\partial\mathcal{L}'$ corresponds to the boundary of the removed octahedral volume.
The gauge group $\mathcal{G}'$ is generated by $X$- and $Z$-type operators supported on $RG$ or $RY$ and $BG$ or $BY$ edges in the interior of $\mathcal{L}'$, namely
\begin{eqnarray}
\mathcal{G}' = \big\langle X(\mu), Z(\nu) &\mathrel{\big |}&
\mu \in {\mathcal{L}'}_1^{RG}\cup{\mathcal{L}'}_1^{RY}\setminus \partial{\mathcal{L}'}_1, \nu \in {\mathcal{L}'}_1^{BG}\cup{\mathcal{L}'}_1^{BY}\setminus \partial{\mathcal{L}'}_1 \big\rangle.
\end{eqnarray}
One can verify that the stabilizer group is generated by $X$- and $Z$-type operators supported on $R$ and $B$ vertices in the interior of $\mathcal{L}'$, namely 
\begin{equation}
\mathcal{S}' = \big\langle X(u), Z(v) \mathrel{\big |}
u \in {\mathcal{L}'}_0^{R}\setminus \partial{\mathcal{L}'}_0,
v \in {\mathcal{L}'}_0^{B}\setminus \partial{\mathcal{L}'}_0 \big\rangle.
\end{equation}

Compared to the 3D~STC defined on $\mathcal{L}$, the number of physical qubits is reduced by one.
Also, we discard four stabilizer generators supported on vertices of the boundary $\partial\mathcal{L}$.
Since the stabilizer generators no longer satisfy the relations in Eq.~\eqref{eq_relations_stab}, we thus obtain that the number of independent stabilizer generators is reduced by two.
Similarly, we discard eight gauge generators supported on edges belonging to $\partial\mathcal{L}$ and the remaining gauge generators no longer satisfy eight relations out of the ones in Eqs.~\eqref{eq_relations_gauge1}-\eqref{eq_relations_gauge4}.
The remaining relations are, however, independent, and thus the number of independent gauge generators is reduced by two.
Combining the above and using Eq.~\eqref{eq_no_logical}, we obtain that the 3D~STC defined on the lattice $\mathcal{L}'$ with boundary has one logical qubit.
We then immediately conclude that the 3D~STC defined on the lattice $\cub$ from Sec.~\ref{sec_glance} also has one logical qubit, as it is an example of this construction (except in the dual lattice).
Lastly, we remark that by removing more octahedral volumes we could encode more logical qubits.

\subsection{Relation to the toric code}

To understand how the 3D~STC is related to the 3D stabilizer toric code, it is useful to recast the description of the latter.
A standard way of defining the 3D toric code on a three-dimensional lattice $\mathcal{K}$ is to place a qubit on every edge of $\mathcal{K}$ and to introduce $X$- and $Z$-type stabilizer generators for every vertex $v$ and face $f$ of $\mathcal{K}$ as the product of Pauli $X$ and $Z$ operators on qubits on edges adjacent to $v$ and $f$; see Fig.~\ref{fig_toric}(a).
There is, however, an equivalent way to describe the 3D toric code in the (dual of the) rectified lattice picture~\cite{Vasmer2019}.
This approach is similar in spirit to the definition of the color code.
Roughly speaking, to obtain a rectified lattice $\mathcal{K}^{\mathrm{rec}}$ from the lattice $\mathcal{K}$ we inflate every vertex of $\mathcal{K}$ by introducing an extra volume there; see Fig.~\ref{fig_toric}(b).
Let $\widetilde{\mathcal{K}}$ be a lattice dual to the rectified lattice $\mathcal{K}^{\mathrm{rec}}$; see Fig.~\ref{fig_toric}(c).
By construction, the lattice $\widetilde{\mathcal{K}}$ is a bipyramidal lattice.
We color the vertices of $\widetilde{\mathcal{K}}$ corresponding to the inflated vertices of $\mathcal{K}$ in red and all other vertices in silver.
Then, every bipyramidal volume of $\widetilde{\mathcal{K}}$ has a pair of antipodal vertices colored in red.
Subsequently, we can define the 3D toric code by placing a qubit at every bipyramidal volume, and introducing $X$- and $Z$-type stabilizer generators for every red vertex and every edge connecting two silver vertices.

\begin{figure*}[ht!]
\centering
\includegraphics[height=.23\textheight]{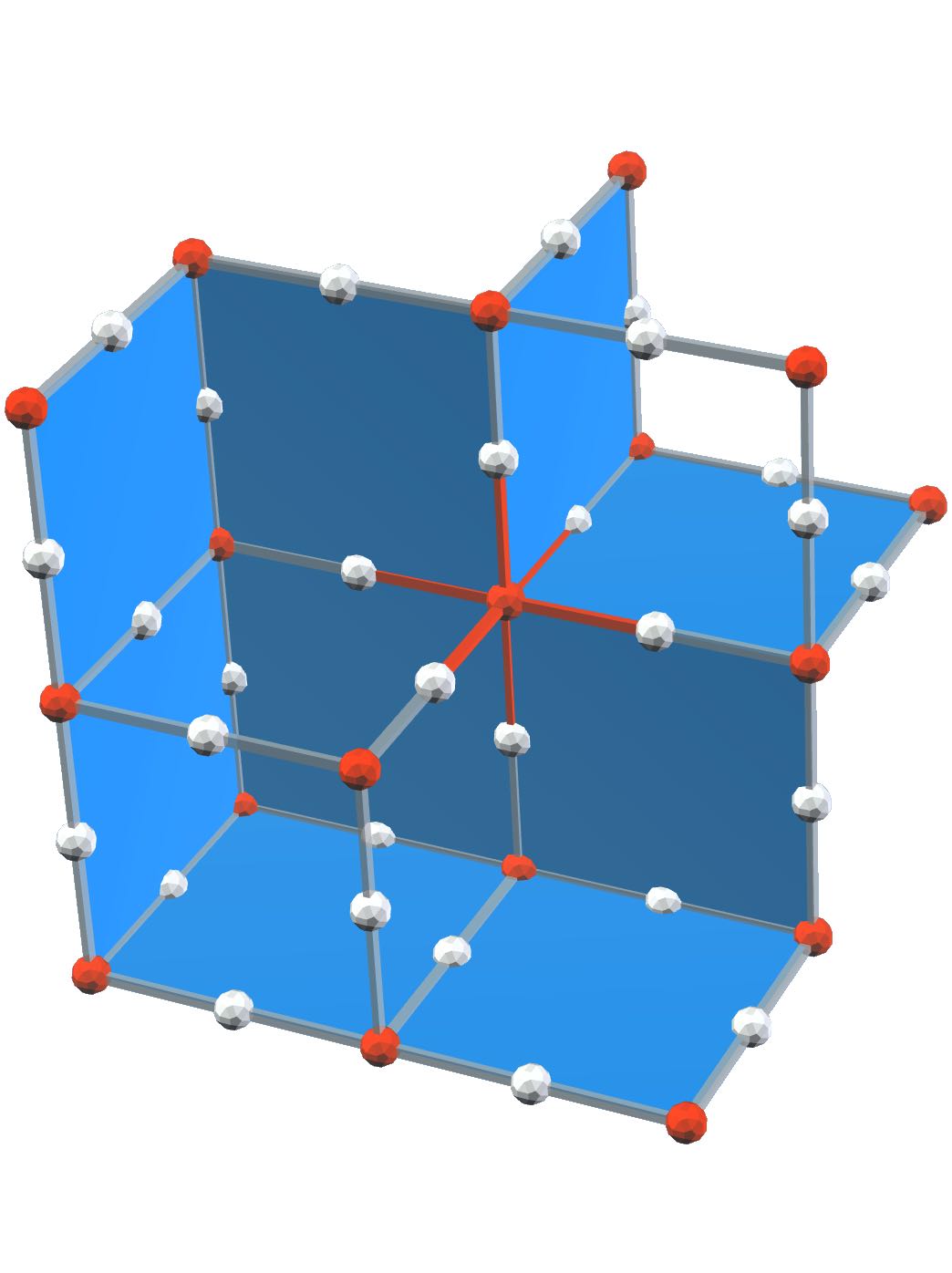}
\hspace*{-.25\textwidth}(a)\hspace*{.27\textwidth}
\includegraphics[height=.17\textheight]{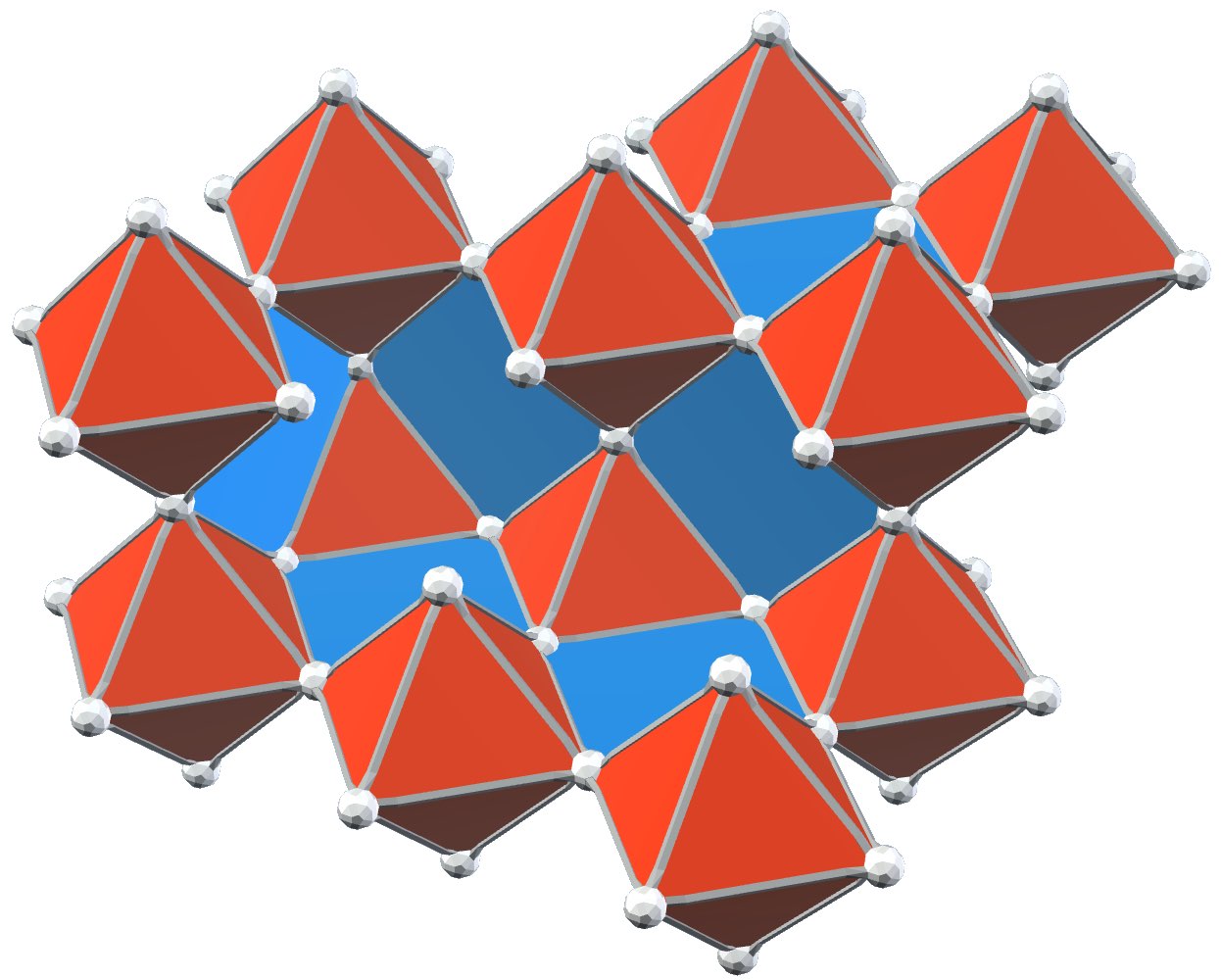}
\hspace*{-.31\textwidth}(b)\hspace*{.33\textwidth}
\includegraphics[height=.18\textheight]{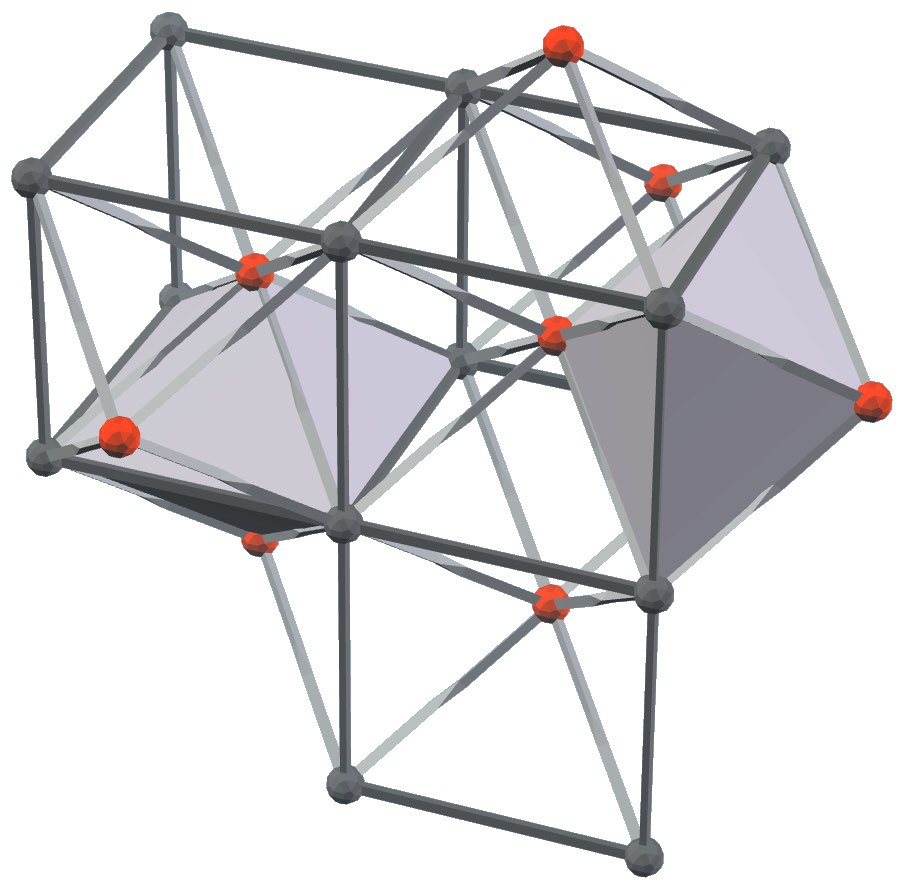}
\hspace*{-.27\textwidth}(c)\hspace*{.25\textwidth}
\caption{
Three equivalent definitions of the 3D stabilizer toric code.
(a) The standard construction on a lattice $\mathcal{K}$.
Qubits (white balls) are placed on edges, and $X$- and $Z$-stabilizers are identified with vertices (red) and faces (blue).
(b) A rectified lattice $\mathcal{K}^{\mathrm{rec}}$ obtained from $\mathcal{K}$.
Qubits are placed at vertices (white balls), and $X$- and $Z$-stabilizers are identified with volumes (red) and faces (blue).
(c) A lattice $\widetilde{\mathcal{K}}$ dual to $\mathcal{K}^{\mathrm{rec}}$ forms a bipyramidal lattice, whose vertices are colored in red and silver. 
Qubits are identified with bipyramidal volumes (we shade two of them), and $X$- and $Z$-stabilizers are identified with red vertices and edges linking silver vertices.
}
\label{fig_toric}
\end{figure*}

Equipped with the alternative definition of the 3D~toric code, we establish a relation between the 3D~STC and the 3D~toric code.
For simplicity, let $\mathcal L$ be a colorable octahedral lattice, which is a tessellation of a 3-sphere, and let $\mathcal G$ and $\mathcal S$ denote the gauge and stabilizer groups of the 3D~STC defined on $\mathcal L$.
Note that if we identify all the colors except red with silver, then $\mathcal L$ is a bipyramidal lattice, whose bipyramidal volumes each have a pair of antipodal vertices colored in red.
Therefore, the stabilizer group of the 3D~toric code on $\mathcal L$ is
\begin{equation}
\mathcal{S}_{\mathrm{3DST}} = \big\langle X(v), Z(e) \mathrel{\big |}
v \in \mathcal{L}_0^{R}, e \in\mathcal{L}_1^{BG}\cup \mathcal{L}_1^{BY} \big\rangle.
\end{equation}
One can then easily check that $\mathcal S$ is a subgroup of $\mathcal{S}_{\mathrm{3DST}}$, which, in turn, is a subgroup of $\mathcal G$, i.e.,
\begin{equation}
\mathcal S\leq \mathcal{S}_{\mathrm{3DST}} \leq \mathcal G.
\end{equation}
Thus, a state in the code space of the 3D toric code is also in the code space of the 3D~STC, with its gauge qubits in a certain determined state.
Furthermore, using the procedure of gauge fixing we can ensure that all $Z$-type gauge generators in $\mathcal G$ are satisfied, thereby fixing the gauge qubits of a state in the code space of the 3D~STC.
Subsequently, we can map that state to the code space of the 3D stabilizer toric code.

\section{Single-shot decoding}
\label{sec_ss_decoding}

We start this section by describing the MWPM problem and introduce a simple formalism capturing the decoding problem for CSS subsystem codes in the presence of measurement errors.
We focus our discussion on Pauli $X$ errors, as Pauli $Z$ errors can be handled analogously.
We then introduce the notion of the gauge flux in the 3D~STC.
Using this physical intuition, we propose a decoding strategy for the 3D~STC, the single-shot MWPM decoder, which consist of two steps: (i) syndrome estimation and (ii) ideal MWPM decoding.
Lastly, we numerically estimate the performance of the single-shot MWPM decoder.

\subsection{The Minimum-Weight Perfect Matching problem}

Let $G = (V,E)$ be a graph with the sets of vertices $V$ and edges $E$, and $V'\subseteq V$ be some subset of vertices.
Let $C_V$, $C_{V'}$ and $C_E$ be the $\mathbb{F}_2$-vector spaces with the sets $V$, $V'$ and $E$ as their bases, respectively.
In what follows we identify the elements of $\mathbb{F}_2$-vector spaces with the subsets of the corresponding basis sets.
We define a linear map $\partial_G: C_E \rightarrow C_V$, which we call the boundary operator, by specifying it for every basis element $e\in E$ as follows
\begin{equation}
\partial_G e = u + v,
\end{equation}
where $u$ and $v$ are the endpoints of $e$.
Subsequently, we define the relative boundary operator $\partial'_G: C_E \rightarrow C_V/C_{V'}$ with respect to $V'$, where $C_V/C_{V'}$ is the quotient space, as a composition of the boundary operator $\partial_G$ with the natural homomorphism $\pi: C_V \rightarrow C_V/C_{V'}$, i.e.,
$\partial'_G = \pi \circ \partial_G$.
Then, for the given graph $G = (V, E)$ and the subset of vertices $V'\subseteq V$, we can consider the following problem, which we call the Minimum-Weight Perfect Matching problem:
\textit{for any $c_V\in C_V/C_{V'}$ find the minimum-weight $c_E\in C_E$, such that $\partial'_G c_E = c_V$, i.e.,}
\begin{equation}
c_E = \argmin_{c_E'\in C_E: \partial'_G c_E' = c_V} |c_E'|.
\end{equation}
We remark that the MWPM problem might have no solution.
However, if a solution exists, then it can be efficiently found in a time polynomial in the number of constituents of the graph by, e.g., the blossom algorithm~\cite{Kolmogorov2009}.

\subsection{Decoding of CSS subsystem codes with measurement errors}

In order to diagnose Pauli $X$ errors affecting a CSS subsystem code, we need to measure some of the $Z$-type gauge operators.
Note that if we choose to measure an overcomplete set of $Z$-type gauge operators, then there will be some consistency checks on the measurement outcomes.
The obtained measurement outcomes are then classically processed to find an appropriate $X$-type recovery operator.
In the absence of measurement errors, we can find the stabilizer syndrome of the Pauli $X$ errors exactly and subsequently guarantee that the recovery operator returns the state to the code space.
We then say that decoding has succeeded if the error and the recovery operator are equivalent up to some $X$-type gauge operator.
However, in the presence of measurement errors, the recovery operator will most likely not return the state to the code space and there will be some residual Pauli $X$ errors left in the system.

To capture the decoding problem for any CSS subsystem code we introduce a commutative diagram consisting of $\mathbb{F}_2$-vector spaces (whose elements are column vectors) and linear maps between them (treated as binary matrices) as follows
\begin{equation}
\label{eq_diagram}
\begin{tikzcd}[ampersand replacement=\&, column sep = 0, row sep = 0]
\&[6em] \& \text{measurements} \& \\
\&\& C_M \ar[dr, "{\left(\begin{array}{c} \delta_S\\\hline \delta_R\end{array}\right)}"] \&\\[10ex] 
C_G \ar[r, "{\left(\begin{array}{c} \partial_Q\\\hline 0\end{array}\right)}"] \&
C_Q \oplus C_M \ar[ur, "{\left(\begin{array}{c|c} \delta_M &I\end{array}\right)}"]
\ar[rr, "{\left(\begin{array}{c|c} \partial_S& \delta_S \\\hline 0& \delta_R \end{array}\right)}"]\&\& C_S \oplus C_R\\
\begin{array}{c} \text{gauge} \\ \text{generators} \end{array} \&
\begin{array}{c} \text{qubits and} \\ \text{measurements} \end{array} \& \&
\begin{array}{c} \text{stabilizers} \\ \text{and relations} \end{array}
\end{tikzcd}
\end{equation}
The bases of $C_G$, $C_Q$, $C_M$, $C_S$ and $C_R$ correspond to the sets of, respectively, $X$-type gauge generators, qubits, measured $Z$-type gauge operators, independent $Z$-type stabilizer generators and independent relations between the measured $Z$-type gauge operators.
The relations amount to consistency checks on the measurement outcomes of the $Z$-type gauge operators.
We treat the elements of the vector spaces and the subsets of the corresponding basis sets interchangeably. 
We refer to the elements of the vector spaces $C_G$, $C_Q\oplus C_M$, $C_M$ and $C_S\oplus C_R$ as gauge operators, errors, measurement outcomes\footnote{
The measurement outcome is defined as the set of all measured $Z$-type gauge operators returning a $-1$ outcome.}
and syndromes, respectively.
Every error $\epsilon\oplus\mu \in C_Q\oplus C_M $ consists of the Pauli $X$ error $\epsilon$ and the measurement error $\mu$.
Similarly, every syndrome $\sigma\oplus\omega \in C_S \oplus C_R$ consists of the stabilizer syndrome $\sigma$, which is the set of violated stabilizers, and the relation syndrome $\omega$, which is the set of violated relations.
We choose the linear maps in Eq.~\eqref{eq_diagram} in such a way that:
(i) the support of any $X$-type gauge operator $\gamma \in C_G$ is $\partial_Q \gamma \in C_Q$,
(ii) the set of $Z$-type gauge operators anticommuting with any Pauli $X$ error $\epsilon \in C_Q$ is $\delta_M \epsilon\in C_M$,
(iii) the stabilizer syndrome and the relation syndrome corresponding to the measurement outcome $\zeta \in C_M$ are $\delta_S \zeta \in C_S$ and $\delta_R \zeta \in C_R$.
We define $\partial_S = \delta_S \delta_M$ and require that 
\begin{equation}
\label{eq_bnd1}
\partial_S \partial_Q = 0,
\end{equation}
as the stabilizer syndrome of any $X$-type gauge operator has to be trivial.
We also require that
\begin{equation}
\label{eq_bnd2}
\delta_R\delta_M = 0,
\end{equation}
as, by definition, the relation syndrome of any Pauli $X$ error is trivial.
Note that the sequence $C_G \rightarrow C_Q \oplus C_M \rightarrow C_S \oplus C_R$ in Eq.~\eqref{eq_diagram} forms a chain complex, which incorporates the standard chain complex $C_G \xrightarrow{\partial_Q} C_Q \xrightarrow{\partial_S} C_S$ describing CSS stabilizer codes~\cite{Bravyi2013b,Bombin2013book}.
We also remark that chain complexes are not only useful for describing the decoding problem---they can also give insights into the procedures of gauging and ungauging stabilizer symmetries~\cite{Vijay2016,Williamson2016,Kubica2018}.

Let $\epsilon\oplus\mu \in C_Q \oplus C_M$ be an error and $\gamma\in C_G$ be a gauge operator, which describes the state of the gauge qubits.
Then, the corresponding measurement outcome $\zeta\in C_M$ does not only depend on $\epsilon\oplus\mu$, but also on $\gamma$, namely
\begin{equation}
\label{eq_phi}
\zeta = \delta_M \epsilon + \mu + \delta_M\partial_Q \gamma.
\end{equation}
We remark that in a generic setting, when we perform interleaved measurement rounds of $X$- and $Z$-type gauge operators, the gauge qubits will be in a random state.
Thus, we should treat two measurement outcomes $\zeta,\zeta' \in C_M$ as equivalent iff they differ by an element from $\im(\delta_M\partial_Q)$.

We can phrase the decoding problem for CSS subsystem codes as follows:
given the measurement outcome $\zeta\in C_M$, find some Pauli $X$ recovery operator $\chi\in C_Q$, which attempts to correct the Pauli $X$ error $\epsilon$.
In the decoding problem, we do not use $\zeta$ explicitly;
rather, we use the stabilizer syndrome $\delta_S\zeta\in C_S$ and the relation syndrome $\delta_R\zeta\in C_R$.
Note that we can infer $\delta_S \zeta$ from measurement outcomes of some set of independent $Z$-type gauge operators.
On the other hand, to infer $\delta_R \zeta$, we need measurement outcomes for an overcomplete set of $Z$-type gauge operators.
Using Eqs.~\eqref{eq_bnd1}-\eqref{eq_phi}, we obtain that $\delta_S\zeta$ does not depend on the gauge operator $\gamma$, i.e., 
\begin{equation}
\delta_S\zeta = \partial_S\epsilon + \delta_S\mu,
\end{equation}
and $\delta_R \zeta$ only depends on the measurement error $\mu$, i.e.,
\begin{equation}
\delta_R\zeta = \delta_R\mu.
\end{equation}

\begin{figure*}[ht!]
\centering
\includegraphics[width=.3\textwidth]{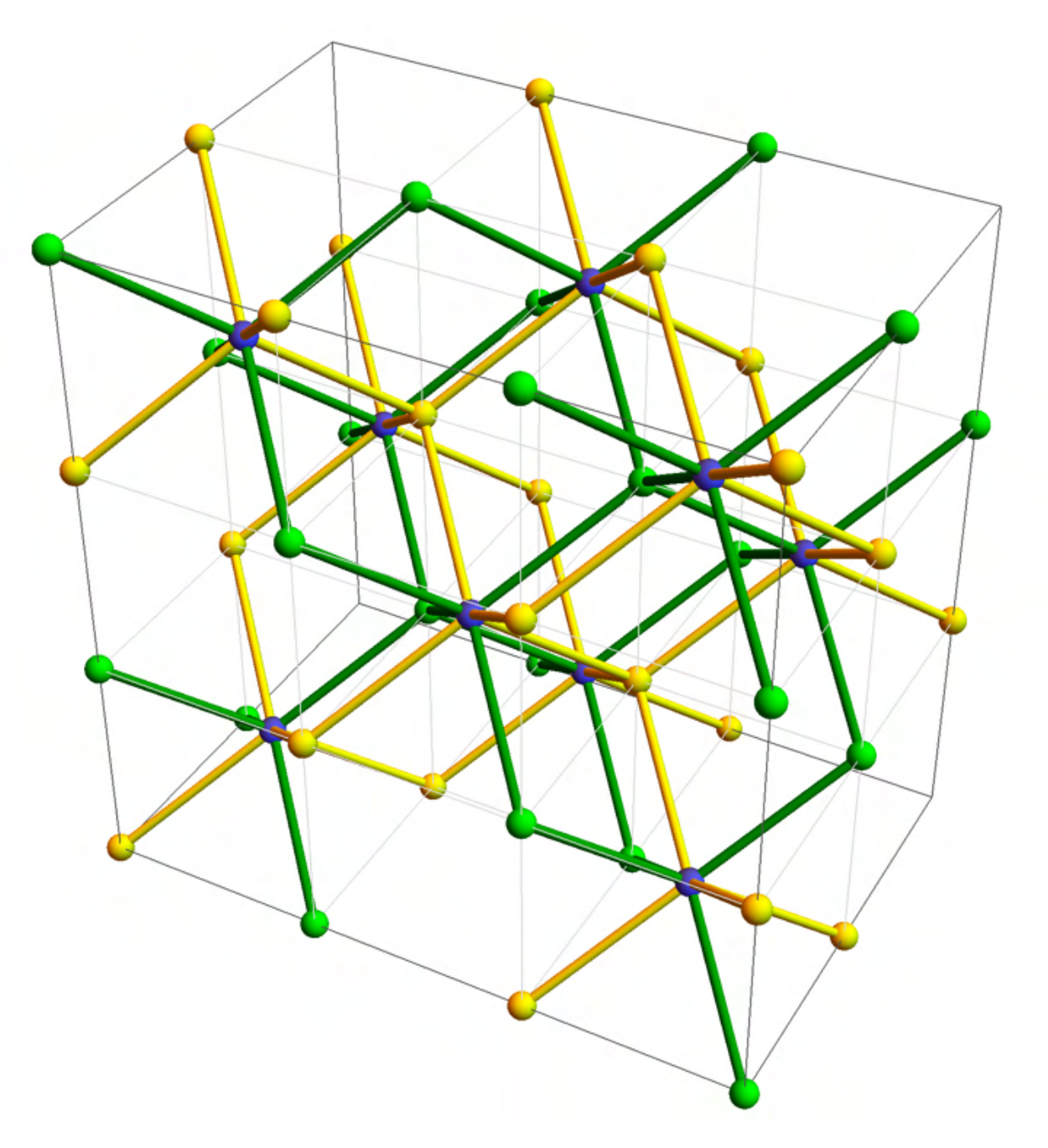}
\hspace*{-.28\textwidth}(a)\hspace*{.28\textwidth}
\includegraphics[width=.3\textwidth]{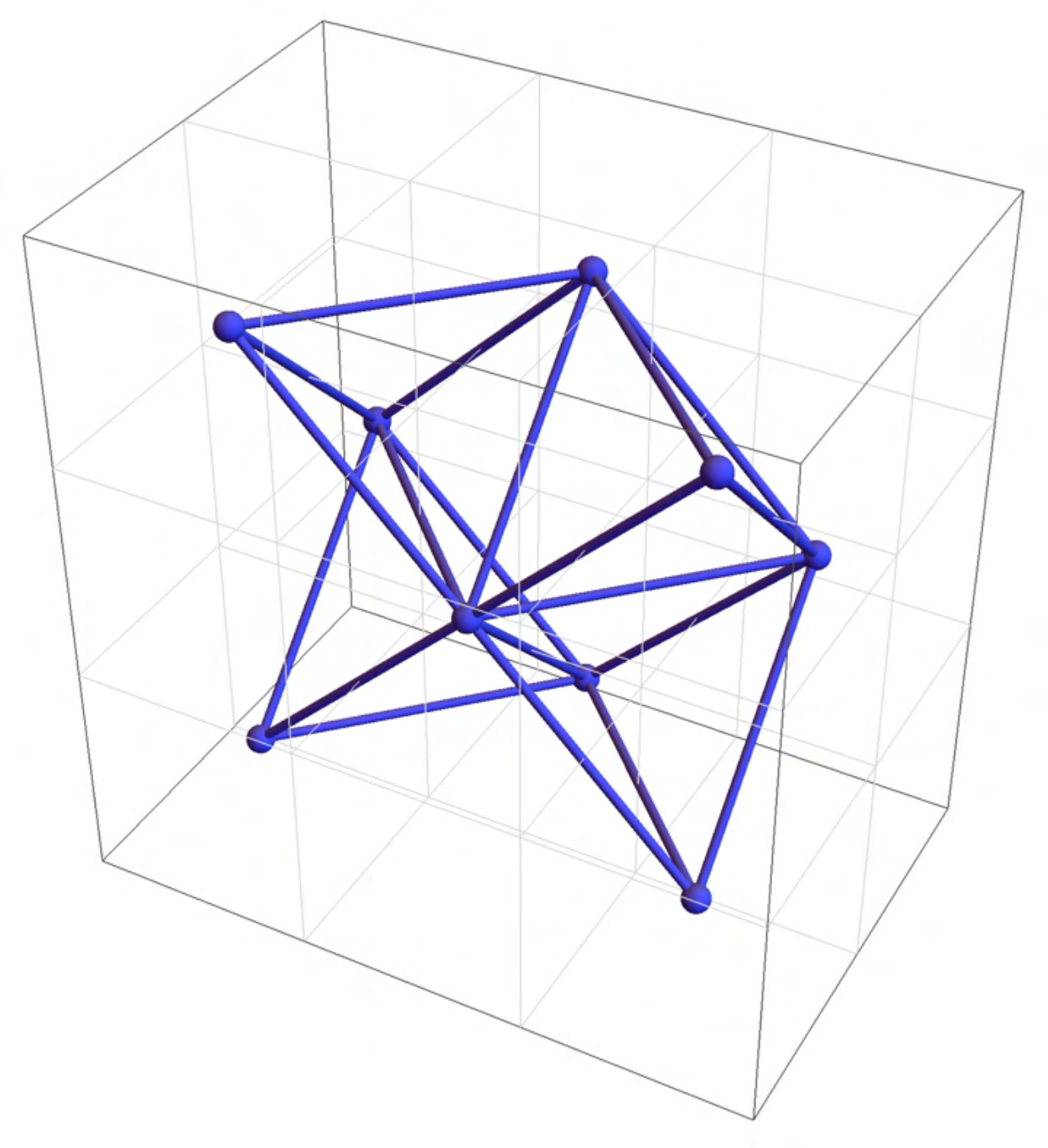}
\hspace*{-.285\textwidth}(b)\hspace*{.28\textwidth}
\includegraphics[width=.3\textwidth]{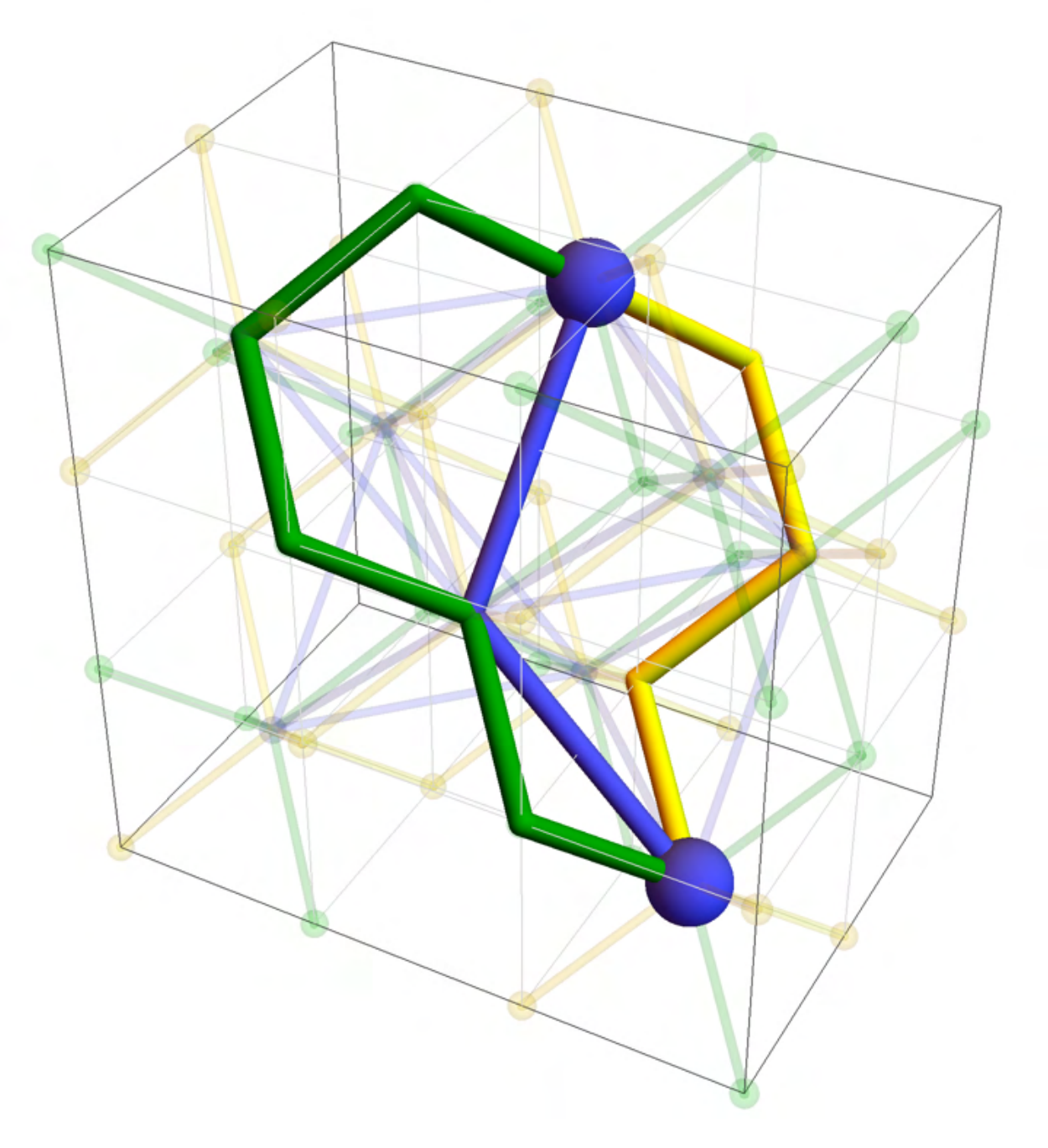}
\hspace*{-.28\textwidth}(c)\hspace*{.26\textwidth}
\caption{
For the 3D~STC on the lattice $\mathcal L$ we construct (a) the measurement graph $G_\text{mea}$ and (b) the qubit graph $G_\text{qub}$.
We use $G_\text{mea}$ and $G_\text{qub}$ in the single-shot MWPM decoder for syndrome estimation and ideal MWPM decoding, respectively.
(c) An example of a flux $\phi$ (green and yellow edges), which may arise for a Pauli $X$ error $\epsilon$ and its stabilizer syndrome $\partial_S \epsilon$ (blue edges and balls).
Note that $\phi$ satisfies the Gauss law and is not uniquely specified by $\epsilon$.
}
\label{fig_graph_flux}
\end{figure*}

We now consider the following decoding strategy.
First, using the relation syndrome $\delta_R \zeta$ of the measurement outcome $\zeta$, we find an estimate $\hat\mu$ of the measurement error $\mu$, such that 
$\delta_R\hat\mu = \delta_R \zeta$.
Then, we compute an estimate $\hat\sigma$ of the stabilizer syndrome $\partial_S \epsilon$ of the error $\epsilon$ as follows
$\hat\sigma = \delta_S(\zeta +\hat\mu) = \partial_S\epsilon + \delta_S(\mu+\hat\mu)$.
Lastly, we find a recovery operator $\chi$, such that $\partial_S \chi = \hat\sigma$.
In the rest of this section we describe how to find the measurement error estimate $\hat\mu$ and the recovery operator $\chi$.

\subsection{Physics of the gauge flux in the 3D~STC}
\label{sec_flux}

We start this subsection by illustrating the decoding problem for the 3D~STC using the measurement and qubit graphs; see Fig.~\ref{fig_graph_flux}.
The measurement graph $G_\text{mea} = (V_\text{mea},E_\text{mea})$ is a sublattice of $\mathcal L$, where $V_\text{mea}$ are the $B$, $G$ and $Y$ vertices of $\mathcal{L}$, i.e.,
\begin{equation}
V_\text{mea} = \mathcal{L}^B_0\cup\mathcal{L}^G_0\cup\mathcal{L}^Y_0,
\end{equation}
and $E_\text{mea}$ are the $BG$ and $BY$ edges of $\mathcal{L}$, i.e.,
\begin{equation}
E_\text{mea} = \mathcal{L}^{BG}_1\cup\mathcal{L}^{BY}_1.
\end{equation}
The qubit graph $G_\text{qub} = (V_\text{qub},E_\text{qub})$ is constructed by taking the $B$ vertices of the lattice $\mathcal{L}$, i.e.,
\begin{equation}
V_\text{qub} = \mathcal{L}^B_0,
\end{equation}
and adding edges between any two different $B$ vertices that belong to the same octahedron in $\mathcal L$.
Note that those edges are not present in $\mathcal{L}$, however they are in the one-to-one correspondence with the octahedral volumes of $\mathcal{L}$, and thus we can make an identification
\begin{equation}
E_\text{qub} = \mathcal{L}_3.
\end{equation}

Let $V'_\text{mea}\subseteq V_\text{mea}$ and $V'_\text{qub}\subseteq V_\text{qub}$ denote the sets of all vertices belonging to the boundary $\partial\mathcal L$ of the lattice $\mathcal{L}$.
We refer to the vertices in $V'_\text{mea}$ and $V'_\text{qub}$ as the boundary vertices.
The maps $\partial_S: C_Q\rightarrow C_S$, $\delta_R: C_M\rightarrow C_R$ and $\delta_S:C_M\rightarrow C_S$ in Eq.~\eqref{eq_diagram} can then be interpreted as the relative boundary operators with respect to the boundary vertices for the qubit graph $G_\text{qub}$, the measurement graph $G_\text{mea}$ and a subgraph of $G_\text{mea}$ containing only the $BG$ edges.

It will be convenient to introduce a notion of the $Z$-type gauge flux $\phi \in C_M$, which is defined to be the $Z$-type gauge measurement outcome in the absence of measurement errors, namely
\begin{equation}
\phi = \delta_M \epsilon + \delta_M\partial_Q \gamma,
\end{equation}
where $\epsilon\in C_Q$ is a Pauli $X$ error and $\gamma\in C_G$ is an $X$-type gauge operator.
From Eq.~\eqref{eq_bnd2} we obtain that the relation syndrome of the flux $\phi$ is zero, i.e.,
\begin{equation}
\label{eq_gauss}
\delta_R \phi = 0.
\end{equation}

\begin{figure*}[ht!]
\centering
\includegraphics[width=0.3\textwidth]{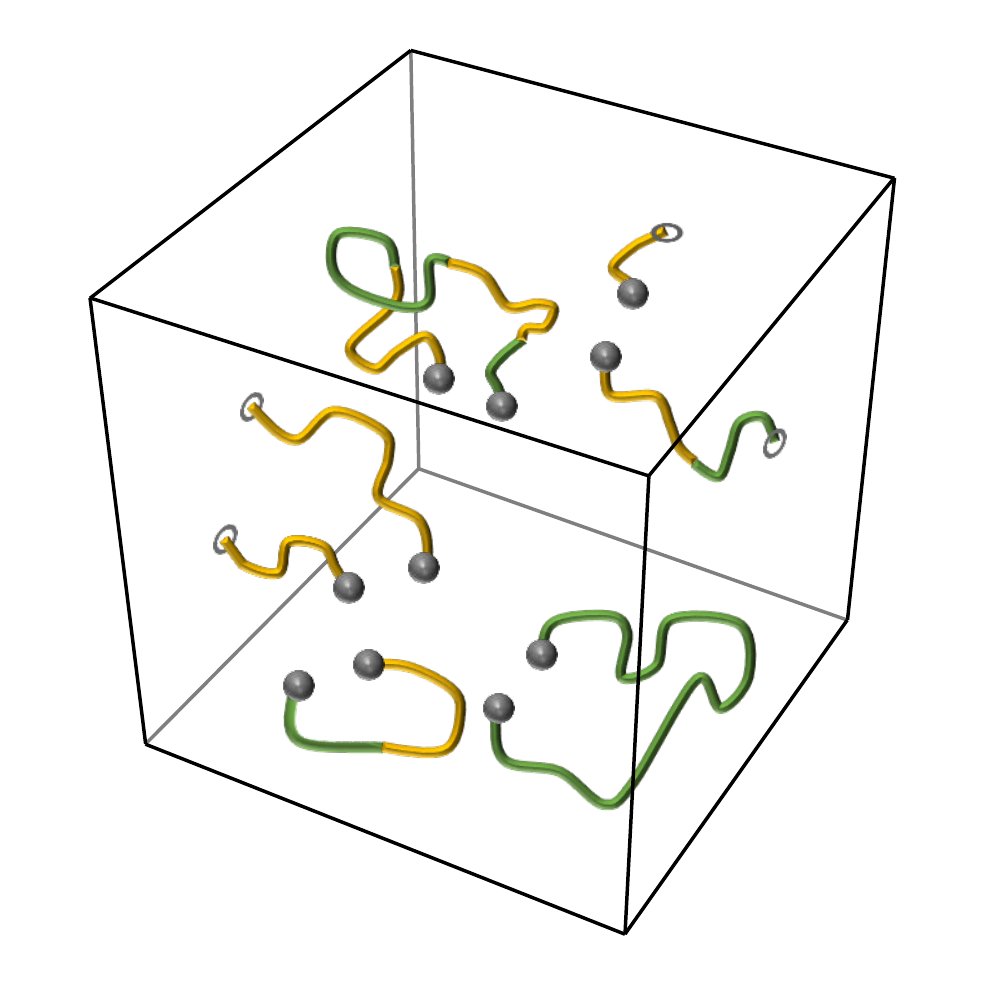}
\hspace*{-.28\textwidth}(a)\hspace*{.28\textwidth}
\includegraphics[width=0.3\textwidth]{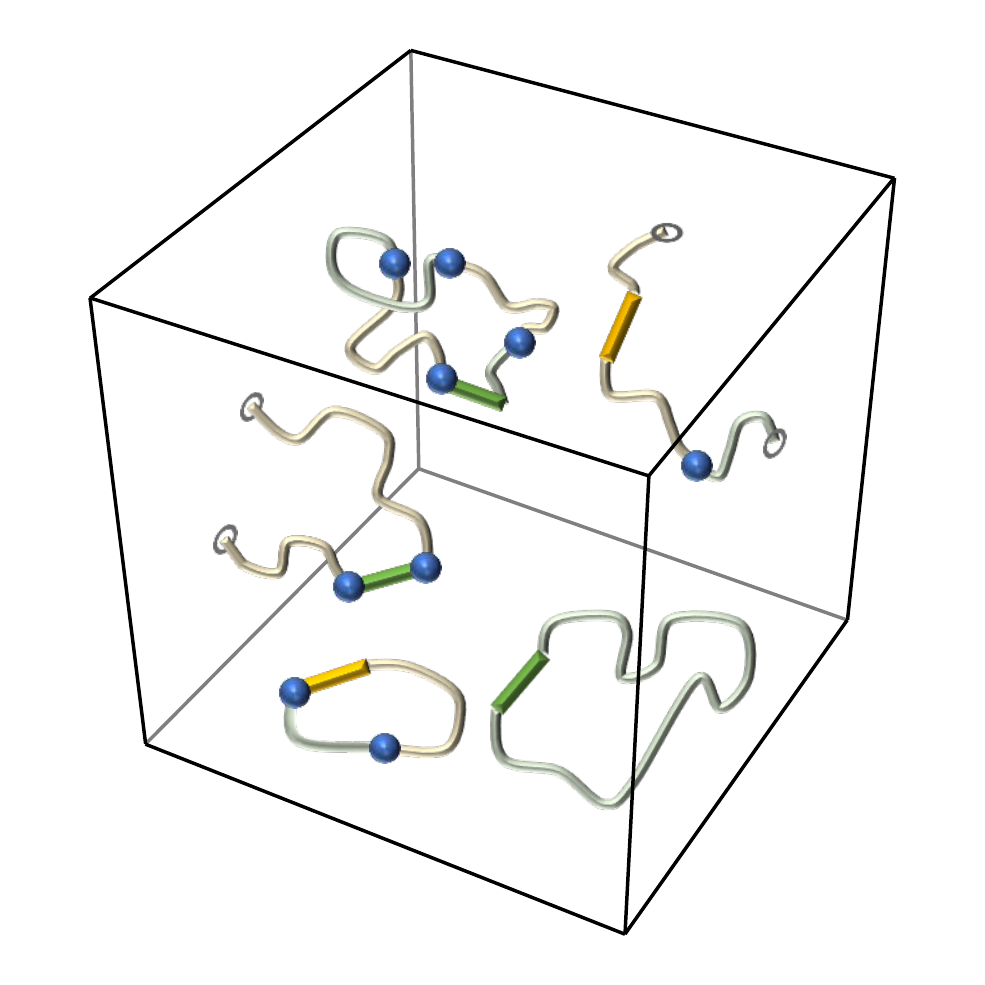}
\hspace*{-.285\textwidth}(b)\hspace*{.28\textwidth}
\includegraphics[width=0.3\textwidth]{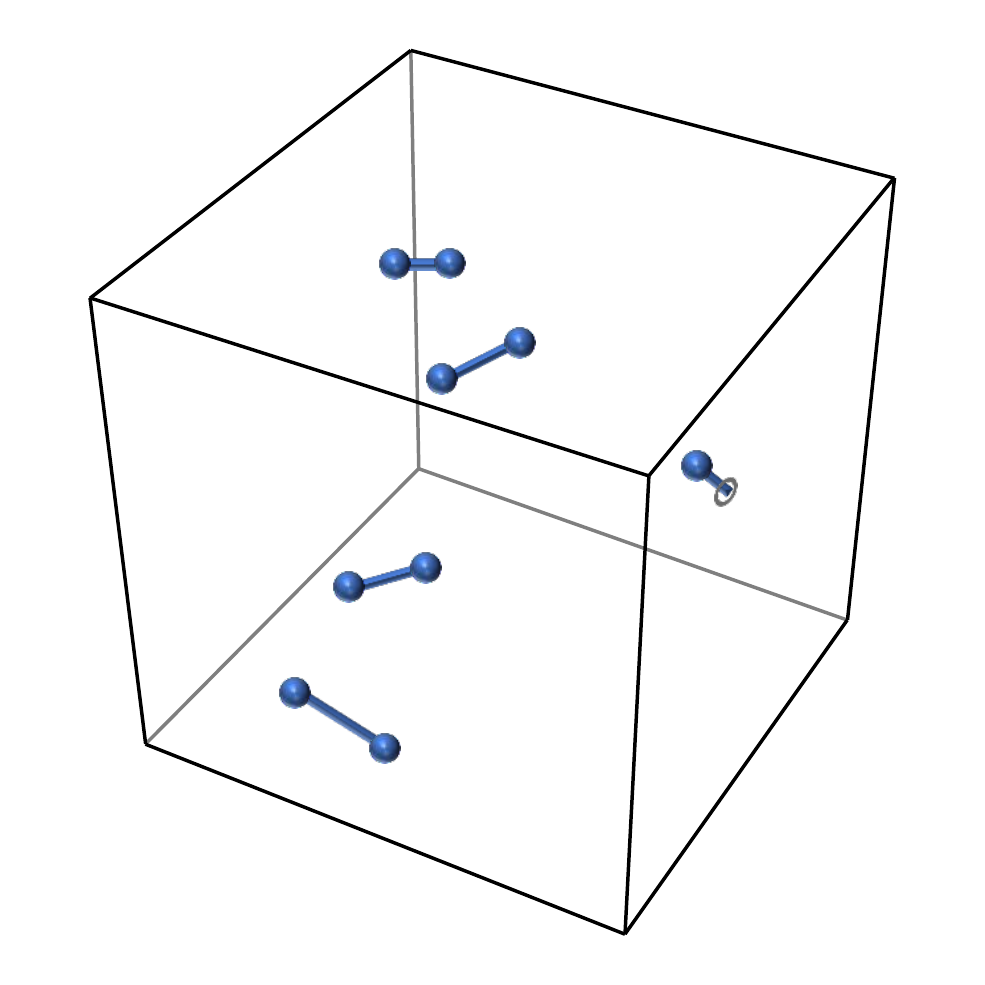}
\hspace*{-.28\textwidth}(c)\hspace*{.27\textwidth}
\caption{
An illustration of the single-shot MWPM decoder.
(a) Due to the measurement error $\mu$, the measurement outcome $\zeta$ (yellow and green lines) has non-zero relation syndrome $\delta_R \zeta$ (gray dots), i.e., $\zeta$ violates the Gauss law.
(b) In step (i), we first find the minimum-weight estimate $\hat\mu$ (yellow and green lines) of $\mu$, such that $\zeta+\hat\mu$ has trivial relation syndrome, i.e., $\delta_R(\zeta+\hat\mu) = 0$.
Then, we compute a syndrome estimate $\hat\sigma$ (blue dots).
(c) In step (ii), we find the minimum-weight recovery operator $\chi$ (blue lines) for $\hat\sigma$.
}
\label{fig_SSdecoders}
\end{figure*}

The Pauli $X$ error $\epsilon$ and the flux $\phi$ correspond to the subsets of edges of the qubit and measurement graphs $G_\text{qub}$ and $G_\text{mea}$, respectively.
In Fig.~\ref{fig_graph_flux}(c) we illustrate a flux $\phi$, which may arise from some Pauli $X$ error $\epsilon$.
A local constraint captured by Eq.~\eqref{eq_gauss}, which we refer to as the Gauss law, arises from the redundancies among gauge generators specified in Eqs.~\eqref{eq_relations_gauge2}-\eqref{eq_relations_gauge4}.
We can equivalently interpret this constraint as follows---although the flux $\phi$ can be random, it forms a collection of strings within $G_\text{mea}$, where each string can only terminate at the boundary vertices $V'_\text{mea}$.
By multiplying all gauge generators supported on edges incident to $v$ we obtain the identity operator, and thus the number of operators returning $-1$ measurement outcome has to be even.
Moreover, from Eq.~\eqref{eq_Zstab} we conclude that whenever the stabilizer $Z(v)$ is violated for some 
vertex $v \in V_\text{qub}\setminus V'_\text{qub}$, then the number of gauge generators supported on $BG$ edges (or $BY$ edges) incident to $v$ and returning $-1$ measurement outcomes, has to be odd.
Thus, for the given Pauli $X$ error $\epsilon$ and the flux $\phi$ the corresponding stabilizer syndrome $\sigma$ can be either found as the endpoints of strings in $\epsilon$ or, equivalently, as the vertices incident to an odd number of $BG$ edges (or $BY$ edges) in $\phi$.

We remark that the physics of the gauge flux is qualitatively similar in the 3D~STC and the 3D gauge color code~\cite{Bombin2015,Bombin2016dim,Bombin2018,Beverland2021}.
However, the behavior of the 3D~STC gauge flux is substantially simpler. 
For instance, in the 3D~STC the $Z$-type gauge flux has two types\footnote{
The flux type is given by the color of the corresponding edge in the measurement graph $G_\text{mea}$.},
compared to six types for the 3D gauge color code.
Also, the flux in the 3D gauge color code may have branching points where fluxes of three certain types meet; this phenomenon is not present in the 3D~STC.

\subsection{Single-shot MWPM decoder for the 3D~STC}
\label{sec_ssmwpm}

\begin{figure*}[ht!]
\centering
\includegraphics[width=.98\textwidth]{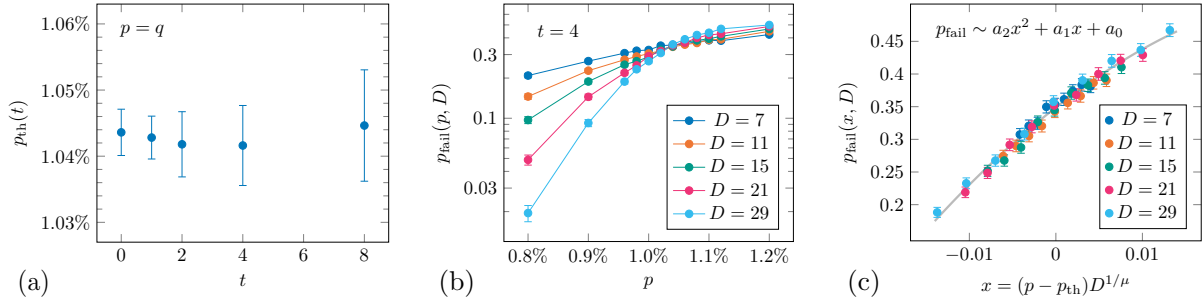}
\caption{
Numerical estimates of the single-shot MWPM decoder threshold for the 3D~STC on the lattice $\cub$ based on the cubic lattice with open boundary conditions.
(a) The threshold $p_\text{th}(t)$ does not change noticeably with the number of correction cycles $t$.
We estimate the storage threshold \psus.
(b) The failure probability $p_\text{fail}(p,D)$ after $t = 4$ correction cycles, where $p$ is the bit-flip error rate, $D$ is the code distance, and we set the measurement error rate $q=p$.
In (c), we show the same data using the rescaled variable $x = (p-p_\text{th}(t))D^{1/\mu}$, where the fitting parameters are $p_\text{th}(t) = 0.01042(6)$ and $\mu = 1.2(1)$.
}
\label{fig_threshold_all}
\end{figure*}

Now, we are ready to introduce the single-shot MWPM decoder for the 3D~STC defined on the lattice $\mathcal L$.
It consists of the following two steps.
\begin{itemize}
\item[(i)] {\bf Syndrome estimation.---}We exploit the consistency checks on the measurement outcomes of $Z$-type gauge operators.
Namely, for the given measurement outcome $\zeta$ we find the minimum-weight estimate $\hat\mu$ of the measurement error $\mu$, i.e.,
\begin{equation}
\hat\mu = \argmin_{\mu'\in C_M: \delta_R \mu'= \delta_R \zeta} |\mu'|.
\end{equation}
Finding $\hat\mu$ is an instance of the MWPM problem for the measurement graph $G_\text{mea}$. 
Then, we compute an estimate $\hat\sigma$ of the stabilizer syndrome $\partial_S\epsilon$ of the error $\epsilon$ as follows
\begin{equation}
\hat\sigma = \delta_S (\zeta+\hat\mu).
\end{equation}
\item[(ii)] {\bf Ideal MWPM decoding.---}For the given syndrome estimate $\hat\sigma$ we find the corresponding minimum-weight recovery operator $\chi$, i.e., 
\begin{equation}
\chi = \argmin_{\chi'\in C_Q: \partial_S \chi'= \hat\sigma} |\chi'|.
\end{equation}
Finding $\chi$ is an instance of the MWPM problem for the qubit graph $G_\text{qub}$.
\end{itemize}
We illustrate how the the single-shot MWPM decoder works in Fig.~\ref{fig_SSdecoders}(a)-(c).
We emphasize that in the presence of measurement errors it is likely that there will be some residual Pauli $X$ error left in the system after applying the single-shot MWPM decoder.

\subsection{Numerical simulations of the performance}
\label{sec_numericalsimu}

We numerically estimate the performance of the single-shot MWPM decoder for the 3D~STC on the lattice $\cub$ from Sec.~\ref{sec_model}, which is based on the cubic lattice with open boundary conditions.
We assume the independent and identically distributed bit-flip noise with probability $p$ and set the measurement error rate to match the bit-flip error rate, i.e., $q = p$.

\begin{figure*}[ht!]
\centering
\includegraphics[width=.98\textwidth]{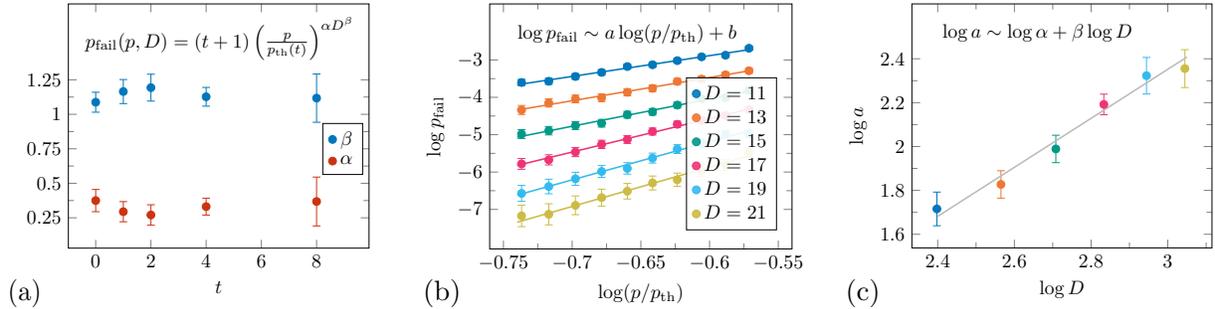}
\caption{
Subthreshold behavior of the single-shot MWPM decoder.
(a) We estimate the parameters $\alpha(t)$ and $\beta(t)$ of the ansatz in Eq.~\eqref{eq:pfail_fit} for the logical failure probability.
We find that both $\alpha(t)$ and $\beta(t)$ are stable across the range of simulated values of $t$.
(b)-(c) An illustration of the fitting procedure for $t=8$.
(b) We plot $\log p_{\rm fail}$ as a function of $\log p/p_{\rm th}$ for different values of the code distance $D$.
We fit the data to straight lines to obtain estimates of $a$ as a function of $D$.
(c) We plot $\log a = \log \alpha(t) + \beta(t) \log D$ as a function of $\log D$ and fit to a straight line to estimate $\alpha(t)$ and $\beta(t)$.
}
\label{fig_subthreshold_all}
\end{figure*}

In our Monte Carlo simulations, we first perform a fixed number of correction cycles $t$.
We start by initializing the residual error to zero, i.e., $\rho = 0$.
Then, in each correction cycle we:
(i) update the existing residual error $\rho$ by adding a randomly chosen error $\epsilon$ to it,
(ii) select a gauge operator $\gamma$ uniformly at random,
(iii) choose a random measurement error $\mu$,
(iv) find the measurement outcome $\zeta = \delta_M \rho + \mu + \delta_M\partial_Q \gamma$,
(v) use the single-shot MWPM decoder to find a recovery operator $\chi$,
(vi) update the residual error $\rho$ by adding the recovery operator $\chi$ to it.
After $t$ error correction cycles, we add a randomly chosen error to the residual error, extract the measurement outcome with no measurement error, use the single-shot MWPM decoder to find a recovery operator, which returns the state to the code space, and, finally, check for a logical error.
This, in turn, allows us to estimate the threshold $p_{\textrm{th}}(t)$; see Fig.~\ref{fig_threshold_all}(b)(c).
Note that $p_{\textrm{th}}(0)$ corresponds to the code capacity threshold.
Moreover, we are interested in the storage threshold of the 3D~STC, which we define as the limit of the threshold $p_{\textrm{th}}(t)$ as the number of correction cycles $t$ goes to infinity, i.e.,
\begin{equation}
p_{\textrm{STC}} = \lim_{t\rightarrow\infty} p_{\textrm{th}}(t).
\end{equation}
We observe that the threshold $p_{\textrm{th}}(t)$ does not change noticeably with $t$; see Fig.~\ref{fig_threshold_all}(a).
We estimate the storage threshold to be \psus.

We also investigate the behavior of the single-shot MWPM decoder in the subthreshold regime.
We use the following ansatz 
\begin{equation}
\label{eq:pfail_fit}
p_{\rm fail}(p, D, t) = (t + 1) \left(\frac{p}{p_{\rm th}(t)} \right)^{\alpha(t) D^{\beta(t)}}
\end{equation}
for the logical failure probability $p_{\rm fail}(p, D, t)$ as a function of the error rate $p$, the code distance $D$, and the number of correction cycles $t$.
In Eq.~\eqref{eq:pfail_fit}, $\alpha(t)$ and $\beta(t)$ are fitting parameters, which may depend on $t$, and 
$p_{\rm th}(t)$ is obtained from threshold fits similar to the ones depicted in Fig.~\ref{fig_threshold_all}(b)(c).
To estimate $\alpha(t)$ and $\beta(t)$, we first fix $t$ and take the logarithm of both sides of Eq.~\eqref{eq:pfail_fit} to obtain
\begin{equation}
\label{eq:lin_fit_1}
\log p_{\mathrm{fail}}(p, D, t) = \log (t+1) + a \log (p/p_{\mathrm{th}}(t)),\quad\quad a = \alpha(t) D^{\beta(t)}.
\end{equation}
Then, for different values of $D$, we plot $\log p_{\mathrm{fail}}(p, D, t)$ as a function of $\log (p/p_{\mathrm{th}}(t))$ and fit to a straight line to estimate $a$; see Fig.~\ref{fig_subthreshold_all}(b).
We then take the logarithm again and obtain
\begin{equation}
\label{eq:lin_fit_2}
\log a = \log \alpha(t) + \beta(t) \log D.
\end{equation}
Finally, we plot $\log a$ as a function of $\log D$ and fit to a straight line to get $\alpha(t)$ and $\beta(t)$;
see Fig.~\ref{fig_subthreshold_all}(c).
As the weight of the smallest logical operator is equal to $D$, we expect $\beta(t) \sim 1$, which is indeed what we observe.
Furthermore, the values of $\alpha(t)$ and $\beta(t)$ are stable for the values of $t$ that we simulated; see Fig.~\ref{fig_subthreshold_all}(a).

\section{Proof of single-shot QEC}
\label{sec_proof}

Here we prove that single-shot QEC is possible with the 3D~STC.
For concreteness, we assume that decoding is performed with the single-shot MWPM decoder from Sec.~\ref{sec_ssmwpm}.
Our proof is inspired by the seminal work by Bomb\'in~\cite{Bombin2015} and uses somewhat similar notation.
Although our presentation is centered around the 3D~STC and the single-shot MWPM decoder, we hope that it provides valuable insights into single-shot QEC in general.

\subsection{Setting the stage}

Let $p:2^A\rightarrow [0,1]$ be a discrete probability distribution over a collection $2^A$, which is the power set of some finite set $A$.
We are particularly interested in discrete probability distributions that describe Pauli and measurement errors happening in the system.
We say that $p$ is $\tau$-bounded with prefactor $c$ iff for any set $B\subseteq A$ the probability that a set $B'\subseteq A$ drawn according to the probability distribution $p$ contains $B$ is at most $c\tau^{|B|}$, i.e.,
\begin{equation}
\forall B\subseteq A: \sum_{B'\supseteq B} p(B') \leq c\tau^{|B|}.
\end{equation}
Unless we specify the prefactor explicitly, we assume $c=1$ by default.

As discussed before, since 3D~STC is a CSS code, we can focus our discussion on correcting Pauli $X$ errors, because Pauli $Z$ errors can be corrected analogously.
We can describe any stochastic Pauli $X$ noise model as a Pauli $X$ channel, i.e., a completely positive trace-preserving map admitting a Kraus representation
\begin{equation}
\N = \left\{ \sqrt{p_\N(N)} N \right\}_N,
\end{equation}
where $p_\N$ is a discrete probability distribution over Pauli $X$ operators, i.e., $\sum_{N} p_\N(N)  = 1$.
We denote the collection of all Pauli $X$ channels by
\begin{equation}
\mathbb P^X = \{ \N \mathrel{|} \text{$\N$ is a Pauli $X$ channel}\}.
\end{equation}
We say that $\N\in\mathbb P^X$ is $\tau$-bounded with prefactor $c$ iff the corresponding probability distribution $p_\N$ is $\tau$-bounded with prefactor $c$, i.e.,
\begin{equation}
\label{eq_taubounded}
\forall N: \sum_{N'\supseteq N} p_\N(N') \leq c\tau^{|N|}.
\end{equation}

We remark that in Eq.~\eqref{eq_taubounded} as well as in the rest of the article we write $N$ and $M$ to denote both the Pauli $X$ operators themselves, as well as their support, however it should be clear from the context what we mean.
For instance, $\partial N$ denotes the stabilizer syndrome of $N$, $N \cup M$ and $N\cap M$ denote, respectively, the union and intersection of the supports of $N$ and $M$, $NM$ denotes either the Pauli $X$ operator, which is the product of $N$ and $M$, or its support.

For any two channels $\N = \{ \sqrt{p_\N(N)} N \}_{N}$ and $\M = \{ \sqrt{p_\M(M)} M \}_{M}$ we define their composition as
\begin{equation}
\N \circ \M = \left\{ \sqrt{p_\N(N)p_\M(M)} N M \right\}_{N,M}.
\end{equation}
Lastly, for any two collections of channels $\mathbb{N}$ and $\mathbb{M}$, we define their composition as the following collection of channels
\begin{equation}
\mathbb{N}\circ \mathbb{M} = \{ \N \circ \M \mathrel{|} \N\in\mathbb{N},\M\in\mathbb{M} \}.
\end{equation}

Let $G = (V,E)$ be a hypergraph consisting of vertices $V$ and hyperedges $E$.
Note that a hypergraph is a generalization of a graph, in which a hyperedge is a non-empty subset of vertices.
For any subset of vertices $V'\subseteq V$ we can construct a new hypergraph $G' = (V\setminus V', \{e\setminus V' | e\in E\})$ by removing all the vertices $V'$ from the hypergraph $G$.
We say that a subset $A$ of hyperedges $E$ is connected if a subhypergraph of $G$ that includes all the hyperedges in $A$ is connected.
For any $A\subseteq E$ we can find a unique decomposition $A = \bigcup_i A_i$ in terms of its connected components, where each $A_i$ is a maximal connected subset of $A$.
We say that a subset $B$ of hyperedges $E$ is a connected cover of $A$ iff $B$ contains $A$ and every connected component $B_i$ of $B$ intersects $A$, i.e., $B\supseteq A$ and $B_i \cap A \neq \emptyset$.
We denote by $\cl(s,A)$ the collection of connected covers of $A$ that contain $s$ hyperedges.
Lastly, for any $A,B\subseteq E$ we define $A^{\cap B}$ to be the union of all connected components $A_i$ of $A$ that intersect $B$, namely
\begin{equation}
A^{\cap B} = \bigcup_{i: A_i \cap B \neq \emptyset} A_i.
\end{equation}
Note that if $A\supseteq B$, then $A^{\cap B}$ is a connected cover of $B$ of size $s=|A^{\cap B}|$, i.e., $A^{\cap B}\in \cl(s,A)$.

The following lemma will be useful in bounding the number of connected covers.
\begin{lemma}[connected covers]
\label{lemma_connected_cover}
Let $A\subseteq E$ be a subset of hyperedges in a finite hypergraph \mbox{$G = (V,E)$}.
Then, the number of connected covers of $A$ that contain $s$ hyperedges satisfies
\begin{equation}
|\cl(s,A)| \leq \frac{(\e z)^s}{\e z^{|A|}},\quad\quad
z = \max_{e \in E} \sum_{v\in e}(\deg v - 1),
\end{equation}
where $\e$ is Euler's number and $\deg v$ denotes the degree of a vertex $v$ of $G$.
\end{lemma}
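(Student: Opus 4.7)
I would prove the bound by a Peierls-type tree-encoding argument in the auxiliary graph $H$ whose vertices are the hyperedges of $G$ and whose edges connect any two hyperedges that share a common vertex of $G$. The maximum degree of $H$ is at most $z$: for $e \in E$, the number of other hyperedges $e' \neq e$ with $e \cap e' \neq \emptyset$ is bounded by $\sum_{v \in e}(\deg v - 1) \leq z$. Connected subsets of $E$ in the sense defined in the paper correspond exactly to connected vertex subsets of $H$, so counting connected covers of $A$ reduces to counting connected subgraphs of $H$ containing $A$ whose components each meet $A$.

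Given $B \in \cl(s,A)$, I would decompose $B$ into its connected components $B_1, \ldots, B_k$ in $H$. By definition of a connected cover, each $B_i$ meets $A$ in a nonempty $A_i$, and the $A_i$ partition $A$; set $n_i = |B_i|$ and $m_i = |A_i|$, so that $\sum_i n_i = s$ and $\sum_i m_i = |A|$. For each component I would pick a spanning tree $T_i$ of $B_i$ in $H$, root it at a canonical element of $A_i$, and encode $T_i$ by its depth-first contour, a walk in $H$ of length $2(n_i - 1)$ with at most $z$ choices per step. Combining the raw walk count with the Catalan-style balanced-parenthesis constraint and a Stirling-type refinement yields the classical Dobrushin bound of the form $(\e z)^{n_i - 1}$ on the number of size-$n_i$ connected subsets of $H$ containing a prescribed root.

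To incorporate all $m_i$ anchors from $A_i$ rather than just one, I would observe that the $m_i - 1$ tree edges needed to bridge the elements of $A_i$ together can be charged against the $z^{m_i}$ denominator in the claimed bound, since these connections do not require extension into new hyperedges of $G$. This gives a per-component estimate of the form $(\e z)^{n_i}/(\e z^{m_i})$, and multiplying over $i$ with $\sum_i n_i = s$ and $\sum_i m_i = |A|$ yields
\[
|\cl(s,A)| \leq \frac{(\e z)^{s}}{\e z^{|A|}},
\]
after a bookkeeping sum over the partitions of $A$ into the blocks $A_i$, which produces only subleading factors absorbed in the $\e^{s-1}$ slack of the bound.

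The main obstacle is extracting the sharp constant $\e$ from the tree-enumeration step. A naive DFS encoding yields only the looser $(4z)^{n-1}$ bound via $C_n \leq 4^n$; reaching $(\e z)^{n-1}$ requires either a refined encoding that converts counts of plane trees into counts of rooted subgraphs using $n! \geq (n/\e)^n$, or an application of the Dobrushin polymer-cluster lemma. A secondary subtlety is ensuring the partition sum over multi-component configurations does not introduce an extra factor larger than $\e^{k-1}$, so that it fits within the available $\e^{s-1}$ budget. This is controlled by the fact that each component is rooted at a distinct element of $A$, so the combinatorial overhead is at most $|A|!/(|A|-k)!$, which is absorbed after pairing with the Stirling estimate used in the single-component bound.
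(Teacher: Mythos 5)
Your reduction is exactly the one the paper uses: you build the auxiliary graph on the hyperedge set (the paper calls it $G'$), observe that its maximum degree is at most $z = \max_{e\in E}\sum_{v\in e}(\deg v - 1)$, and translate connected covers of $A$ in $G$ into connected clusters containing $A'$ in that graph. The difference is what happens next: the paper stops there and cites the cluster-counting bound $\e^{|A|-1}(\e z)^{s-|A|}$ (Lemma~5 of Ref.~[Aliferis2007]), which equals the stated $(\e z)^s/(\e z^{|A|})$, whereas you attempt to reprove that counting lemma from scratch via a per-component tree-encoding argument. Your route is self-contained, which is a genuine plus, but the place where it is loosest is precisely the step the citation handles: the combination over multi-component configurations. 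Multiplying per-component bounds of the form $(\e z)^{n_i}/(\e z^{m_i})$ gives $(\e z)^s/(\e^k z^{|A|})$ for a \emph{fixed} partition of $A$ into anchor blocks, but one must then control the sum over partitions, and the asserted overhead of $|A|!/(|A|-k)!$ is not obviously absorbed (e.g.\ at $s=|A|$ the target is $\e^{|A|-1}$ while the number of partitions of $A$ grows like a Bell number; the resolution is that each cover realizes only its own induced partition, so the naive product-over-partitions overcounts, and this needs to be argued rather than waved through). If you want a self-contained proof, the cleaner path is the one used in the cited lemma: grow a single rooted forest anchored on all of $A$ simultaneously, so that the $\e^{|A|-1}$ and $(\e z)^{s-|A|}$ factors come out of one unified enumeration instead of a component-wise product followed by a partition sum.
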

\begin{proof}
Let $G' = (V',E')$ be a graph, whose vertices correspond to the hyperedges of $G$, i.e., $V' = E$.
We connect two vertices in $G'$ with an edge iff the corresponding hyperedges in $G$ are incident to the same vertex.
Note that maximum degree of the graph $G'$ is then $z$.
Let $A'\subseteq V'$ be the subset of vertices corresponding to the subset of hyperedges $A\subseteq E$.
Then, the task of counting the number of connected covers of $A$ in $G$ is equivalent to the cluster counting problem in Lemma~5 in Ref.~\cite{Aliferis2007} for $A'$ in $G'$.
We thus obtain the upper bound $\e^{|A|-1}(\e z)^{s-|A|}$.
\end{proof}

\subsection{3D~STC lattice}
\label{sec_stclattice}

In the rest of this section we focus our attention on the 3D~STC defined on some lattice $\mathcal{L}$.
Using the notions from Sec.~\ref{sec_flux}, let $G_\text{mea}$ and $G_\text{qub}$ be the measurement and qubit graphs associated with $\mathcal{L}$, and let $V'_\text{mea}$ and $V'_\text{qub}$ be the sets of boundary vertices of $G_\text{mea}$ and $G_\text{qub}$, respectively.
Recall that $V'_\text{mea}\supseteq V'_\text{qub}$ as $V_\text{mea}\supseteq V_\text{qub}$.
When we talk about connected components and connected covers we always consider them within a hypergraph $G_\mathcal{L} = (V_\mathcal{L},E_\mathcal{L})$, which is obtained by taking the union of $G_\text{mea}$ and $G_\text{qub}$, and then removing all the boundary vertices, i.e.,
\begin{equation}
G_{\mathcal L} = (V_\text{mea}\setminus V'_\text{mea}, \{e\setminus V'_\text{mea} {\mathrel{|}} e\in E_\text{mea}\cup E_\text{qub}\}).
\end{equation}
Note that $G_{\mathcal L}$ is a hypergraph since every edge in $E_\text{mea}\cup E_\text{qub}$ that is incident to exactly one boundary vertex in $V'_\text{mea}$ is replaced by the other vertex it is incident to.
The hyperedges $E_{\mathcal L}$ correspond to the measurement outcomes and Pauli $X$ errors.
The vertices $V_\mathcal{L}$ and $V_\text{qub}\setminus V'_\text{qub}$ correspond to, respectively, locations where the Gauss law has to be satisfied and the $Z$-type stabilizers.
Let $\Delta_\mathcal{L}$ denote the maximum vertex degree of $G_{\mathcal L}$, i.e., 
\begin{equation}
\Delta_\mathcal{L} = \max_{v\in V_\mathcal{L}} \deg v.
\end{equation}
Let $\Delta_\text{qub}$ and $\Sigma_\text{qub}$ be the maximum vertex degree for interior vertices of $G_\text{qub}$ and the sum of the degrees of all the boundary vertices in $V'_\text{qub}$, namely
\begin{eqnarray}
\Delta_\text{qub} &=& \max_{v\in V_\text{qub}\setminus V'_\text{qub}} \deg v,\\
\Sigma_\text{qub} &=& \sum_{v\in V_\text{qub}'} \deg v.
\end{eqnarray}

In order to prove the existence of a non-zero threshold for the 3D~STC, we need to assume that we have a family of lattices parametrized by a positive integer $L$.
The integer $L$ provides a lower bound on the shortest distance between any two different boundary vertices in the qubit graph $G_\text{qub}$ associated with the lattice $\mathcal L$.
For such a lattice family, we require that $\Delta_\mathcal{L}$ and $\Delta_\text{qub}$ are bounded by some constants, and that $\Sigma_\text{qub}$ scales polynomially in $L$.
Moreover, in order to prove the main result, (single-shot QEC) Theorem~\ref{thm_main}, we need the following assumption about the lattice $\mathcal L$ and the associated hypergraph $G_\mathcal{L}$ to hold.
\begin{itemize}
\item {\bf Confinement of the flux.---}For any $Z$-type gauge flux $\phi$ there exists a Pauli $X$ operator $M$, such that the stabilizer syndromes of $M$ and $\phi$ are the same, i.e., $\partial_S M = \delta_S \phi$, and $M$ is of comparable size to $\phi$, i.e., 
\begin{equation}
|M| \leq c_\text{flu} |\phi|.
\end{equation}
\end{itemize}
We remark that the assumption about confinement of the flux is analogous to the confining property specified in Definition~16 in Ref.~\cite{Bombin2015}.
Moreover, for the lattice $\cub$ of linear length $L$, which we describe in Sec.~\ref{sec_glance}, the constants are
\begin{equation}
\Delta_{\mathcal L}=20, \quad \Delta_\text{qub} = 12, \quad \Sigma_\text{qub} = 4 L^2 + 6 L + 2, \quad c_\text{flu} = 1/4. 
\end{equation}

\subsection{Error correction with and without measurement errors}

We now revisit the question of how to perform QEC, which we already discussed at length in Sec.~\ref{sec_ss_decoding}.
Let us start with the case of ideal error correction, when there are no measurement errors.
To simplify the notation, in the rest of this section we write $\partial$ to denote both $\delta_S$ and $\partial_S$, however it should be clear from the context what we mean.
First, to diagnose Pauli $X$ errors we measure the set of $Z$-type gauge operators in the gauge group $\mathcal G$ of the 3D~STC, which correspond to the edges of the measurement graph $G_\text{mea}$.
Let $\phi$ be the flux, i.e., the set of gauge operators returning $-1$ measurement outcomes.
Knowing the flux $\phi$ we can infer the measurement outcomes for the set of $Z$-type stabilizer operators in the stabilizer group $\mathcal S$ of the 3D~STC, which correspond to the vertices of the qubit graph $G_\text{qub}$.
Namely, if $\sigma$ denotes the stabilizer syndrome, i.e., the set of stabilizer operators returning $-1$ measurement outcomes, then we have $\sigma = \partial\phi$.
Lastly, knowing the stabilizer syndrome $\sigma$ and using the ideal MWPM decoder we can find the minimum-weight Pauli $X$ recovery operator $R_\sigma$, such that $\partial R_\sigma =  \sigma$.

We can succinctly describe ideal error correction by the following channel
\begin{equation}
\label{eq_decoder}
\R_0 = \left\{ R_{\partial\phi} \Pi^\mathcal{G}_\phi \right\}_{\phi}
= \bigg\{ R_{\sigma} \sum_{\phi: \partial\phi = \sigma} \Pi^\mathcal{G}_\phi \bigg\}_{\sigma}
= \left\{ R_{\sigma} \Pi^\mathcal{S}_\sigma \right\}_{\sigma},
\end{equation}
where $\Pi^\mathcal{G}_\phi$ and $\Pi^\mathcal{S}_\sigma = \sum_{\phi : \partial\phi = \sigma} \Pi^\mathcal{G}_\phi$ are the projection operators onto the subspaces with the given flux $\phi$ and stabilizer syndrome $\sigma$, respectively.
We refer to $\R_0$ as the ideal MWPM decoding channel.

Let $\N = \{ \sqrt{p_\N(N)} N \}_N\in\mathbb P^X$ and consider the following composite channel
\begin{equation}
\label{eq_composite}
\R_0 \circ \N \circ \Pi^\mathcal{S}_0 =
\left\{ \sqrt{p_\N(N)} R_\sigma \Pi^\mathcal{S}_\sigma N \Pi^\mathcal{S}_0\right\}_{\sigma, N} =
\left\{ \sqrt{p_\N(N)} R_{\partial N} N \Pi^\mathcal{S}_0\right\}_{N},
\end{equation}
where $\Pi^\mathcal{S}_0$ is the projection operator onto the 3D~STC code subspace, i.e., the subspace with the trivial stabilizer syndrome $\sigma = 0$.
Note that in Eq.~\eqref{eq_composite} we use the fact that for any two stabilizer syndromes $\sigma$ and $\sigma'$, and any Pauli $X$ operator $N$ we have
 $\Pi^\mathcal{S}_{\sigma} \Pi^\mathcal{S}_{\sigma'} = \delta_{\sigma,\sigma'} \Pi^\mathcal{S}_{\sigma'}$
 and $\Pi^\mathcal{S}_\sigma N = N\Pi^\mathcal{S}_{\sigma+\partial N}$, where $\delta_{\sigma,\sigma'}$ denotes the Kronecker delta.
Then, we define 
\begin{equation}
\fail(\N) = \sum_{N : R_{\partial N} N \not\in \mathcal{G}} p_\N (N)
\end{equation}
to be the probability that the channel $\R_0 \circ \N \circ \Pi^\mathcal{S}_0$ implements any non-trivial logical Pauli operator.
We refer to $\fail(\N)$ as the failure probability of the ideal MWPM decoding channel for the Pauli $X$ channel $\N$ or, in short, the failure probability for $\N$.
Moreover, we show the following lemma about the ideal MWPM decoding channel $\R_0$.

\begin{lemma}(decoding failure)
\label{lemma_threshold}
Let $\N = \{ \sqrt{p_\N (N)} N \}_N$ be a $\tau$-bounded Pauli $X$ channel.
If $\tau < \tau^*$, where $\tau^* = (2(\Delta_\mathrm{qub}-1))^{-2}$, then the failure probability of the ideal MWPM decoding channel $\R_0$ for $\N$ satisfies
\begin{equation}
\label{eq_decoderfailure}
\fail(\N) \leq f(\tau),\quad\quad
f(\tau) = \frac{\Sigma_\mathrm{qub} \tau^{*}} {(\tau^{*})^{1/2}-\tau^{1/2}}\left(\frac{\tau}{\tau^*}\right)^{L/2}.
\end{equation}
\end{lemma}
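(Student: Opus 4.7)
The plan is a Peierls-type cluster expansion argument, centered on a self-avoiding walk (SAW) witness of the failure event.

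\emph{Step 1: Extract a SAW witness.} A failure at error $N$ with recovery $R=R_{\partial N}$ means $RN\notin\mathcal G$, so $R+N$ represents a non-trivial logical in $\ker\partial_S$. Viewing $R+N$ as a subset of edges of the qubit graph $G_\text{qub}$, it must contain a SAW $P$ with endpoints in $V'_\text{qub}$ and length $|P|\ge D\ge L$. The minimum-weight property of $R$ localizes to $P$: the swapped operator $R+(N\cap P)+(R\cap P)$ has the same stabilizer syndrome as $R$, because $\partial(N\cap P+R\cap P)=\partial P$ vanishes modulo $V'_\text{qub}$. Minimality of $R$ then forces $|N\cap P|\ge|P|/2\ge L/2$.

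\emph{Step 2: Probability that a fixed SAW is heavy.} For a fixed $P$ of length $s$, the $\tau$-boundedness of $\N$ gives $\mathbb{E}[2^{|N\cap P|}]\le\sum_{A\subseteq P}\tau^{|A|}\,2^{|A|}=(1+2\tau)^s$, and optimizing the resulting Chernoff exponent yields $\Pr[|N\cap P|\ge s/2]\le(4\tau)^{s/2}$.

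\emph{Step 3: Count anchored SAWs and sum.} I would anchor each SAW at a boundary-incident edge $e_0$; there are at most $\Sigma_\mathrm{qub}$ such anchors. For each anchor, the number of length-$s$ SAWs through $e_0$ is at most $(\Delta_\mathrm{qub}-1)^{s-1}$ by non-backtracking. A union bound gives
\begin{equation}
\fail(\N)\le\Sigma_\mathrm{qub}\sum_{s\ge L}(\Delta_\mathrm{qub}-1)^{s-1}(4\tau)^{s/2}=\frac{\Sigma_\mathrm{qub}}{\Delta_\mathrm{qub}-1}\sum_{s\ge L}\bigl(\sqrt{\tau/\tau^*}\bigr)^{s},
\end{equation}
with $\tau^*=(2(\Delta_\mathrm{qub}-1))^{-2}$. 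For $\tau<\tau^*$ the geometric sum equals $(\tau/\tau^*)^{L/2}/(1-\sqrt{\tau/\tau^*})$, and using the identity $\sqrt{\tau^*}/(\Delta_\mathrm{qub}-1)=2\tau^*$ rearranges this expression into $f(\tau)$.

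\emph{Main obstacle.} The subtle step is Step 1: isolating a SAW witness and pushing the minimum-weight property of $R$ through to it. The global inequality $|R|\le|N|$ must be sharpened to a local statement along $P$, which requires identifying $P$ as a boundary-to-boundary path so that the symmetric difference $(N\cap P)\triangle(R\cap P)$ has trivial syndrome in $C_S$, hence preserves $\partial R$ under the swap. A secondary issue is matching the constant: naively using Lemma~\ref{lemma_connected_cover} with generic connected subgraphs as witnesses inflates $\tau^*$ by a factor of $\sim\e^2$, so restricting witnesses to SAWs — and consequently using the non-backtracking count $(\Delta_\mathrm{qub}-1)^{s-1}$ rather than a lattice-animal count $(\e z)^{s-1}$ — is essential to reach the stated threshold.
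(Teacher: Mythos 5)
Your proposal is correct and follows essentially the same route as the paper: both arguments extract a boundary-to-boundary simple path $\lambda \subseteq \supp(R_{\partial N}N)$ of length at least $L$, use minimality of $R_{\partial N}$ restricted to $\lambda$ (via the swap $N\cap\lambda \leftrightarrow R_{\partial N}\cap\lambda$, whose relative boundary vanishes) to force $|N\cap\lambda|\geq|\lambda|/2$, and then combine $\tau$-boundedness with the anchored non-backtracking path count $\Sigma_\mathrm{qub}(\Delta_\mathrm{qub}-1)^{s-1}$ and a geometric sum. The only cosmetic difference is that the paper bounds the probability of a heavy path by a direct union bound over subsets $N'\subseteq\lambda$ with $|N'|\geq s/2$ rather than your Chernoff optimization; both yield $2^{O(s)}\tau^{s/2}$, and your constants land at $2f(\tau)$ rather than $f(\tau)$, a harmless discrepancy.
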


We remark that for the family of lattices considered in Sec.~\ref{sec_stclattice} (decoding failure) Lemma~\ref{lemma_threshold} immediately implies the existence of $\tau^*>0$, such that $\lim_{L\rightarrow\infty} \fail(\N) = 0$ for $\tau < \tau^*$, i.e.,
that there exists a non-zero threshold $\tau^*$ for the 3D~STC with the ideal MWPM decoder and the $\tau$-bounded Pauli $X$ channel $\N$.
Thus, (decoding failure) Lemma~\ref{lemma_threshold} can be viewed as a special case of more general results establishing non-zero thresholds for certain families of quantum low-density parity-check codes in Refs.~\cite{Kovalev2013,Gottesman2013}.
We also note that for the 3D~STC defined on the lattice $\cub$, we have $\tau^* = 1/484 \approx 0.21\%$, which is approximately one-fifth of the numerical value (the $t=0$ data-point in Fig.~\ref{fig_threshold_all}(a)) observed in simulations.

\begin{proof}
Let $\Lambda(i)$ denote the set of simple paths\footnote{
A simple path is a path in a graph that does not have repeating vertices.}
of length $i$ in the qubit graph $G_\text{qub}$, such that each path contains exactly two different vertices in $V'_\text{qub}$, where it starts and ends.
Roughly speaking, $\Lambda(i)$ contains paths connecting different boundaries of $\mathcal L$.
By definition, $\Lambda(i) = \emptyset$ for any $i < L$.
Let us now assume that $i\geq L$.
Since any path $\lambda\in\Lambda(i)$ starts at some boundary vertex in $G_\text{qub}$, the first edge of $\lambda$ can be chosen in $\Sigma_\text{qub}$ ways.
Moreover, every following edge of $\lambda$ can be chosen in at most $\Delta_\text{qub}-1$ ways, as the path $\lambda$ cannot backtrack.
Thus, we arrive at the following upper bound
\begin{equation}
\label{eq_ncp}
|\Lambda(i)| \leq \Sigma_\text{qub} (\Delta_\text{qub}-1)^{i-1}.
\end{equation}

We proceed by first proving the following inequality
\begin{equation}
\label{eq_fail_uneq1}
\fail(\N) = \sum_{N: R_{\partial N} N \not\in\mathcal{G}} p_\N (N)
\leq \sum_{i \geq L} \sum_{\lambda\in \Lambda(i)} 
\sum_{N' \subseteq \lambda: |N'|\geq \frac{i}{2}} \sum_{N\supseteq N'} p_\N (N),
\end{equation}
where in the summation on the right-hand side we treat $N$ and $N'$ as the subsets of qubits supporting the corresponding Pauli $X$ operators rather than those operators themselves.
Let $N$ be a Pauli $X$ error, for which $p_\N (N)$ appears on the left-hand side of the inequality.
Since $R_{\partial N} N \not\in\mathcal{G}$, the support of $R_{\partial N} N$ is guaranteed to contain some path $\lambda\in\Lambda(i)$ for some $i\geq L$, i.e., $\lambda\subseteq R_{\partial N} N$.
We now show that the term $p_\N (N)$ appears on the right-hand side for $N' = N \cap \lambda$.
Let $R' = R_{\partial N} \cap \lambda$.
Then, $\partial N' = \partial R'$ and $|\lambda| = |N'| + |R'|$.
Since $R_{\partial N}$ is the minimum-weight recovery operator with the stabilizer syndrome $\partial N$, we also have $|R'| \leq |N'|$; otherwise, $N'R'R_{\partial N}$ would be a recovery operator with the stabilizer syndrome $\partial N$ of weight smaller than $R_{\partial N}$.
Thus, $|N'| \geq |\lambda|/2 = i/2$.
This, in turn, implies that the term $p_\N(N)$ appears on the right-hand side and, subsequently, establishes the inequality in Eq.~\eqref{eq_fail_uneq1}, as all the terms in the inequality are non-negative.

Using the fact that $p_\N$ is $\tau$-bounded and Eq.~\eqref{eq_ncp} we obtain
\begin{eqnarray}
\sum_{i \geq L} \sum_{\lambda\in \Lambda(i)} 
\sum_{N' \subseteq \lambda: |N'|\geq \frac{i}{2}} \sum_{N\supseteq N'} p_\N(N)
&\leq& \sum_{i \geq L} \sum_{\lambda\in \Lambda(i)} \sum_{N' \subseteq \lambda: |N'|\geq \frac{i}{2}} \tau^{|N'|}
\leq \sum_{i \geq L} \sum_{\lambda\in \Lambda(i)} 2^{i-1} \tau^{i/2}\\
&\leq& \sum_{i \geq L} \Sigma_\text{qub} (\Delta_\text{qub}-1)^{i-1} 2^{i-1} \tau^{i/2}\\
&\leq& \frac{\Sigma_\text{qub} \left(2(\Delta_\text{qub}-1)\tau^{1/2}\right)^L}
{2(\Delta_\text{qub}-1)\left(1- 2(\Delta_\text{qub}-1)\tau^{1/2}\right)},
\end{eqnarray} 
leading to $\fail(\N) \leq f(\tau)$.
\end{proof}

Now, we consider the case of error correction with imperfect measurements.
As before, we diagnose Pauli $X$ errors by measuring $Z$-type gauge operators.
This time, however, we do not learn the flux $\phi$; rather, we register the measurement outcome $\zeta = \phi + \mu$, where $\mu$ is the measurement error, i.e., the set of gauge operators whose corresponding measurement outcomes have flipped.
We assume that the measurement error $\mu$ is independent of the flux $\phi$.
We proceed by finding the minimum-weight measurement error estimate $\hat\mu$, such that $\partial_R \hat\mu = \partial_R \zeta$.
Knowing the measurement outcome $\zeta$ and the measurement error estimate $\hat\mu$ we can infer an estimate $\hat\sigma$ of the stabilizer syndrome $\sigma = \partial\phi$, namely $\hat\sigma = \partial(\phi + \mu + \hat\mu)$.
Lastly, using the ideal MWPM decoder we can find the minimum-weight Pauli $X$ recovery operator $R_{\hat\sigma}$, such that $\partial R_{\hat\sigma} = \hat\sigma$.

We can capture error correction with imperfect measurements with the following channel
 \begin{eqnarray}
\R &=& \left\{ \sqrt{p_\R(\mu)}R_{\partial(\phi+\fix\mu)} \Pi^\mathcal{G}_\phi \right\}_{\mu,\phi}\\
&=& \bigg\{ \sqrt{p_\R(\mu)} R_{\sigma+\partial(\fix\mu)} \sum_{\phi: \partial\phi = \sigma} \Pi^\mathcal{G}_\phi \bigg\}_{\mu,\sigma}
= \left\{ \sqrt{p_\R(\mu)} R_{\sigma+\partial(\fix\mu)} \Pi^\mathcal{S}_\sigma \right\}_{\mu,\sigma},
\end{eqnarray}
where $p_\R(\mu)$ is the probability that the measurement error $\mu$ occurs and $\sum_\mu p_\R(\mu) = 1$.
For brevity, we also refer to $\R$ as the single-shot MWPM decoding channel.

Lastly, we introduce $\mathbb{R}_\eta$ to be a class of channels consisting of all the single-shot MWPM decoding channels, for which the measurement error probability distributions are $\eta$-bounded, i.e.,
\begin{equation}
\mathbb{R}_\eta = \{ \text{$\R$ is a single-shot MWPM decoding channel}\mathrel{|} \text{$p_\R$ is $\eta$-bounded} \}.
\end{equation}
Note that the ideal MWPM decoding channel $\R_0$ can be viewed as the single-shot MWPM decoding channel with $p_{\R_0} (\mu) = \delta_{\mu,0}$.
In that case, the measurement error probability distribution $p_{\R_0}$ is $0$-bounded and thus $\R_0 \in \mathbb{R}_0$.

\subsection{More on channels}

Let $\N = \{ \sqrt{p_\N(N)} N \}_N\in\mathbb P^X$ and $\R_0$ be the ideal MWPM decoding channel defined in Eq.~\eqref{eq_decoder}.
We equivalently express $\N$ in the following decoder-dependent form
\begin{equation}
\N = \{ \sqrt{p_\N(N)} R_{\partial N} (R_{\partial N} N) \}_N.
\end{equation}
Note that the recovery operator $R_{\partial N}$ has the same stabilizer syndrome as $N$, i.e., 
$\partial R_{\partial N} = \partial N$, whereas $R_{\partial N} N$ forms a logical (possibly trivial) Pauli operator, i.e., $R_{\partial N} N \in \mathcal{Z}(\mathcal S)$.
Then, we define the $\R_0$-dependent channel $\overline\N$ for the Pauli $X$ channel $\N$ as follows
\begin{equation}
\overline\N = \{ \sqrt{p_\N(N)} R_{\partial N} \}_N
=\left \{ \sqrt{p_{\overline\N}(N')} N' \right\}_{N'},
\end{equation}
where we group the same Pauli $X$ terms and introduce
\begin{equation}
p_{\overline\N}(N') = \sum_{N : R_{\partial N} = N'} p_\N(N).
\end{equation}
By definition, we have $\fail(\overline\N) = 0$.

For completeness, we reprove the following lemma.
\begin{lemma}[Lemma 1 in Ref.~\cite{Bombin2015}]
\label{lemma_failure}
For any two $\N,\M\in \mathbb P^X$ we have
\begin{eqnarray}
\label{eq_doublebar}
&\overline{\N\circ\M} = \overline{\overline\N\circ\overline\M},&\\
\label{eq_failfirst}
&\fail(\N\circ\M) \leq \fail(\N) + \fail(\M) + \fail(\overline{\N}\circ\overline{\M}),&\\
\label{eq_failsecond}
&\fail(\overline\N\circ\overline\M) \leq \fail(\N) + \fail(\M) + \fail(\N\circ\M).&
\end{eqnarray}
\end{lemma}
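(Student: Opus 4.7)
The plan is to exploit the algebraic identity that relates the four Pauli operators appearing in the failure conditions, namely $R_{\partial N}N$, $R_{\partial M}M$, $R_{\partial(NM)}NM$, and $R_{\partial(NM)}R_{\partial N}R_{\partial M}$. Since all four are Pauli $X$ operators and commute, and since $\partial$ is linear, one has $\partial(R_{\partial N}R_{\partial M}) = \partial N + \partial M = \partial(NM)$, so both $R_{\partial N}N$ and $R_{\partial M}M$ are elements of $\mathcal Z(\mathcal S)$, and so is their product $(R_{\partial N}N)(R_{\partial M}M)$. The key identity I will use is
\begin{equation}
(R_{\partial(NM)}NM) \cdot (R_{\partial(NM)}R_{\partial N}R_{\partial M}) = (R_{\partial N}N)(R_{\partial M}M),
\end{equation}
which holds because the two copies of $R_{\partial(NM)}$ cancel in $\mathbb{F}_2$ and the remaining operators commute.

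First, I would establish Eq.~\eqref{eq_doublebar} by directly expanding both sides according to the definition of $\overline{(\cdot)}$. Writing $\N\circ\M$ in Kraus form as $\{\sqrt{p_\N(N)p_\M(M)}\, NM\}_{N,M}$, the left-hand side replaces each $NM$ by $R_{\partial(NM)}$. For the right-hand side, $\overline\N\circ\overline\M$ has Kraus operators $R_{\partial N}R_{\partial M}$, and applying $\overline{(\cdot)}$ again replaces each such product by $R_{\partial(R_{\partial N}R_{\partial M})}$. The identity $\partial(R_{\partial N}R_{\partial M}) = \partial(NM)$ implies these recovery operators agree, and the probability weights match term by term, so the two channels are identical.

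Second, I would prove the failure inequalities by set-theoretic reasoning on the sample space of pairs $(N,M)$ weighted by $p_\N(N)p_\M(M)$. Introduce the events
\begin{equation}
A = \{R_{\partial N}N \notin \mathcal G\},\quad
B = \{R_{\partial M}M \notin \mathcal G\},\quad
C = \{R_{\partial(NM)}NM \notin \mathcal G\},\quad
D = \{R_{\partial(NM)}R_{\partial N}R_{\partial M} \notin \mathcal G\},
\end{equation}
so that $\fail(\N)=\Pr(A)$, $\fail(\M)=\Pr(B)$, $\fail(\N\circ\M)=\Pr(C)$, and $\fail(\overline\N\circ\overline\M)=\Pr(D)$; the last equality uses Eq.~\eqref{eq_doublebar} together with the fact that $\fail$ depends only on the channel obtained after grouping Kraus terms. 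The displayed identity above then shows that outside $A\cup B$ the two operators $R_{\partial(NM)}NM$ and $R_{\partial(NM)}R_{\partial N}R_{\partial M}$ differ by an element of $\mathcal G$, hence $C$ and $D$ coincide there. This gives the symmetric difference bound $C\triangle D \subseteq A\cup B$, which immediately yields $C\subseteq A\cup B\cup D$ and $D\subseteq A\cup B\cup C$, and the union bound delivers Eqs.~\eqref{eq_failfirst} and \eqref{eq_failsecond}.

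The only subtle point, and the one I would verify carefully, is that when taking the $\overline{(\cdot)}$ closure the Kraus operators must be grouped by coinciding Pauli content so that $\fail(\overline\N\circ\overline\M)$ really equals the probability that $R_{\partial(NM)}R_{\partial N}R_{\partial M}\notin\mathcal G$ rather than an artifact of how we index terms. This is where Eq.~\eqref{eq_doublebar} is essential: it lets me compute $\fail(\overline\N\circ\overline\M)$ using the un-grouped representation with Kraus label $(N,M)$ and probability $p_\N(N)p_\M(M)$, which is exactly the form needed to align with the events $A,B,C$. Once this bookkeeping is in place, the rest of the argument is a one-line union bound.
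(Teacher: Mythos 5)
Your proposal is correct and follows essentially the same route as the paper: Eq.~\eqref{eq_doublebar} is obtained by matching recovery operators via $\partial(R_{\partial N}R_{\partial M})=\partial(NM)$ after grouping Kraus terms by syndrome, and the two failure inequalities follow from the identity $R_{\partial(NM)}NM=(R_{\partial N}N)(R_{\partial M}M)(R_{\partial(NM)}R_{\partial N}R_{\partial M})$ together with the fact that $\mathcal G$ is a group, which is exactly the paper's argument. Your symmetric-difference phrasing $C\triangle D\subseteq A\cup B$ is a marginally tidier way to get both inequalities at once, but the underlying logic and bookkeeping are identical.
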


We remark that although we define the $\R_0$-dependent channel and the failure probability in terms of the ideal MWPM decoding channel, Lemma~\ref{lemma_failure} also holds for other decoding channels (with appropriately modified definitions).

\begin{proof}
To prove Eq.~\eqref{eq_doublebar}, we show that $p_{\overline{\N\circ\M}}(P) = p_{\overline{\overline\N\circ\overline\M}}(P)$ for any Pauli $X$ operator $P$.
Namely,
\begin{eqnarray}
p_{\overline{\N\circ\M}}(P) &=& \sum_{P': R_{\partial P'} = P} p_{\N\circ\M} (P')
= \sum_{\substack{N,M:\\ R_{\partial(NM)}= P}} p_{\N} (N)p_{\M} (M)\\
&=& \sum_{\substack{N',M':\\ R_{\partial(N'M')}= P}} 
\left(\sum_{N: R_{\partial N} = N'}p_{\N} (N)\right)\left(\sum_{M: R_{\partial M} = M'} p_{\M} (M)\right)
\label{eq_pauli_grouping}\\
&=& \sum_{\substack{N',M':\\ R_{\partial(N'M')}= P}} p_{\overline\N} (N') p_{\overline\M} (M')
= \sum_{P': R_{\partial P'} = P} p_{\overline\N\circ\overline\M} (P')
= p_{\overline{\overline\N\circ\overline\M}} (P),
\end{eqnarray}
where in Eq.~\eqref{eq_pauli_grouping} we group Pauli terms $N$'s and $M$'s according to their syndromes $\partial N'$ and $\partial M'$, respectively.
Then, we proceed by writing the failure probabilities for $\N\circ\M$ and $\overline\N\circ\overline\M$ as follows
\begin{eqnarray}
\fail(\N\circ\M) &=& \sum_{\substack{P: R_{\partial P} P\not\in\mathcal{G}}} p_{\N\circ\M} (P)
= \sum_{\substack{N,M:\\ R_{\partial(NM)}NM\not\in\mathcal{G}}} p_{\N} (N)p_{\M} (M),\\
\fail(\overline\N\circ\overline\M) 
&=& \sum_{\substack{P: R_{\partial P} P\not\in\mathcal{G}}} p_{\overline\N\circ\overline\M} (P)
= \sum_{\substack{N,M:\\ R_{\partial(NM)}R_{\partial N} R_{\partial M}\not\in\mathcal{G}}} p_{\N} (N)p_{\M} (M).
\end{eqnarray}
Since $R_{\partial(NM)} NM = (R_{\partial N} N) (R_{\partial M} M)(R_{\partial(NM)}R_{\partial M}R_{\partial M})$, then the condition $R_{\partial(NM)} NM\not\in\mathcal{G}$ implies that $R_{\partial N} N\not\in\mathcal{G}$ or $R_{\partial M} M\not\in\mathcal{G}$ or $R_{\partial(NM)}R_{\partial M}R_{\partial M}\not\in\mathcal{G}$, 
Thus,
\begin{eqnarray}
\fail(\N\circ\M) &=& \sum_{\substack{N,M:\\ R_{\partial(NM)}NM\not\in\mathcal{G}}} p_{\N} (N)p_{\M} (M)\\
&\leq&\sum_{\substack{N,M:\\ R_{\partial N}N\not\in\mathcal{G}}} p_{\N} (N)p_{\M} (M)
+ \sum_{\substack{N,M:\\ R_{\partial M }M\not\in\mathcal{G}}} p_{\N} (N)p_{\M} (M)
+ \mkern-18mu\sum_{\substack{N,M:\\ R_{\partial(NM)}R_{\partial N}R_{\partial M}\not\in\mathcal{G}}} 
\mkern-18mu p_{\N} (N)p_{\M} (M)\quad\quad\\
&=& \fail(\N) + \fail(\M) + \fail(\overline\N\circ \overline\M)
\end{eqnarray}
and we obtain the inequality in Eq.~\eqref{eq_failfirst}.
Similarly, the condition $R_{\partial(NM)} R_{\partial N}R_{\partial M}\not\in\mathcal{G}$ implies that $R_{\partial N} N\not\in\mathcal{G}$ or $R_{\partial M} M\not\in\mathcal{G}$ or $R_{\partial(NM)} NM\not\in\mathcal{G}$, and we can establish the inequality in Eq.~\eqref{eq_failsecond}.
\end{proof}

We say that two Pauli $X$ channels $\N$ an $\M$ are $\R_0$-equivalent and write $\N \sim \M$ iff their corresponding $\R_0$-dependent channels $\overline\N$ and $\overline\M$ are the same.
We emphasize that for any two $\R_0$-equivalent $\N,\M\in\mathbb{P}^X$ their resulting stabilizer syndrome distributions are the same, however terms in their Kraus representations might differ by some $X$-type gauge or logical Pauli operators.
Also note that $\N \sim \overline\N$, as $\overline{\overline{\N}} = {\overline{\N}}$.

Lastly, we introduce $\mathbb{N}_{\tau,\epsilon}$ to be a class comprising all the Pauli $X$ channels that are $\R_0$-equivalent to any $\tau$-bounded Pauli $X$ channel and whose failure probability is at most $\epsilon$, i.e.,
\begin{eqnarray}
\label{eq_localnoise}
\mathbb{N}_{\tau,\epsilon} = \{ \N \in \mathbb P^X
&\mathrel{|}& \textrm{$\fail(\N) \leq \epsilon$ and there exists a $\tau$-bounded $\M\in \mathbb P^X$ satisfying $\M\sim\N$}\}.
\end{eqnarray}

We finish this section with the following lemma about composing channels from the class $\mathbb{N}_{\tau,\epsilon}$.
\begin{lemma}[composition]
\label{lemma_composition}
For sufficiently small $\tau_1$ and $\tau_2$ there exist $\tau'$ and $\epsilon'$, such that the following inclusion holds
\begin{equation}
\mathbb{N}_{\tau_1,\epsilon_1}\circ\mathbb{N}_{\tau_2,\epsilon_2} \subseteq
\mathbb{N}_{\tau',\epsilon'}.
\end{equation}
In particular, we can have $\tau' = \tau_1+\tau_2$ and $\epsilon' = \epsilon_1+\epsilon_2+f(\tau_1)+ f(\tau_2)+ f(\tau_1+\tau_2)$, where $f(\cdot)$ is the upper bound specified in (decoding failure) Lemma~\ref{lemma_threshold}, provided that $\tau_1+\tau_2 < (2(\Delta_\mathrm{qub}-1))^{-2}$. 
\end{lemma}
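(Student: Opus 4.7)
The plan is as follows. Given $\N\in\mathbb{N}_{\tau_1,\epsilon_1}$ and $\M\in\mathbb{N}_{\tau_2,\epsilon_2}$, the definition of $\mathbb{N}_{\tau_i,\epsilon_i}$ immediately supplies $\tau_1$- and $\tau_2$-bounded Pauli $X$ channels $\N'\sim\N$ and $\M'\sim\M$. I will use $\N'\circ\M'$ as the witness that $\N\circ\M$ lies in $\mathbb{N}_{\tau',\epsilon'}$ with $\tau'=\tau_1+\tau_2$, which reduces the proof to two claims: (a) $\N\circ\M$ is $\R_0$-equivalent to a $(\tau_1+\tau_2)$-bounded channel, and (b) $\fail(\N\circ\M)\leq \epsilon_1+\epsilon_2+f(\tau_1)+f(\tau_2)+f(\tau_1+\tau_2)$.

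Claim (a) splits into two parts. The equivalence $\N\circ\M\sim\N'\circ\M'$ comes for free from Eq.~\eqref{eq_doublebar}, applied to both pairs: $\overline{\N\circ\M}=\overline{\overline\N\circ\overline\M}=\overline{\overline{\N'}\circ\overline{\M'}}=\overline{\N'\circ\M'}$, using $\overline\N=\overline{\N'}$ and $\overline\M=\overline{\M'}$ from the hypothesis. For the boundedness of $\N'\circ\M'$, I would fix a subset of qubits $B$ and exploit that the support of a Pauli $X$ product $NM$ lies inside $N\cup M$. Parameterising each contributing pair by $B_N=B\cap N$ and bounding the resulting over-count gives
\begin{equation}
\sum_{P\supseteq B} p_{\N'\circ\M'}(P)\leq \sum_{B_N\subseteq B}\bigg(\sum_{N\supseteq B_N} p_{\N'}(N)\bigg)\bigg(\sum_{M\supseteq B\setminus B_N} p_{\M'}(M)\bigg)\leq \sum_{B_N\subseteq B}\tau_1^{|B_N|}\tau_2^{|B\setminus B_N|}=(\tau_1+\tau_2)^{|B|},
\end{equation}
which is exactly the $(\tau_1+\tau_2)$-boundedness demanded by Eq.~\eqref{eq_taubounded} with prefactor $c=1$.

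For claim (b) I would chain Lemma~\ref{lemma_failure} with itself. Eq.~\eqref{eq_failfirst} applied to $(\N,\M)$ gives $\fail(\N\circ\M)\leq\fail(\N)+\fail(\M)+\fail(\overline\N\circ\overline\M)\leq\epsilon_1+\epsilon_2+\fail(\overline{\N'}\circ\overline{\M'})$, where the last step again uses $\overline\N=\overline{\N'}$ and $\overline\M=\overline{\M'}$. Eq.~\eqref{eq_failsecond} applied to $(\N',\M')$ controls the remaining term by $\fail(\N')+\fail(\M')+\fail(\N'\circ\M')$, and finally Lemma~\ref{lemma_threshold} bounds each of these three by $f(\tau_1)$, $f(\tau_2)$, and $f(\tau_1+\tau_2)$ respectively; the hypothesis $\tau_1+\tau_2<(2(\Delta_{\mathrm{qub}}-1))^{-2}$ is precisely what makes the last application of Lemma~\ref{lemma_threshold} legal.

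The hard part will be ensuring that the prefactor in the boundedness estimate comes out exactly $c=1$, rather than some dimension-dependent constant. A naive partition of $B$ into a disjoint union of $B \cap N$ and $B \setminus N$ would introduce an extra factor $2^{|B|}$, destroying the clean $(\tau_1+\tau_2)^{|B|}$ bound; using $B_N=B\cap N$ as an over-counting marker and then summing freely over all subsets $B_N\subseteq B$ is what forces the binomial identity and keeps $c=1$. Since this composition lemma will be iterated $t$ times in the proof of the main single-shot theorem, any prefactor worse than $1$ would compound catastrophically.
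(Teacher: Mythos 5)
Your proposal is correct and follows essentially the same route as the paper: the same witnesses $\N'\circ\M'$, the same use of Eq.~\eqref{eq_doublebar} for $\R_0$-equivalence, the same partition of $B$ by $B\cap N$ yielding the binomial identity $(\tau_1+\tau_2)^{|B|}$ with prefactor $1$, and the same chaining of Eqs.~\eqref{eq_failfirst} and \eqref{eq_failsecond} with (decoding failure) Lemma~\ref{lemma_threshold} for the failure bound.
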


\begin{proof}
Let $\N \in\mathbb N_{\tau_1,\epsilon_1}$ and $\M \in\mathbb N_{\tau_2,\epsilon_2}$.
Let $\N' = \{\sqrt{p_{\N'}(N')} N'\}_{N'}$ and $\M' = \{\sqrt{p_{\M'}(M')} M'\}_{M'}$ be two Pauli $X$ channels, which are $\R_0$-equivalent to $\N$ and $\M$ and which are $\tau_1$-bounded and $\tau_2$-bounded, respectively.
Using Eq.~\eqref{eq_doublebar} we have
\begin{equation}
\overline{\N'\circ\M'} = \overline{\overline{\N'}\circ\overline{\M'}} 
= \overline{\overline{\N}\circ\overline{\M}} = \overline{\N\circ\M},
\end{equation}
thus the composite channel $\N'\circ\M'$ is $\R_0$-equivalent to $\N\circ\M$.
We can straightforwardly show that $\N'\circ\M'$ is $(\tau_1+\tau_2)$-bounded, namely
\begin{eqnarray}
\sum_{P'\supseteq P} p_{\N'\circ \M'} (P')
&=& \sum_{\substack{N',M':\\ N'M' \supseteq P}} p_{\N'}(N') p_{\M'}(M')
= \sum_{P'\subseteq P} \sum_{\substack{N'\supseteq P', M' \supseteq P\setminus P':\\
N' \cap M' \cap P = \emptyset}} p_{\N'}(N') p_{\M'}(M')\\
&\leq& \sum_{P'\subseteq P} \left(\sum_{N'\supseteq P'} p_{\N'}(N')\right)
\left(\sum_{M' \supseteq P\setminus P'}  p_{\M'}(M')\right)\\
&\leq& \sum_{P'\subseteq P} \tau_1^{|P'|} \tau_2^{|P\setminus P'|} = (\tau_1 + \tau_2)^{|P|}.\quad\quad
\end{eqnarray}
This allows us to conclude that $\N\circ\M\in\mathbb{N}_{\tau',*}$.

Using the inequalities from Lemma~\ref{lemma_failure} and applying (decoding failure) Lemma~\ref{lemma_threshold} to $\N'$, $\M'$ and $\N'\circ\M'$ we obtain
\begin{eqnarray}
\fail(\N\circ\M) &\leq& \fail(\N) + \fail(\M) + \fail(\overline\N\circ\overline\M) 
= \epsilon_1 + \epsilon_2 + \fail(\overline{\N'}\circ\overline{\M'})\\
&\leq& \epsilon_1 + \epsilon_2 + \fail(\N') + \fail(\M') + \fail(\N'\circ\M')\\
&\leq& \epsilon_1 + \epsilon_2 + f(\tau_1) + f(\tau_2) + f(\tau_1+\tau_2),
\end{eqnarray}
which allows us to conclude that $\N\circ\M\in\mathbb{N}_{*,\epsilon'}$.
\end{proof}

\subsection{Single-shot QEC theorem}

\begin{theorem}
\label{thm_main}
For sufficiently small $\eta$ and $\tau$ there exist $\tau'$ and $\epsilon'$ satisfying
\begin{equation}
\lim_{\eta\rightarrow 0} \tau' = 0,\quad \lim_{L\rightarrow \infty} \epsilon' = \epsilon,
\end{equation}
such that the following inclusion holds
\begin{equation}
\mathbb{R}_\eta \circ \mathbb{N}_{\tau,\epsilon} \circ \Pi_0^{\mathcal S} 
\subseteq \mathbb{N}_{\tau',\epsilon'} \circ \Pi_0^{\mathcal S}.
\end{equation}
In particular, we can have 
$\tau' = 2^{2t-1}(\e\eta^{r})^t(\Delta_\mathcal{L}-1)^{t-1}$ and $\epsilon' = \epsilon+f(\tau)+f(\tau+\tau')$,
where $\e$ denotes Euler's number, $t= 1+c^{-1}_\mathrm{flu}$, $r = (2+2c_\mathrm{flu})^{-1}$ and $f(\cdot)$ is the upper bound specified in (decoding failure) Lemma~\ref{lemma_threshold}, provided that
$\eta < (\e - 1)^{1/r}(4\e^2(\Delta_\mathcal{L}- 1))^{-1/r}$ and $\tau + \tau' \leq (2(\Delta_\mathrm{qub}-1))^{-2}$.
\end{theorem}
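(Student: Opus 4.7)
The plan is to construct, for each $\R\in\mathbb{R}_\eta$ and $\N\in\mathbb{N}_{\tau,\epsilon}$, an effective Pauli channel $\widetilde{\N}$ realizing the action of $\R\circ\N$ on code states, and to verify that $\widetilde{\N}\in\mathbb{N}_{\tau',\epsilon'}$ by bounding its weight and failure probability separately. Fix a $\tau$-bounded witness $\N'\sim\N$, guaranteed by the definition of $\mathbb{N}_{\tau,\epsilon}$, and analyze $\R\circ\N'$ sample by sample. For each joint draw $(N,\mu)$ from $\N'$ and from the measurement-error distribution $p_\R$, the syndrome-estimation step outputs $\hat\mu$ of minimal weight subject to $\delta_R\hat\mu=\delta_R\mu$; the mismatch $\nu:=\mu+\hat\mu$ lies in $\ker\delta_R$ and is therefore itself a legitimate $Z$-type flux. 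The flux-confinement hypothesis from Sec.~\ref{sec_stclattice} provides a Pauli operator $M=M(\nu)$ with $\partial M=\delta_S\nu$ and $|M|\le c_\mathrm{flu}|\nu|$. Since $\hat\sigma=\partial N+\delta_S\nu=\partial(NM)$, the recovery $R_{\hat\sigma}$ coincides (up to $\mathcal{G}$) with $R_{\partial(NM)}$, so the residual Pauli after $\R$ is $R_{\partial(NM)}N=M\cdot R_{\partial(NM)}(NM)$. Defining $\widetilde{\N}$ as the Pauli channel with these residuals and probabilities $p_{\N'}(N)p_\R(\mu)$, one checks that $\widetilde{\N}$ is $\R_0$-equivalent to the composite $\N'\circ\M$, where $\M$ is the induced Pauli channel producing $M(\nu(\mu))$.

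The heart of the argument is showing that $\M$ is $\tau'$-bounded with the advertised $\tau'=2^{2t-1}(\e\eta^{r})^t(\Delta_\mathcal{L}-1)^{t-1}$. Fix a set $B$ of hyperedges of $G_\mathcal{L}$ and consider the event $M\supseteq B$. The confinement bound $|M|\le c_\mathrm{flu}|\nu|$, together with the geometric locality of the confinement construction, forces $\nu$ to cover a witness set of size at least $|B|/c_\mathrm{flu}$ within a connected cover of $B$. Moreover, for each connected component $C$ of $\nu$ one has $|\hat\mu\cap C|\le|\mu\cap C|$: since each component of a flux is itself a flux (so $\delta_R(\nu\cap C)=0$), replacing $\hat\mu\cap C$ by $\mu\cap C$ inside $\hat\mu$ preserves the relation syndrome and would strictly decrease weight if the inequality failed, contradicting the minimality of $\hat\mu$. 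Consequently the event $M\supseteq B$ forces $\mu$ to contain at least $|B|/(2c_\mathrm{flu})$ hyperedges inside a connected cover of $B$. Summing $\eta^{|\mu|}$ over connected covers via Lemma~\ref{lemma_connected_cover} with $z\le\Delta_\mathcal{L}-1$, and using the $\eta$-boundedness of $p_\R$, reproduces exactly the stated $\tau'$ with $t=1+c_\mathrm{flu}^{-1}$ and $r=(2+2c_\mathrm{flu})^{-1}$. Then Lemma~\ref{lemma_composition} yields that $\N'\circ\M$ is $(\tau+\tau')$-bounded, furnishing the required witness for $\widetilde{\N}\sim\N'\circ\M$.

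To bound the failure probability I would mimic the proof of Lemma~\ref{lemma_failure}: the event that the residual $R_{\partial(NM)}N$ lies outside the correctable class can be partitioned into failure of ideal decoding on $N$ alone, failure on $M$ alone, and failure on the combined error $NM$. Each piece is controlled by Lemma~\ref{lemma_threshold} applied to the $\tau$-, $\tau'$-, and $(\tau+\tau')$-bounded channels $\N'$, $\M$, and $\N'\circ\M$ respectively, and the $\epsilon$ contribution from $\N$'s own failure probability carries over. Combining these estimates yields $\epsilon'=\epsilon+f(\tau)+f(\tau+\tau')$ under the hypothesis $\tau+\tau'<(2(\Delta_\mathrm{qub}-1))^{-2}$. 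The limit statements $\lim_{\eta\to 0}\tau'=0$ and $\lim_{L\to\infty}\epsilon'=\epsilon$ then follow from the explicit forms: $\tau'$ is a positive power of $\eta$, and $f(\tau)$ decays as $(\tau/\tau^*)^{L/2}$ by Lemma~\ref{lemma_threshold}.

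The main obstacle I foresee is the connected-cover estimate in the second paragraph. Because $\hat\mu$ is a global nonlinear function of $\mu$ through the MWPM subroutine, the support of $M$ is not a sum of independent local contributions, and the bound must be obtained cluster-by-cluster: one converts each cluster of $M$ into a cluster of $\nu$ by a factor $c_\mathrm{flu}^{-1}$ (via confinement) and then into a cluster of $\mu$ by a factor of $1/2$ (via component-wise minimum-weight), chaining these compression factors with the cluster count $(\e(\Delta_\mathcal{L}-1))^{s}$ from Lemma~\ref{lemma_connected_cover} to reach precisely the exponents $t$ and $r$ stated in the theorem. A subtler prerequisite is the geometric locality of the confinement map $\nu\mapsto M(\nu)$, ensuring that a witness set in $\nu$ of size $|B|/c_\mathrm{flu}$ actually lies inside a connected cover of $B$; this must be verified directly from the $\cub$ construction of Sec.~\ref{sec_model}, and the remaining combinatorial bookkeeping to match the stated $\tau'$ exactly is where the bulk of the work will lie.
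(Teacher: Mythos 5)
Your overall architecture matches the paper's: reduce $\R\circ\N\circ\Pi_0^{\mathcal S}$ to a residual Pauli channel, exhibit a bounded $\R_0$-equivalent witness via a connected-cover count combining flux confinement with minimality of the MWPM estimates, and control the failure probability with Lemmas~\ref{lemma_failure} and~\ref{lemma_threshold}. However, there are two genuine problems. First, the claimed equivalence $\widetilde\N\sim\N'\circ\M$ is false. The residual $P=R_{\partial(NM)}N$ has stabilizer syndrome $\partial P=\partial(NM)+\partial N=\partial M$, so $\overline{\widetilde\N}=\{\sqrt{p_\R(\mu)}\,R_{\partial M(\nu(\mu))}\}_\mu$ depends only on the measurement error, whereas $\overline{\N'\circ\M}=\{\sqrt{p_{\N'}(N)p_\R(\mu)}\,R_{\partial N+\partial M}\}_{N,\mu}$ does not; the two distributions generically differ (e.g., in the weight they assign to the trivial recovery). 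The correct statement is $\widetilde\N\sim\M$ alone. This is not cosmetic: certifying the witness $\N'\circ\M$ as $(\tau+\tau')$-bounded would only place the residual in $\mathbb N_{\tau+\tau',\epsilon'}$, so the residual noise parameter would grow with $\tau$ and accumulate linearly under iteration --- exactly the conclusion the theorem must rule out. The paper instead takes the witness to be $\overline\M=\{\sqrt{p_\R(\mu)}\,R_{\partial(\fix\mu)}\}_\mu$ and shows it is $\tau'$-bounded with $\tau'$ depending only on $\eta$.

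Second, your boundedness argument for $\M=\{M(\nu(\mu))\}_\mu$ needs the confinement map $\nu\mapsto M(\nu)$ to be geometrically local and component-wise, which the confinement assumption of Sec.~\ref{sec_stclattice} does not supply: it only bounds $|M|$ by $c_{\mathrm{flu}}|\phi|$ globally. You flag this yourself, but it is not a bookkeeping detail --- without it the event $M\supseteq B$ does not force a large flux inside a connected cover of $B$, and the cluster count collapses. The paper sidesteps this entirely by bounding $\sum_{\mu:R_{\partial(\fix\mu)}\supseteq M}p_\R(\mu)$ directly: the connected cover is $W=\left((\fix\mu)\cup R_{\partial(\fix\mu)}\right)^{\cap M}$, which is automatically a connected cover because the recovery's endpoints lie on the flux, and the confinement operator enters only through the cardinality chain $|R'|\le|N|\le c_{\mathrm{flu}}|\fix{\mu'}|$ via minimality of $R_{\partial(\fix\mu)}$ --- its support is never used. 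Reworking your second paragraph around the min-weight recovery rather than the confinement witness removes both obstacles and recovers the stated $t$, $r$ and $\tau'$. (A minor further discrepancy: your three-way failure split would produce an extra $f(\tau')$ term; the paper avoids it because $\fail(\overline\M)=0$ by construction.)
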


Importantly, (single-shot QEC) Theorem~\ref{thm_main} says that the parameter $\tau'$ of the residual noise present in the system after performing one round of error correction with imperfect measurements can be made arbitrarily small only by reducing the parameter $\eta$ of the single-shot MWPM decoding channel, i.e.,
$\lim_{\eta\rightarrow 0} \tau' = 0$.
Moreover, the failure probability $\epsilon'$ for the residual noise increases by at most $f(\tau)+f(\tau+\tau')$ compared to $\epsilon$ and that increment can be made arbitrarily small just by increasing the linear size $L$ of the system, i.e.,
$\lim_{L\rightarrow \infty} \left(f(\tau)+f(\tau+\tau')\right)= 0$.
This establishes that the single-shot MWPM decoder is fault-tolerant and, subsequently, single-shot QEC is possible with the 3D~STC.

\begin{proof}
Let $\R = \{ \sqrt{p_\R(\mu)}R_{\partial(\phi+\fix\mu)} \Pi^\mathcal{G}_\phi\}_{\mu,\phi} \in \mathbb{R}_\eta$ and $\N = \{\sqrt{p_\N(N)} N\}_N \in \mathbb{N}_{\tau,\epsilon}$.
Let $\M\in\mathbb{P}^X$ be defined as follows
\begin{equation}
\M = \{ \sqrt{p_\R(\mu) p_\N(N)} R_{\partial N+\partial(\fix\mu)}  N\}_{\mu,N} = \{ \sqrt{p_\M(M) }M \}_{M},
\end{equation}
where we group the same Pauli terms and introduce
\begin{equation}
\label{eq_mchannel}
p_{\M}(M) = \sum_{\substack{\mu, N:\\ R_{\partial N + \partial(\fix\mu)} N = M}} p_\R (\mu) p_\N (N).
\end{equation}
Since for any stabilizer syndrome $\sigma$ and Pauli $X$ operator $N$ we have
$\Pi^\mathcal{S}_\sigma = \sum_{\phi : \partial\phi = \sigma} \Pi^\mathcal{G}_\phi$
and $\Pi^\mathcal{S}_\sigma N \Pi^\mathcal{S}_0 = \delta_{\sigma,\partial N} N\Pi^\mathcal{S}_0$, where
$\delta_{\sigma,\partial N}$ is the Kronecker delta, we obtain
\begin{eqnarray}
\R \circ \N \circ \Pi^\mathcal{S}_0 
&=&  \{ \sqrt{p_\R(\mu) p_\N(N)}
R_{\partial(\phi+\fix\mu)} \Pi^\mathcal{G}_\phi N \Pi^\mathcal{S}_0\}_{\mu,\phi,N}\\
&=&  \{ \sqrt{p_\R(\mu) p_\N(N)}
R_{\sigma+\partial(\fix\mu)} \Pi^\mathcal{S}_\sigma N \Pi^\mathcal{S}_0\}_{\mu,\sigma,N}
= \M \circ \Pi^\mathcal{S}_0.
\end{eqnarray}
Thus, proving the theorem is equivalent to showing that $\M \in \mathbb{N}_{\tau',\epsilon'}$.

Let us consider the $\R_0$-dependent channel for $\M$, i.e.,
\begin{equation}
\overline\M = \{ \sqrt{p_\M(M) }R_{\partial M} \}_{M} =
 \left\{ \sqrt{p_{\overline\M}(M') } M' \right\}_{M'},
\end{equation}
where we group the same Pauli terms and introduce
\begin{equation}
\label{eq_mbarchannel}
p_{\overline\M}(M') = \sum_{M: R_{\partial M} = M'} p_\M(M) 
= \sum_{\substack{\mu, N:\\ R_{\partial(\fix\mu)} = M'}} p_\R (\mu) p_\N (N)
= \sum_{\mu: R_{\partial(\fix\mu)} = M'} p_\R (\mu).
\end{equation}
In the second equality above we use Eq.~\eqref{eq_mchannel} and the fact that for $M = R_{\partial N + \partial(\fix\mu)} N$ we have $\partial M = \partial(\fix\mu)$.
By definition, $\overline\M$ is $\R_0$-equivalent to $\M$.
We now show that $\overline\M$ is $\tau'$-bounded by finding for any Pauli $X$ operator $M$ an upper bound on
\begin{equation}
\sum_{M'\supseteq M} p_{\overline\M}(M')
= \sum_{\mu: R_{\partial(\fix\mu)} \supseteq M} p_{\R}(\mu).
\end{equation}

We proceed by first proving for any Pauli $X$ operator $M$ the following inequality
\begin{eqnarray}
\label{eq_ineq2}
\sum_{\mu: R_{\partial(\fix\mu)} \supseteq M} p_{\R}(\mu)
\leq \sum_{s\geq t |M|} \sum_{W\in \cl (s,M)} \sideset{}{'}\sum_{\mu'\subseteq W: |\mu'| \geq rs} \sum_{\mu\supseteq \mu'} p_{\R}(\mu),
\end{eqnarray}
where $\cl(s,M)$ is the collection of connected covers of $M$ within the hypergraph $G_\mathcal{L}$ and $\sideset{}{'}{\textstyle\sum}$ denotes that we only sum over the measurement errors.
Let $\mu$ be a measurement error, for which $p_\R(\mu)$ appears on the left-hand side of the inequality.
We now show that the variables $s$, $W$ and $\mu'$ in Eq.~\eqref{eq_ineq2} can admit the following values
\begin{equation}
\label{eq_values}
W =\left((\fix\mu)\cup R_{\partial(\fix\mu)}\right)^{\cap M},\quad s = |W|,\quad \mu' = \mu \cap W.
\end{equation}
This, in turn, implies that the term $p_\R (\mu)$ also appears on the right-hand side and, subsequently, establishes the inequality, as all the terms in the inequality are non-negative.

Let $\hat\mu' = \hat\mu\cap W$ and $R' = R_{\partial(\fix\mu)} \cap W$.
We then have $R'\supseteq M$, $W = (\fix{\mu'})\cup R'$ and $|\fix{\mu'}| = |\mu'| + |\hat\mu'|$.
One can also verify that $\fix{\mu'}$ is a flux, as it satisfies the Gauss law, i.e., $\delta_R (\fix{\mu'}) = 0$, and that $\fix{\mu'}$ and $R'$ have the same stabilizer syndromes, i.e., $\partial(\fix{\mu'}) = \partial R'$.
Since $\hat\mu$ is the minimum-weight estimate of the measurement error $\mu$, we have $|\hat\mu'| \leq |\mu'|$; otherwise, $\mu'+\hat\mu'+\hat\mu$ would be an estimate of the measurement error $\mu$ of weight smaller than $\hat\mu$.
The assumption about confinement of the flux guarantees that for the flux $\fix{\mu'}$ there exists a Pauli $X$ operator $N$, such that $\partial N = \partial(\fix{\mu'})$ and $|N| \leq c_\text{flu}| \mu'+\hat\mu'|$.
Since $R_{\partial(\fix\mu)}$ is the minimum-weight recovery operator with the stabilizer syndrome $\partial(\fix\mu)$, we also have $|R'| \leq |N|$;
otherwise, $NR'R_{\partial(\fix\mu)}$ would be a recovery operator with the stabilizer syndrome $\partial(\fix\mu)$ of weight smaller than $R_{\partial(\fix\mu)}$.
By lower bounding the cardinality of $W$ as follows 
\begin{equation}
|W| = |\fix{\mu'}| + |R'| \geq c_\text{flu}^{-1}|N| + |R'| \geq (1+c_\text{flu}^{-1})|R'| \geq (1+c_\text{flu}^{-1})|M|
\end{equation}
we obtain that $s \geq t |M|$.
Thus, $s$ specified in Eq.~\eqref{eq_values} appears in~Eq.~\eqref{eq_ineq2}. 
Since $R_{\partial(\fix\mu)} \supseteq M$, we conclude that $W$ specified in Eq.~\eqref{eq_values} is a connected cover of $M$ within $G_\mathcal{L}$, i.e., $W\in \cl(s,M)$, and appears in~Eq.~\eqref{eq_ineq2}.
Also, by upper bounding the cardinality of $W$ as follows
\begin{equation}
|W| = |\mu'| + |\hat\mu'| + |R'| \leq 2|\mu'| + |N| \leq 2|\mu'| + c_\text{flu}|\fix{\mu'}| \leq 2(1+c_\text{flu})|\mu'|\\
\end{equation}
we obtain that $|\mu'| \geq rs$.
Thus, $\mu'$ specified in Eq.~\eqref{eq_values} appears in~Eq.~\eqref{eq_ineq2}.

Using the fact that $p_\R$ is $\eta$-bounded we obtain
\begin{eqnarray}
\sum_{s\geq t |M|} \sum_{W\in \cl (s,M)} \sideset{}{'}\sum_{\mu'\subseteq W: |\mu'| \geq rs} \sum_{\mu\supseteq \mu'} p_{\R}(\mu)
&\leq& \sum_{s\geq t |M|} \sum_{W\in \cl (s,M)} \sideset{}{'}\sum_{\mu'\subseteq W: |\mu'| \geq rs} \eta^{|\mu'|}\\
&\leq& \sum_{s\geq t|M|} \sum_{W\in \cl (s,M)} 2^{|W|} \eta^{rs}
\leq \sum_{s\geq t|M|} (2\eta^r)^s |\cl (s,M)|.
\end{eqnarray}
We can then use (connected covers)~Lemma~\ref{lemma_connected_cover} to bound the size of $\cl(s,M)$ and obtain
\begin{eqnarray}
\sum_{s\geq t|M|} (2\eta^r)^s |\cl (s,M)|
&\leq& \sum_{s\geq t|M|} (2\eta^r)^s \frac{(\e z)^s}{\e z^{|M|}} 
= \frac{((2\e\eta^{r})^t z^{t-1})^{|M|}}{\e(1-2\e z\eta^{r})}\\
&\leq& \frac{(2^{2t-1}(\e\eta^{r})^t (\Delta_\mathcal{L} - 1)^{t-1})^{|M|}}{\e-4\e^2 (\Delta_\mathcal{L}-1)\eta^{r}}
\leq (2^{2t-1}(\e\eta^{r})^t (\Delta_\mathcal{L} - 1)^{t-1})^{|M|},
\end{eqnarray}
where we use the upper bound on $\eta$ and $z = \max_{e\in E_{\mathcal{L}}} \sum_{v\in e}(\deg v - 1) \leq 2(\Delta_\mathcal{L} - 1)$, as any hyperedge of the hypergraph $G_\mathcal{L}$ contains at most two vertices.
Thus, $\overline\M$ is $\tau'$-bounded and, subsequently, $\M\in\mathbb{N}_{\tau',*}$.

To upper bound the failure probability for $\M$ we first observe that
\begin{eqnarray}
\fail(\M) &=& \sum_{M: R_{\partial M} M \not\in\mathcal G} p_\M (M) 
= \sum_{\substack{\mu, N:\\ R_{\partial(\fix\mu)} R_{\partial N+\partial(\fix\mu)} N \not\in\mathcal G}} p_\R (\mu) p_\N (N)\\
&=& \fail\left(\left\{\sqrt{p_\R (\mu)} R_{\partial(\fix\mu)} \right\}_\mu \circ \N\right)
= \fail(\overline\M \circ \N ).
\end{eqnarray}
Let $\N' = \{\sqrt{p_{\N'}(N')} N'\}_{N'}$ be a $\tau$-bounded Pauli $X$ channel, which is $\R_0$-equivalent to $\N$.
Then, (composition) Lemma~\ref{lemma_composition} implies that the composite channel $\overline\M \circ \N'$ is $(\tau+\tau')$-bounded.
Using the inequalities from Lemma~\ref{lemma_failure} and applying (decoding failure) Lemma~\ref{lemma_threshold} to $\N'$ and $\overline\M \circ \N'$ we obtain
\begin{eqnarray}
\fail\left(\overline\M \circ \N\right) 
&\leq& \fail(\overline\M) + \fail(\N) + \fail\left(\overline{\overline\M} \circ \overline\N\right)
= \fail(\N) + \fail\left(\overline{\overline\M} \circ \overline{\N'}\right)\\
&\leq& \epsilon + \fail(\overline\M) + \fail(\N') + \fail\left({\overline\M} \circ \N'\right)
\leq \epsilon + f(\tau) + f(\tau+\tau').
\end{eqnarray}
Thus, $\fail(\M) \leq \epsilon + f(\tau) + f(\tau+\tau')$ and subsequently $\M\in\mathbb{N}_{*,\epsilon'}$.
\end{proof}

Using (single-shot QEC) Theorem~\ref{thm_main} and (composition) Lemma~\ref{lemma_composition} we also conclude that after $n$ repeated rounds of Pauli $X$ noise and single-shot MWPM decoding channel we have
\begin{equation}
(\mathbb{R}_\eta \circ \mathbb{N}_{\tau,\epsilon})^n \circ \Pi_0^{\mathcal S} 
\subseteq \mathbb{N}_{\tau',\epsilon(n)} \circ \Pi_0^{\mathcal S},
\end{equation}
where $\epsilon(n) = n\epsilon+nf(\tau)+(n-1)f(\tau')+(2n-1)f(\tau+\tau')+(n-1)f(\tau+2\tau')$, provided that
$\tau+2\tau' < \left(2(\Delta_\text{qub}-1)\right)^{-2}$.
Thus, the residual noise in the 3D~STC does not accumulate in an uncontrollable way after we perform multiple rounds of single-shot QEC, and the logical information encoded in the 3D~STC will be protected for a long time.

\section{Discussion}

In our work we:
(i) introduced a new topological quantum code, the 3D~STC, which is a subsystem version of the toric code,
(ii) developed a single-shot decoding algorithm for the 3D~STC and numerically estimated its performance, 
and (iii) proved that single-shot QEC is possible with the 3D~STC.
We believe that the 3D~STC provides the canonical example of a topological quantum code demonstrating single-shot QEC.
Although we focused our presentation on the 3D~STC, we can define a subsystem version of the toric code in any dimension $d\geq 3$; see Appendix~\ref{app_ddim} for details.
Such higher-dimensional constructions are particularly appealing from the perspective of realizing self-correction and fault-tolerant non-Clifford logical gates~\cite{Kubica2015,Jochym-OConnor2021,Vasmer2021}.

The 3D~STC can be used to implement a fault-tolerant universal gate set $\{H, T, \mathrm{CNOT}\}$ without state distillation.
Namely, we can consider a version of the 3D~STC supported within a tetrahedral region, which has different transversal logical operations.
Since the 3D~STC is a CSS code, it has the transversal CNOT gate.
Moreover, depending on the state of the gauge qubits, it has either the transversal Hadamard gate $H$ or the transversal $T=\mathrm{diag}(1,e^{i\pi/4})$ gate.
We provide the details in the accompanying paper~\cite{Iverson2021}.

Lastly, we remark that the concepts of self-correction and single-shot QEC are closely connected.
Typically, the former implies the latter, as exemplified by the 4D toric code, however the converse is not immediate.
Thus, one fundamental problem worth exploring is whether the 3D~STC can give rise to a self-correcting topological phase.
We expect that by taking some local gauge operators of the 3D~STC we can construct Hamiltonians, which possess symmetry-protected topological order and thermal stability in the presence of $1$-form symmetries, analogously to the Hamiltonians arising from the 3D gauge color code~\cite{Roberts2017,Kubica2018,Roberts2020}.

\acknowledgements{
A.K.\ thanks H\'ector Bomb\'in, Sergey Bravyi, Daniel Gottesman, Alexei Kitaev and John Preskill for helpful discussions.
M.V.\ thanks Dan Browne and Jacob Bridgeman for valuable discussions. 
A.K.\ acknowledges funding provided by the Simons Foundation through the ``It from Qubit'' Collaboration. Research at Perimeter Institute is supported in part by the Government of Canada through the Department of Innovation, Science and Economic Development Canada and by the Province of Ontario through the Ministry of Colleges and Universities.
This work was completed prior to A.K. joining AWS Center for Quantum Computing.}

\appendix
\section{STC in higher dimensions}
\label{app_ddim}

In this appendix, we present a generalization of the STC to $d \geq 3$ dimensions.
We start by returning to and formalizing the picture of the 3D~STC that we used in Sec.~\ref{sec_glance}.
Next, we discuss the 4D~STC and show that it is related to the 4D stabilizer toric codes with a transversal four-qubit control-$Z$ gate described in Ref.~\cite{Jochym-OConnor2021}.
Finally, we sketch a proof that the STC defined on a $d$-dimensional hypercubic lattice with periodic boundary conditions has zero logical qubits.

\subsection{Recasting the STC}
\label{app_recast}

Let $\mathcal L$ denote a $d$-dimensional hypercubic lattice with periodic boundary conditions and linear length $L$, where $L$ is even. 
As in Sec.~\ref{sec_octahedral} we use $\mathcal L_i$ to denote the $i$-hypercubes of $\mathcal L$.
We can color the $d$-hypercubes of $\mathcal L$ in red and blue such that no two $d$-hypercubes sharing a $(d-1)$-hypercube have the same color.
We denote the corresponding sets of $d$-hypercubes as $\mathcal L_d^R$ and $\mathcal L_d^B$, respectively.
We place qubits on the $(d-2)$-hypercubes of $\mathcal L$.
Anticipating the definition of the STC, for any $i$-hypercube $\mu\in\mathcal L_i$, we define $\mathcal Q (\mu)$ to be the set of qubits on all the $(d-2)$-hypercubes that either contain $\mu$ if $i<d-2$ or are contained in $\mu$ if $i\geq d-2$.
Namely,
\begin{equation}
\mathcal Q (\mu) = 
\begin{cases}
\{ \omega \in \mathcal L_{d-2} | \omega \supset \mu \} & \text{if $i < d-2$,}\\
\{ \omega \in \mathcal L_{d-2} | \omega \subseteq \mu \} & \text{otherwise}. 
\end{cases}
\end{equation}
As in Sec.~\ref{sec_dual_lattice}, when we say that an operator is associated with $\delta$ we mean that it is supported on the set of qubits $\mathcal Q(\delta)$ and write $X (\delta) = \prod_{\omega \in \mathcal Q(\delta)} X_\omega$, where $X_\omega$ denotes a Pauli $X$ operator acting on the qubit on the $(d-2)$-hypercube $\omega$.
In the next section, we abuse the notation somewhat and talk about operators being supported on $i$-cells of various lattices, but it should be clear from the context what we mean.

The gauge group of the $d$-dimensional STC is generated by operators associated with pairs comprising $d$-hypercubes and $(d-3)$-hypercubes. 
Namely, for each $d$-hypercube $\mu \in \mathcal L_d^R$ and $(d-3)$-hypercube $\lambda \subset \mu$, we have the gauge operator 
\begin{equation}
\label{eq_gauge_ddim_X}
X (\mu, \lambda) = \prod_{\omega \in \mathcal Q (\mu) \cap \mathcal Q (\lambda)} X_\omega.
\end{equation}
Likewise for each $d$-hypercube $\nu \in \mathcal L_d^B$ and $(d-3)$-hypercube $\kappa \subset \nu$, we have the gauge operator 
\begin{equation}
\label{eq_gauge_ddim_Z}
Z (\nu, \kappa) = \prod_{\omega \in \mathcal Q (\nu) \cap \mathcal Q (\kappa)} Z_\omega. 
\end{equation}
Thus, the gauge group of the $d$-dimensional STC is
\begin{equation}
\mathcal G = \langle X (\mu, \lambda), Z (\nu, \kappa) \mathrel{|}
\mathcal L^R_{d}\ni \mu \supset \lambda \in \mathcal L_{d-3}, \mathcal L^B_{d}\ni \nu \supset \kappa \in \mathcal L_{d-3}\rangle. 
\end{equation}
Since the number of $i$-hypercubes contained in a $d$-hypercube is $2^{d-i}{d \choose d-i}$, there are $2^3 {d \choose 3} |\mathcal L_d|$ gauge generators of $\mathcal G$.
Note that not all of them are independent.
Also, one can check that the above definitions are equivalent to the pictorial representation shown in Eqs.~\eqref{eq_3DSTC_gauge}~and~\eqref{eq_3DSTC_stab} for $d=3$.
We illustrate the construction in 4D in Fig.~\ref{fig_4d_example}.

The stabilizer group of the $d$-dimensional STC is generated by operators associated with $d$-hypercubes and $(d-1)$-dimensional hyperplanes, namely
\begin{equation}
\label{eq_stabgroup_ddim}
\mathcal S = \langle X (\mu), Z (\nu), X (\pi_{i}), Z (\pi_{i}) \mathrel{|} \mu \in \mathcal L_d^R, \nu \in \mathcal L_d^B, i \in \{1,\ldots,d\} \rangle,
\end{equation}
where $X (\pi_{i})$ denotes a product of Pauli $X$ operators acting on all qubits contained in the $(d-1)$-hyperplane $\pi_{i}$ perpendicular to the Cartesian axis $\hat x_i$.
Using an inductive argument over the dimension $d$ one can show that the stabilizer generators specified in Eq.~\eqref{eq_stabgroup_ddim} can be constructed from the gauge generators specified in Eqs.~\eqref{eq_gauge_ddim_X}~and~\eqref{eq_gauge_ddim_Z}.

\begin{figure}
(a)\hspace{0.02\textwidth}\includegraphics[width=0.25\textwidth]{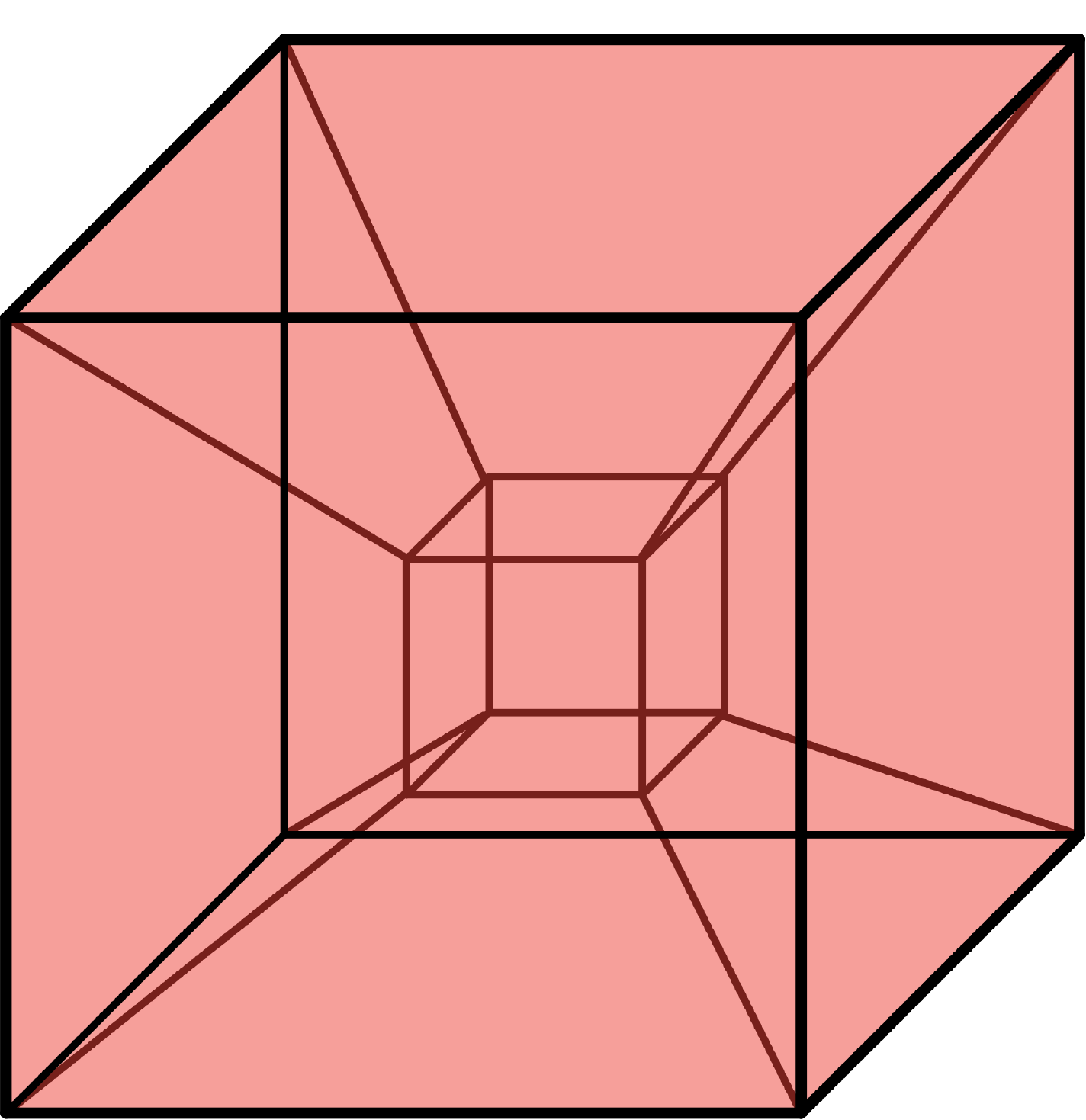}
\quad\quad\quad
(b)\hspace{0.02\textwidth}\includegraphics[width=0.25\textwidth]{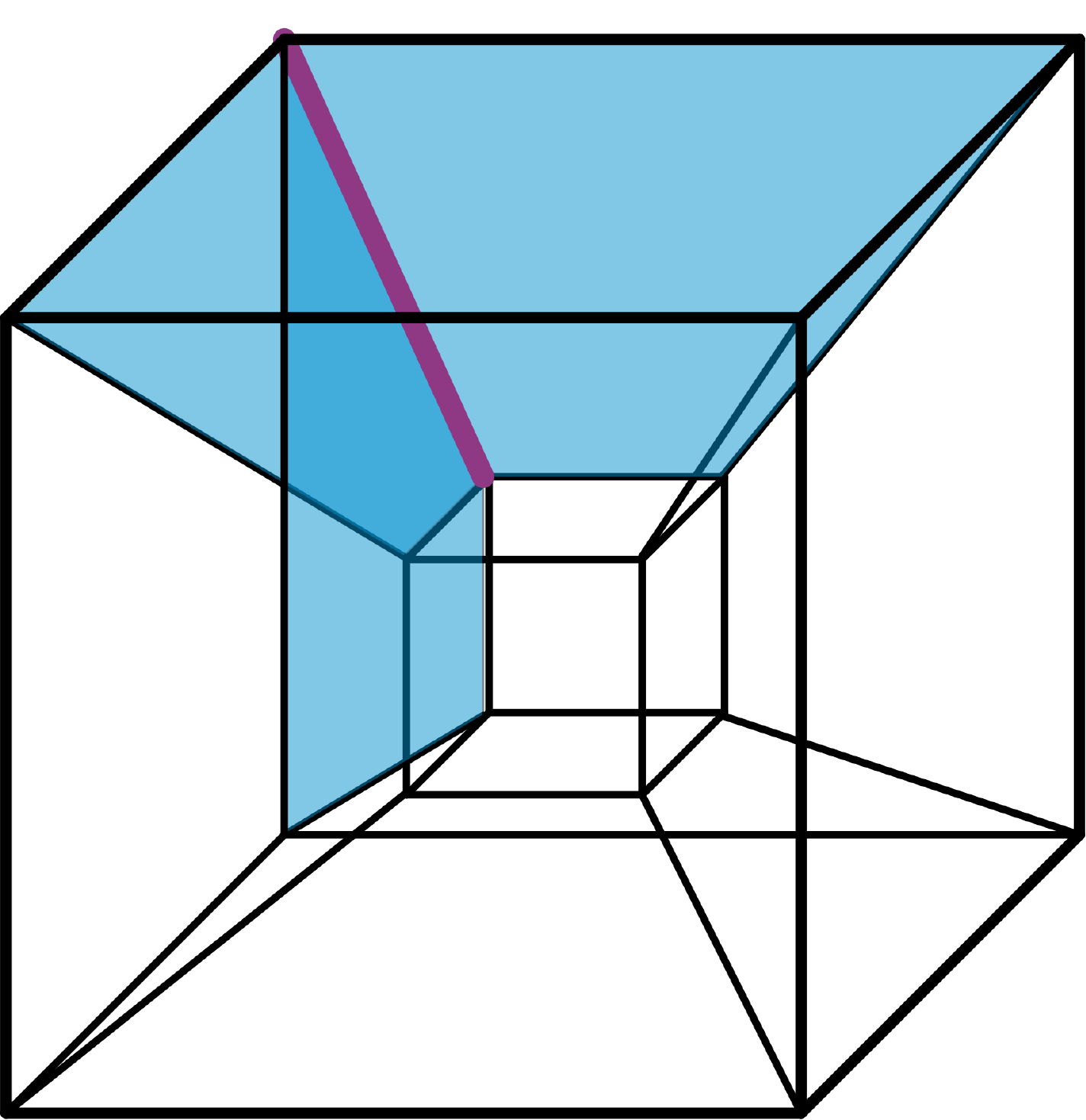}
\caption{
The support of (a) an $X$-type stabilizer and (b) $Z$-type gauge operator in the 4D~STC with qubits placed on faces.
The $X$-type stabilizer acts on the 24 qubits  contained in a red $4$-hypercube.
The $Z$-type gauge operator associated with a blue $4$-hypercube and an edge contained in this hypercube (purple) acts on the three qubits (shaded in blue) that are contained in the hypercube and also contain the edge.
}
\label{fig_4d_example}
\end{figure}

We now show that the gauge and stabilizer operators defined above commute.
Suppose that the edges of $\mathcal L$ have unit length and that one of the vertices of $\mathcal L$ is at the origin\footnote{We only need to consider the neighborhood around the origin as $\mathcal L$ is translationally invariant.}.
Every hypercube containing the origin as one of its vertices can be uniquely specified by a string ${\bf c} \in \{-1,0,1\}^{\times d}$; the dimensionality of this hypercube is $|{\bf c}| = \sum_{i}|c_i|$.
We define the support of a hypercube to be the positions of its non-zero coordinates, i.e., $\supp {\bf c} = \{ i | c_i \neq 0 \}$.
We write $\bf a \subseteq b$ whenever the non-zero entries of $\bf a$ match the non-zero entries of $\bf b$, i.e., $a_i = b_i$ for all $i\in \supp \bf a$.

Without loss of generality, consider the $X$-type gauge operator associated with the $d$-hypercube specified by ${\bf a} = (1,\ldots,1)$ and the $(d-3)$-hypercube ${\bf c} = ( 0,0,0,1,\ldots,1 )$. 
This operator has weight three, as it is supported on the following three qubits: $(1,0,0,1,\ldots,1)$, $(0,1,0,1,\ldots, 1)$ and $(0,0,1,\ldots,1)$.
The only stabilizer generators that have non-trivial overlap with $X ({\bf a}, {\bf c})$ are those associated with neighboring $d$-hypercubes and those associated with the hyperplanes $\pi_1$, $\pi_2$ and $\pi_3$ that contain the origin.
Without loss of generality, consider the $Z$-type stabilizer associated with $d$-hypercube ${\bf b} = ( -1, 1, \ldots, 1)$.
The operators $Z ({\bf b})$ and $X ({\bf a}, {\bf c})$ commute, as they overlap on the following two qubits: $(0,1,0,1,\ldots,1)$ and $(0,0,1,\ldots,1)$.
In addition, the support of $X ({\bf a}, {\bf c})$ contains exactly two $(d-2)$-hypercubes in each of the relevant hyperplanes, so the hyperplane stabilizers commute with $X ({\bf a}, {\bf c})$.

\subsection{Relation to the toric code in $d=4$ dimensions}

Let us now consider a special case of the STC on the four-dimensional hypercubic lattice $\mathcal L$.
There, the qubits are on faces, local $X$- and $Z$-type stabilizer generators are associated with red and blue 4-hypercubes, and gauge generators are associated with pairs comprising a 4-hypercube and an edge contained in the 4-hypercube.
The local stabilizer generators have weight 24 and the gauge generators have weight three.
We note that the vertices of the 4D hypercubic lattice $\mathcal L$ can be colored in green and yellow such that no two vertices sharing an edge have the same color.

We now explain how to obtain the lattice and codes considered in Ref.~\cite{Jochym-OConnor2021} from $\mathcal L$ (the four codes are the same so we need only show that one of them is a gauge-fixing of the 4D~STC).
The first step is to take the dual of $\mathcal L$.
This gives a lattice $\mathcal L^*$ formed from $\mathcal L$ by exchanging vertices, edges, faces, 3-cells and 4-cells respectively with 4-cells, 3-cells, faces, edges, and vertices.
It transpires that $\mathcal L^*$ is also the 4D hypercubic lattice. 
Next, we apply a further geometric transformation to $\mathcal L^*$ called alternation. 
We select the blue vertices of $\mathcal L^*$ and connect them with new edges if they are part of the same face in $\mathcal L^*$. 
This creates a new tessellation $\mathcal L^*_{\rm alt}$ whose vertices correspond to the blue vertices of $\mathcal L^*$. 
The red vertices of $\mathcal L^*$ are mapped to 4-cells in $\mathcal L^*_{\rm alt}$, the faces of $\mathcal L^*$ are mapped to edges, the cubic volumes of $\mathcal L^*$ are mapped to alternated cubes (tetrahedra) and the 4-hypercubes of $\mathcal L^*$ are mapped to 4-hyperoctahedra.
The lattice $\mathcal L^*_{\rm alt}$ is a regular tessellation of 4D Euclidean space by 4-hyperoctahedra\footnote{
This tessellation is often called the 16-cell honeycomb.}.
The final step is to take the dual of $\mathcal L^*_{\rm alt}$, which gives a tessellation of 4D Euclidean space by octaplexes\footnote{This tessellation is often called the 24-cell honeycomb.}.
We denote the dual of $\mathcal L^*_{\rm alt}$ by $\mathcal L_{\rm opx}$.
We illustrate the evolution of a gauge operator through the lattice transformations we have just described in Fig.~\ref{fig_4d_gauge_transform}.

\begin{figure}
(a)\hspace{0.01\textwidth}\includegraphics[width=0.19\textwidth]{figures/4d_gauge_primal.pdf}
\hspace{0.01\textwidth}
(b)\hspace{0.01\textwidth}\includegraphics[width=0.19\textwidth]{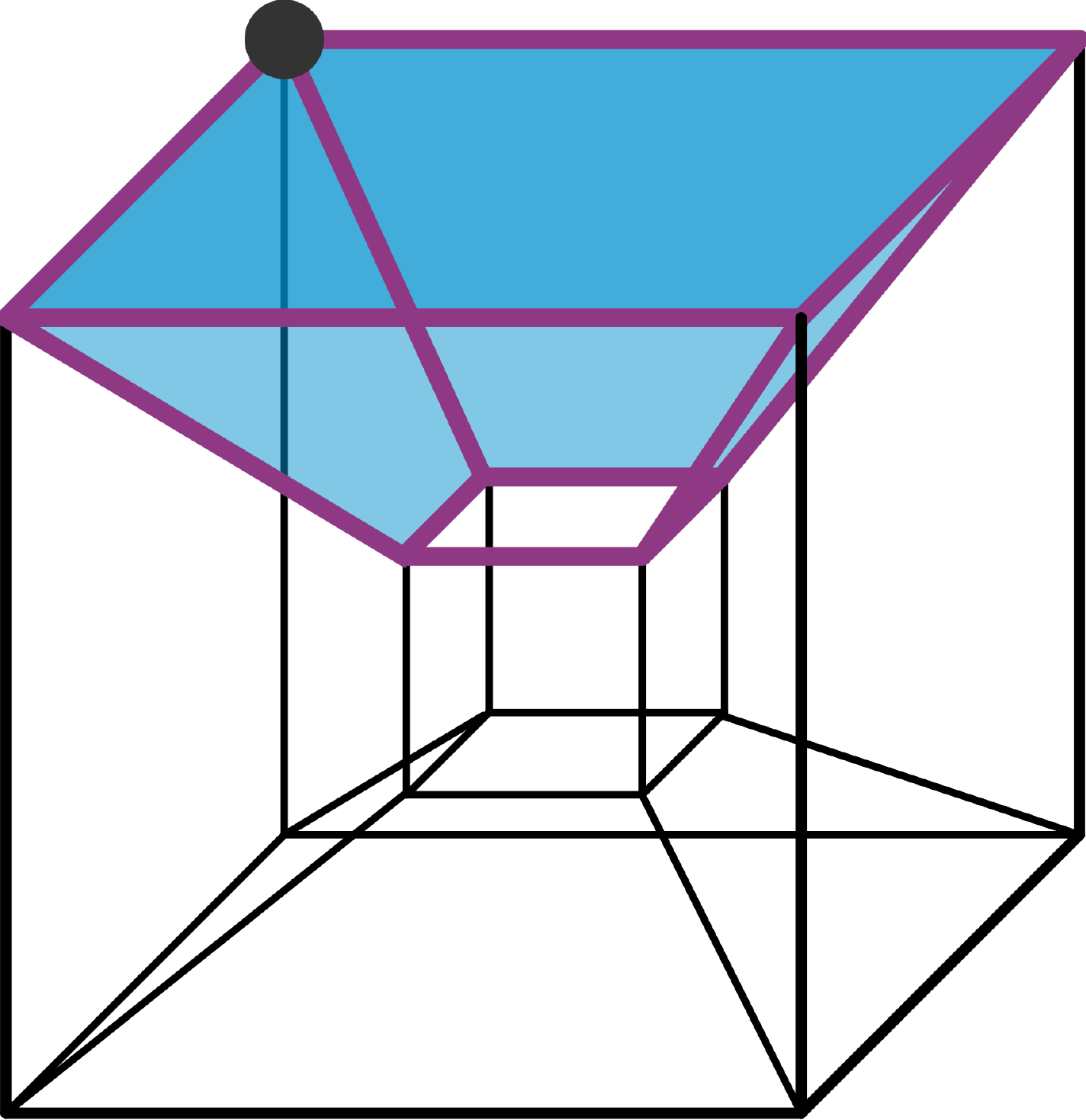}
\hspace{0.01\textwidth}
(c)\hspace{0.01\textwidth}\includegraphics[width=0.19\textwidth]{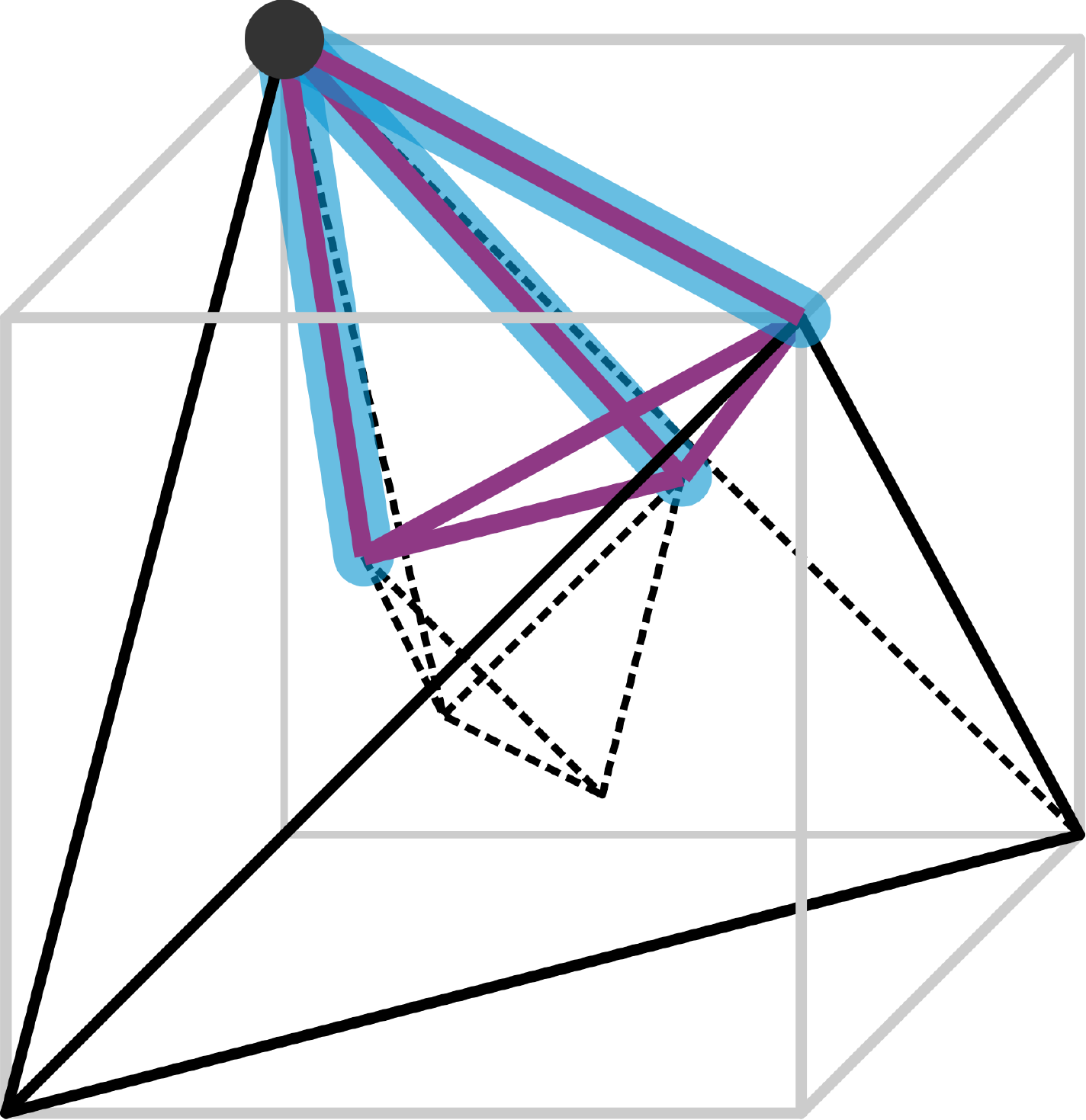}
\hspace{0.01\textwidth}
(d)\hspace{0.01\textwidth}\includegraphics[width=0.19\textwidth]{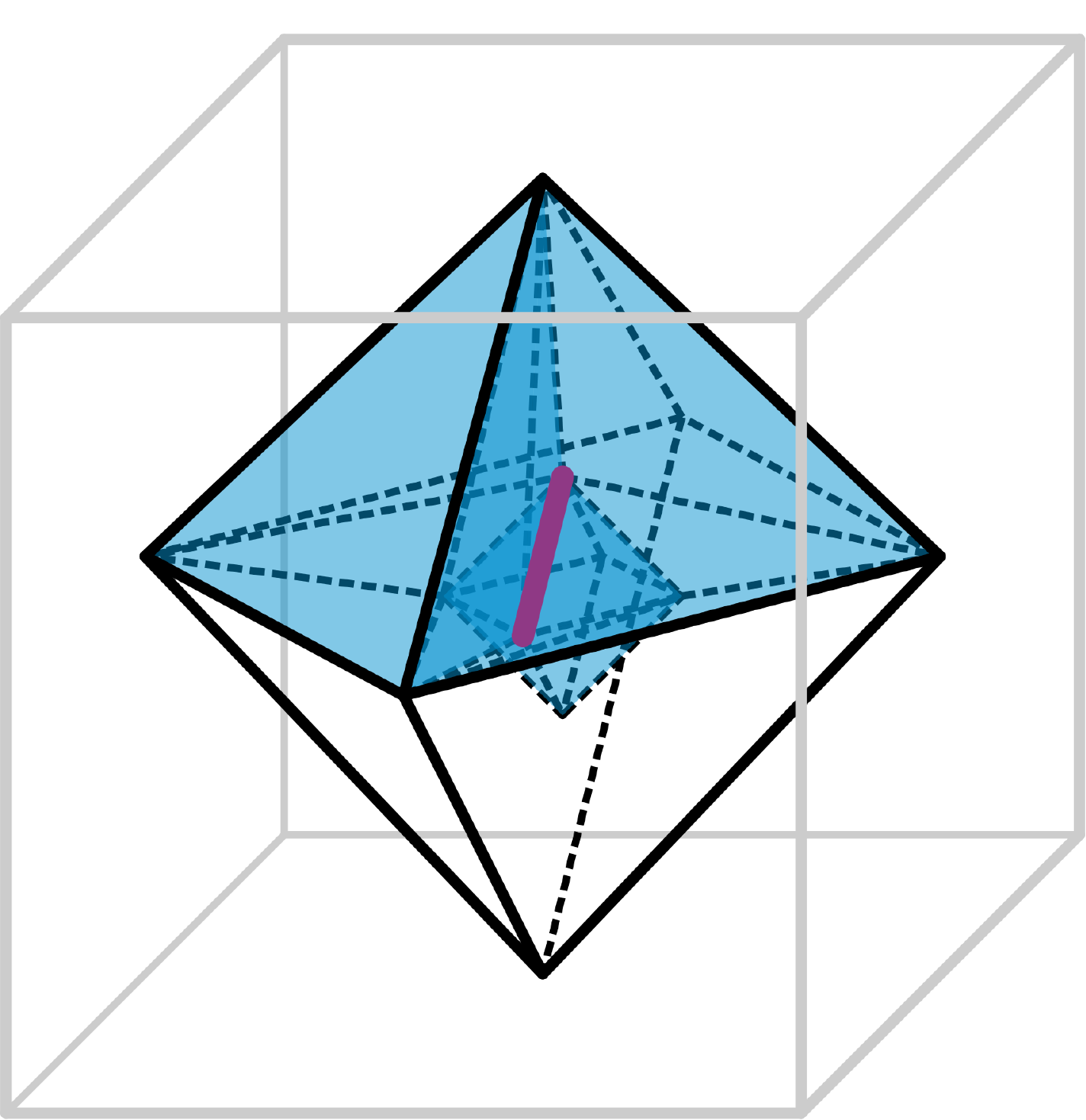}
\caption{
The evolution of $Z$-type gauge generator through three lattice transformations.
(a) A 4-hypercube of $\mathcal L$, the original 4D~STC lattice,
with qubits on faces. 
We illustrate the support (shaded blue) of the gauge operator associated with the 4-hypercube and an edge contained in this hypercube (purple).
(b) A 4-hypercube of $\mathcal L^*$, the dual of $\mathcal L$ (qubits are still on faces).
The gauge operator is now associated with a vertex (dark gray circle)
and a 3-hypercube containing this vertex (outlined in purple).
The faces in its support are shaded in blue.
(c) A 4-cell of $\mathcal L^*_{\rm alt}$, the alternation of $\mathcal L^*$, with qubits on edges.
We show the outer 3-cell of (b) as a guide to the eye.
Here, the gauge operator is associated with a vertex (dark gray circle) and a 3-cell containing this vertex (outlined in purple).
The edges in its support are highlighted in blue.
(d) A 4-cell of $\mathcal L_{\rm opx}$, the dual of $\mathcal L^*_{\rm alt}$, with qubits on 3-cells. 
Here, the gauge operator is associated with the 4-cell and an edge contained in this 4-cell (purple).
The 3-cells in its support are shaded in blue.
To avoid clutter, in (c) and (d) we draw only a subset of the edges of the 4-cells.
}
\label{fig_4d_gauge_transform}
\end{figure}

Tracking the faces, 4-hypercubes and edges of $\mathcal L$ through the three transformation steps, we find that in $\mathcal L_{\rm opx}$, qubits are placed on 3-cells, $X$-type stabilizer generators are associated with red vertices, and $Z$-type gauge operators are associated with pairs comprising an 4-cell and an edge contained in the 4-cell.
We note that the vertices of $\mathcal L_{\rm opx}$ can be colored in red, green and yellow such that no two vertices sharing an edge have the same color.
In particular, these vertex colors are inherited from the red 4-hypercubes, green vertices and yellow vertices of $\mathcal L$.
Consequently every $Z$-type gauge operator in $\mathcal L_{\rm opx}$ is associated with a 4-cell and a GY-edge.
In Ref.~\cite{Jochym-OConnor2021} the authors consider an octaplex tessellation with qubits on 3-cells and vertices 3-colored in, e.g.\ red, green and yellow.
One of their 4D stabilizer toric codes has $X$-type stabilizer generators associated with red vertices, and $Z$-type stabilizer generators associated with pairs comprising a 4-cell and a $GY$ edge. 
This is exactly the construction we just obtained starting from $\mathcal L$.
Furthermore, we have the inclusions
\begin{equation}
\mathcal S \leq \mathcal S'_{\rm 4DST} \leq \mathcal G,
\end{equation}
where $\mathcal S$ and $\mathcal G$ are the stabilizer and gauge groups of the 4D~STC, and $\mathcal S'_{\rm 4DST}$ is a group generated by stabilizer operators and $X$-type logical operators of the 4D stabilizer toric code.
Therefore, the logical $\ket{\overline +}$ state of the 4D stabilizer toric code is also in the code space of the 4D~STC, with gauge qubits in a certain state.
Moreover, using the procedure of gauge fixing we can map a state in the code space of the 4D~STC to the logical $\ket{\overline +}$ state of the 4D stabilizer toric code.

We have shown that the 4D~STC can be gauge fixed to a code with a transversal 4-qubit control-$Z$ gate. 
We also expect that the 4D~STC will exhibit single-shot QEC and therefore may be an attractive candidate for realizing universal fault-tolerant quantum computation, especially in architectures where non-local connections of qubits are available.

\subsection{Calculating the number of logical qubits}

For simplicity, let us consider the STC on the $d$-dimensional hypercubic lattice $\mathcal L$ with periodic boundary conditions and the linear size $L$, where $L$ is even.
Then, we have
\begin{equation}
\label{eq_hypercubes}
|\mathcal L_i | = L^d {d\choose i}.
\end{equation}
In particular, the number of physical qubits in the $d$-dimensional STC is
\begin{equation}
\label{eq_qubitsddim}
N=|\mathcal L_{d-2}| = L^d {d\choose 2}.
\end{equation}
Note that stabilizer generators can be associated with the $d$-hypercubes and non-contractible $(d-1)$-dimensional hyperplanes within $\mathcal L$.
Since there are two relations between them, i.e.,
\begin{equation}
\prod_{\delta \in \mathcal{L}_d^R} X(\delta) = \prod_{\delta \in \mathcal{L}_d^B} Z(\delta) = I,
\end{equation}
we thus obtain that the number of independent generators of the stabilizer group is
\begin{equation}
\label{eq_stabddim}
\log_2 |\mathcal S| = |\mathcal L_d| + d - 2 = L^d + d - 2.
\end{equation}
Note that gauge generators, which are associated with $d$-hypercubes and $(d-3)$-hypercubes, are not independent.
Rather, they have to satisfy three types of relations.
Relations of the first type, which we call $\mathcal R_3$, arise for every $(d-3)$-hypercube $\mu\in\mathcal L_{d-3}$ from the following identities
\begin{equation}
\forall \mu\in\mathcal L_{d-3}: \prod_{\delta\in\mathcal L_d^R: \delta \supset \mu} X(\delta, \mu)
= \prod_{\delta\in\mathcal L_d^B: \delta \supset \mu} Z(\delta, \mu) = I.
\end{equation}
Note that not all relations in $\mathcal R_3$ are independent.
They, as well, have to satisfy certain relations, which we call $\mathcal R_4$.
They arise for every $(d-4)$-hypercube $\nu\in\mathcal L_{d-4}$ from the following identities
\begin{equation}
\forall \nu\in\mathcal L_{d-4}: 
\prod_{\mu\in\mathcal L_{d-3}: \mu \supset\nu}\prod_{\delta\in\mathcal L_d^R: \delta \supset \mu} X(\delta, \mu)
= \prod_{\mu\in\mathcal L_{d-3}: \mu \supset\nu} \prod_{\delta\in\mathcal L_d^B: \delta \supset \mu} Z(\delta, \mu) = I.
\end{equation}
But relations $\mathcal R_4$ are not independent, and so on.
In general, we have
\begin{equation}
|\mathcal R_i| = 2 |\mathcal L_{d-i}| = 2L^d {d\choose i}.
\end{equation}
Proper counting of independent relations of the first type gives the following alternating sum
\begin{eqnarray}
|\mathcal R_3| - |\mathcal R_4| +\ldots + (-1)^{d+1} |\mathcal R_d|
&=& 2L^d \sum_{i=3}^d (-1)^{i+1}{d\choose d-i}\\
&=& 2L^d \left( {d\choose d} - {d\choose d-1} +{d\choose d-2} \right),
\end{eqnarray}
where we use the identity $\sum_{i=0}^d(-1)^i{d\choose d-i} = 0$.
Relations of the second type, which we call $\mathcal R'_4$, arise for every $d$-hypercube $\delta'\in\mathcal L_d$ and every $(d-4)$-hypercube $\mu'\in\mathcal L_{d-4}$ contained in $\delta'$ from the following identities
\begin{eqnarray}
\forall \mathcal L_{d-4}\ni\mu'\subset \delta' \in \mathcal L^R_d &:& \prod_{\mu\in\mathcal L_{d-3}: \mu'\subset\mu\subset\delta' } X(\delta', \mu) = I,\\
\forall \mathcal L_{d-4}\ni\mu'\subset \delta' \in \mathcal L^B_d &:& \prod_{\mu\in\mathcal L_{d-3}: \mu'\subset\mu\subset\delta' } Z(\delta', \mu) = I.
\end{eqnarray}
Note that not all relations in $\mathcal R'_4$ are independent.
They, as well, have to satisfy certain relations, which we call $\mathcal R'_5$.
They arise for every $d$-hypercube $\delta'\in\mathcal L_d$ and every $(d-5)$-hypercube $\nu'\in\mathcal L_{d-5}$ contained in $\delta'$ from the following identities
\begin{eqnarray}
\forall \mathcal L_{d-5}\ni\nu'\subset \delta' \in \mathcal L^R_d &:& \prod_{\mu'\in\mathcal{L}_{d-4}: \mu'\supset \nu'}
\prod_{\mu\in\mathcal L_{d-3}: \mu'\subset\mu\subset\delta' } X(\delta', \mu) = I,\\
\forall \mathcal L_{d-5}\ni\nu'\subset \delta' \in \mathcal L^B_d &:& \prod_{\mu'\in\mathcal{L}_{d-4}: \mu'\supset \nu'}
\prod_{\mu\in\mathcal L_{d-3}: \mu'\subset\mu\subset\delta' } Z(\delta', \mu) = I.
\end{eqnarray}
But relations $\mathcal R'_5$ are not independent, and so on.
In general, we have
\begin{equation}
|\mathcal R'_i| = 2^i |\mathcal L_{d}| {d\choose d-i} = 2^iL^d {d\choose d-i}.
\end{equation}
Finally, we arrive at the fact that not all relations $\mathcal R'_d$ are independent. In fact, they satisfy one relation for every $d$-hypercube.
Proper counting of independent relations of the second type gives the following alternating sum
\begin{eqnarray}
|\mathcal R'_4| - |\mathcal R'_5| +\ldots + (-1)^{d} |\mathcal R'_d| + (-1)^{d+1} |\mathcal L_d|
&=& L_{d} \left(\sum_{i=4}^d (-2)^i{d \choose d-i} + (-1)^{d+1}\right)\\
&=& L^{d} \left( - 2^0 {d \choose 0} + 2^1 {d \choose 1} - 2^2 {d \choose 2} + 2^3 {d \choose 3} \right),\quad\quad\quad
\end{eqnarray}
where we use the identity $\sum_{i=0}^d(-2)^i{d\choose d-i} = (-1)^d$.
Relations of the third type arise for non-contractible $(d-1)$-hyperplanes within $\mathcal L$, and thus there are $d$ independent ones.
Lastly, relations of the first and second type are not independent, as there are two relations between them---one for all $X$-type relations and one for all $Z$-type relations.
Once all the independent relations have been properly accounted for, the number of independent generators of the gauge group is
\begin{eqnarray}
\label{eq_gaugeddim1}
\log_2 |\mathcal G| &=& 2^3 |\mathcal L_{d}| {d\choose d-3} - 2L^d \left( {d\choose d} - {d\choose d-1} +{d\choose d-2} \right)\\
&& -L^{d} \left(- 2^0 {d \choose 0} + 2^1 {d \choose 1} - 2^2 {d \choose 2} + 2^3 {d \choose 3} \right) - d + 2\\
&=& L^d \left(2{d\choose 2} -1 \right) - d + 2
\label{eq_gaugeddim2}
\end{eqnarray}
By combining Eqs.~\eqref{eq_qubitsddim},~\eqref{eq_stabddim}~and~\eqref{eq_gaugeddim1}-\eqref{eq_gaugeddim2} we obtain that there are no logical qubits encoded into the STC on the $d$-dimensional hypercubic lattice with periodic boundary conditions, i.e.,
\begin{equation}
K = N -\frac{1}{2}\left( \log_2 |\mathcal G| + \log_2 |\mathcal S| \right) = 0.
\end{equation}

\bibliography{biblio_STC,biblio_misc}

\end{document}